\providecommand{\tabularnewline}{\\}
\providecommand{\algorithmname}{Algorithm}
\theoremstyle{plain}
\newtheorem{prop}{\protect\propositionname}
\theoremstyle{plain}
\newtheorem{lem}{\protect\lemmaname}
  \newenvironment{proof}[1][\proofname]{\par
    \normalfont\topsep6\p@\@plus6\p@\relax
    \trivlist
    \itemindent\parindent
    \item[\hskip\labelsep
          \scshape
      #1]\ignorespaces
  }{%
    \endtrivlist\@endpefalse
  }
  \providecommand{\proofname}{Proof}
\DeclareSymbolFont{extrasymbols}{OMS}{cmsy}{m}{n}
\DeclareMathDelimiter{\lVert}
  {\mathopen}{extrasymbols}{"6B}{largesymbols}{"0D}
\DeclareMathDelimiter{\rVert}
  {\mathclose}{extrasymbols}{"6B}{largesymbols}{"0D}
\providecommand{\lemmaname}{Lemma}
\providecommand{\propositionname}{Proposition}
\begin{document}
\title{{\Large Estimating Parameters of Structural Models Using Neural Networks\vspace{1.5em}}}
\author{Yanhao 'Max' Wei and Zhenling Jiang\thanks{Marshall School of Business, University of Southern California and
Wharton School, University of Pennsylvania. Email: yanhaowe@usc.edu;
zhenling@wharton.upenn.edu. We thank the constructive comments from
Eric Bradlow, Tat Chan, Vineet Kumar, Greg Lewis, Laura Liu, Nitin
Mehta, Aviv Nevo, Holger Sieg, Andrey Simonov, K. Sudhir, Raphael
Thomadsen, Kosuke Uetake, Shunyuan Zhang, and the participants at
the Northwestern marketing seminar, UCLA marketing seminar, Columbia
marketing camp, Junior faculty forum at WashU, Yale marketing seminar,
Rochester marketing seminar, Tech Adoption and Human-Algorithm Interaction
Workshop at Harvard, 42nd Marketing Science Conference, QME 2020 Conference,
and SICS 2021. Max thanks Sam Boysel for research assistance.}}

\date{2024-08-09}
\maketitle

\begin{abstract}
\begin{singlespace}We study an alternative use of machine learning.
We train neural nets to provide the parameter estimate of a given
(structural) econometric model, e.g., discrete choice or consumer
search. Training examples consist of datasets generated by the econometric
model under a range of parameter values. The neural net takes the
moments of a dataset as input and tries to recognize the parameter
value underlying that dataset. Besides the point estimate, the neural
net can also output statistical accuracy. This neural net estimator
(NNE) tends to limited-information Bayesian posterior as the number
of training datasets increases. We apply NNE to a consumer search
model. It gives more accurate estimates at lighter computational costs
than the prevailing approach. NNE is also robust to redundant moment
inputs. In general, NNE offers the most benefits in applications where
other estimation approaches  require very heavy simulation costs.
We provide code at: \texttt{https://nnehome.github.io}. 

\vspace{1em}

\textbf{Keywords}: neural networks, machine learning, structural estimation,
simulation costs, redundant moments, sequential search. 

\end{singlespace}
\end{abstract}
\pagebreak

\section{Introduction}

Machine learning algorithms are increasingly applied in empirical
research of marketing and economics. There are two streams of applications:
(i) processing traditionally intractable data, such as texts and images,
and (ii) building more flexible economic models, e.g., predicting
user choices based on user attributes and histories.\footnote{See, e.g., \citet{timoshenko2019identifying}, \citet{liu2019large},
\citet{chiong2019random}, \citet{zhu2020dynamic}, \citet{hema2020personalization},
\citet{zhang2022can}, \citet{yoganarasimhan2022design}. Also see
\citet{athey2018impact} for a review.} At the core of these applications is the exceptional capability of
machine learning algorithms to learn functions. An example is object
recognition from images, which uses neural nets to learn a mapping
from image pixels to the object in the image (e.g., aircraft, boat,
cat).

This paper studies an alternative use of machine learning algorithms
that capitalizes on their ability to learn functions. We train a machine
learning model to ``recognize'' the parameter value of a given econometric
model (e.g., discrete choice, consumer search, games). In this paper,
we focus on shallow neural nets as the machine learning model.\footnote{In principle, one may choose other machine learning models. We find
neural nets to perform well in applications.} We construct training examples by using the econometric model to
simulate datasets under a range of parameter values. The neural net
takes a dataset as input and tries to recognize the parameter value
underlying that dataset. We refer to this approach as the neural net
estimator (NNE).

As such, NNE is a tool that exploits existing machine learning techniques
to estimate current econometric models. It is particularly helpful
in structural estimation. Structural models are useful for empirical
research (e.g., enabling counterfactuals) but can be very difficult
to estimate. Particularly, as larger and richer data allow for more
complex structural models, many of them require increasingly heavy
simulations to evaluate the objective function for estimation. The
objective function can also be difficult to optimize (e.g., non-smooth)
or code. As we will show, in such problems NNE can provide large computational
savings and more accurate estimates, which makes it a useful addition
to the toolbox for empirical research.

Section \ref{sec:front} formulates NNE in detail. We start with a
fairly general setting of structural estimation. Let $i\in1,...,n$
index the individuals in data (which may or may not be i.i.d.) Let
$\boldsymbol{x}_{i}$ denote the observed attributes of $i$ and $\boldsymbol{y}_{i}$
denote the outcome of $i$. Let $\boldsymbol{x}\equiv\{\boldsymbol{x}_{i}\}_{i=1}^{n}$
and $\boldsymbol{y}\equiv\{\boldsymbol{y}_{i}\}_{i=1}^{n}$. So $\{\boldsymbol{y},\boldsymbol{x}\}$
denotes a dataset. An econometric model specifies $\boldsymbol{y}$
as a function of $\boldsymbol{x}$, a set of unobservables $\boldsymbol{\varepsilon}$,
and parameter $\boldsymbol{\theta}$. We write $\boldsymbol{y}=\boldsymbol{q}(\boldsymbol{x},\boldsymbol{\varepsilon};\boldsymbol{\theta})$.
Examples of $\boldsymbol{q}$ are discrete choices, games, diffusion
on networks, etc. To estimate $\boldsymbol{\theta}$, common approaches
include MLE and GMM. In many applications, integrals over the unobservable
$\boldsymbol{\varepsilon}$ are evaluated via simulations, which gives
rise to simulated maximum likelihood (SMLE) and simulated method of
moments (SMM).

NNE offers an alternative approach to estimate $\boldsymbol{\theta}$.
We train a neural net to recognize the value of $\boldsymbol{\theta}$
when it is given a dataset generated by the econometric model under
that value of $\boldsymbol{\theta}$. To construct training examples,
NNE only requires that it is possible to simulate the outcome $\boldsymbol{y}$
using the econometric model. Specifically, let $\ell=1,...,L$ index
training examples. For each $\ell$, we draw $\boldsymbol{\theta}^{(\ell)}$
from a parameter space. Then, we draw $\boldsymbol{\varepsilon}^{(\ell)}$
and compute $\boldsymbol{y}^{(\ell)}$ by the econometric model, i.e.,
$\boldsymbol{y}^{(\ell)}=\boldsymbol{q}(\boldsymbol{x},\boldsymbol{\varepsilon}^{(\ell)};\boldsymbol{\theta}^{(\ell)})$.
The collection $\{\boldsymbol{\theta}^{(\ell)},\{\boldsymbol{y}^{(\ell)},\boldsymbol{x}\}\}$
constitutes a training example. An illustration is given in Figure
\ref{fig:illustrate_train_set}. Additional examples can be generated
for validation purposes. Finally, we apply the trained neural net
to the real data to get an estimate for $\boldsymbol{\theta}$.

\begin{figure}
\begin{centering}
\includegraphics[scale=0.25]{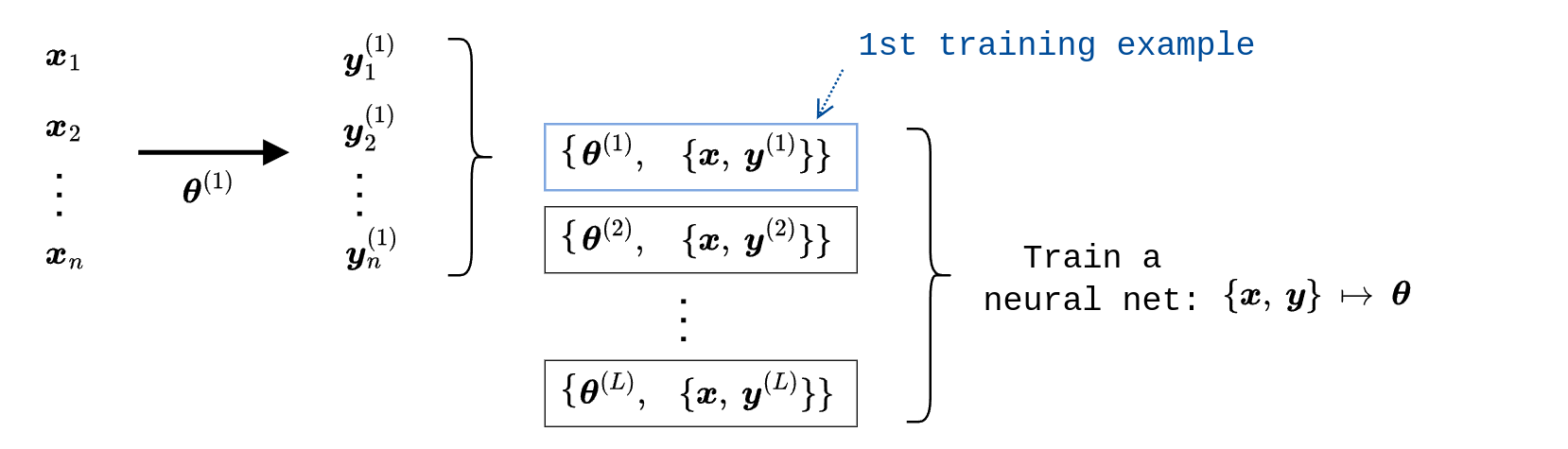}
\par\end{centering}
\caption{Training Examples in NNE\label{fig:illustrate_train_set}}
\end{figure}

We focus on two variations of the above approach. First, we allow
using data moments (instead of the entire dataset) as the input to
the neural net. Examples of moments include the mean of $\boldsymbol{y}_{i}$,
cross-covariance between $\boldsymbol{y}_{i}$ and $\boldsymbol{x}_{i}$,
etc. Economic intuition often provides guidance for choosing moments.
The reduction of a dataset into moments allows us to use simpler neural
nets that are computationally easy to train. Second, we configure
the neural net so that it outputs not only point estimate of $\boldsymbol{\theta}$
but also a measure of statistical accuracy for the point estimate.

In Section \ref{subsec:front-convergence}, we borrow from the econometric
literature on properties of neural nets to provide several theoretical
properties for NNE. Let $\boldsymbol{m}$ collect the data moments.
As the size of training set $L\rightarrow\infty$, the point estimate
by NNE converges to $\mathbb{E}(\boldsymbol{\theta}|\boldsymbol{m})$.
In Bayesian language, $\mathbb{E}(\boldsymbol{\theta}|\boldsymbol{m})$
is known as the limited-information posterior mean. Here, ``limited-information''
signifies that the posterior is conditioned on the moments instead
of the entire data. Further, the measure of statistical accuracy given
by NNE converges to $\mathbb{V}\text{ar}(\boldsymbol{\theta}|\boldsymbol{m})$
or $\mathbb{C}\text{ov}(\boldsymbol{\theta}|\boldsymbol{m})$.

One implication of these results is robustness to redundant moments.
To see it, note that in a given application, the accuracy of $\mathbb{E}(\boldsymbol{\theta}|\boldsymbol{m})$
in estimating $\boldsymbol{\theta}$ will not decrease as $\boldsymbol{m}$
includes more moments. Thus, with a sufficiently large $L$, NNE will
not become less accurate as $\boldsymbol{m}$ includes more moments.
This is different from SMM, where even with sufficiently many simulations,
redundant moments are known to increase finite-sample biases.\footnote{See \citet{Xu2015selectmoments} and references there within. Also
see \citet{newey2007gmmnotes}.} Intuitively, this difference is because NNE can learn from training
examples whether a moment contributes to recognizing $\boldsymbol{\theta}$.
SMM and GMM lack such a mechanism (optimal weighting matrix accounts
for only the variance-covariance \textit{among} moments). As such,
NNE gives researchers flexibility to include more moments for estimation.

We study NNE in two applications. Section \ref{sec:AR1} uses a simple
AR(1) model to visualize the working of NNE and its differences from
SMM. The simplicity of the AR(1) model also allows us to see whether
a moment is redundant. As we include redundant or almost redundant
moments, the root mean square error (RMSE) of GMM increases substantially.
In contrast, the increases in the RMSE of NNE are much smaller.

Section \ref{sec:search} studies an application to sequential search
model. The search model has gained popularity as data on consumer
online search journeys become increasingly available. The prevailing
method to estimate search model is SMLE. However, the estimation is
challenging because the number of possible search and purchase combinations
is very large. Evaluating the probability of any particular combination
requires an infeasibly large number of model simulations. To make
SMLE feasible, studies in literature have applied smoothing to the
likelihood function.

In Monte Carlo studies, we find that NNE recovers the search model
parameter well. Over a wide range of computational costs, NNE has
a smaller RMSE than SMLE. The advantage of NNE is especially large
at low computational costs. In addition, SMLE's estimates are sensitive
to the degree of smoothing. The sensitivity makes it particularly
important to choose the appropriate smoothing factor in SMLE. However,
the choice of smoothing factor lacks established guidance and has
been known to be very difficult for consumer search models in general.
When we apply both NNE and SMLE to a real dataset of consumer search,
the search model estimated by NNE produces a better fit to key data
statistics. The estimates and model fit by SMLE are again sensitive
to smoothing.

We conduct several additional analyses of NNE in the context of the
search model. First, we find that the estimation accuracy of NNE does
not deteriorate as $\boldsymbol{m}$ includes more moments, which
again suggests robustness to redundant moments. Second, the accuracy
of NNE improves with $n$ faster than smoothed SMLE. Third, the estimate
by NNE is not sensitive to neural net configuration, and the optimal
configuration can be easily selected using validation examples. Fourth,
NNE remains reasonably well-behaved even when $\Theta$ fails to include
the true value of $\boldsymbol{\theta}$; the estimates mostly fall
between $\Theta$ and the true value.

Overall, our results suggest that NNE is mostly suitable for applications
where otherwise heavy simulations are needed to integrate out the
unobservables for estimation. In these cases, NNE provides a computationally
light alternative to obtain accurate estimates. For applications where
the main estimation burden is not simulations, NNE may still be applied
but is unlikely to show clear gains. Examples include applications
where closed-form likelihood is available and dynamic choices or games
where solving the economic model constitutes the main burden in estimation. 

We summarize the takeaways of NNE in Table \ref{tab:pros-and-cons}.
To facilitate applications, we provide our code at: \texttt{https://nnehome.github.io}.
Finally, we discuss several extensions for future research in Section
\ref{sec:conclusion}. In particular, further development of NNE may
make it practical to rely on aggregate moments to estimate certain
structural models that normally require individual-level data. The
extension will allow for privacy-preserving structural estimation.

\subsection{Literature}

This paper adds to a fast-growing literature at the intersection of
machine learning and marketing/economics. Machine learning algorithms
have been applied to various marketing and economic questions. A particular
stream of research focuses on leveraging machine learning to tackle
estimation problems. \citet{lewis2018adversarial} use the adversarial
approach to select from a continuum of moment conditions. \citet{wager2018estimation}
apply random forests to estimate heterogeneous treatment effects.
\citet{Chernozhukov2018double} study using machine learning models
to capture high-dimensional nuisance parameters. \citet{farrell2020deep}
use neural nets to model individual heterogeneity. The literature
so far has focused on applying machine learning to build more flexible
models between economic variables. Our paper explores a different
direction, which uses machine learning to estimate the parameter of
an econometric model. A recent work sharing this direction is \citet{kaji2020adversarial}
that apply the adversarial approach. They propose training a discriminator
to construct the objective function of extremum estimator. Their approach
is very different from ours (e.g., NNE is not an extremum estimator).

\begin{table}
\caption{Summary of NNE\label{tab:pros-and-cons}}

\begin{centering}
\begin{tabular}{>{\raggedleft}p{0.02\textwidth}>{\raggedright}p{0.92\textwidth}}
\toprule 
 & {\small\hspace{18em}}{\small\textbf{Main properties}}\tabularnewline
\midrule
{\small 1.} & {\small It does not require computing integrals over unobservables
in the structural econometric model; it only requires being able to
simulate data using the econometric model.}\tabularnewline\addlinespace[1mm]
{\small 2.} & {\small It does not require optimizing over an (potentially non-smooth)
objective function as in extremum estimators (e.g., SMLE, SMM, indirect
inference).}\tabularnewline\addlinespace[1mm]
{\small 3.} & {\small It is more robust to redundant moments when compared to SMM/GMM.}\tabularnewline\addlinespace[1mm]
{\small 4.} & {\small It computes a measure of statistical accuracy} {\small as a
byproduct.}\tabularnewline\addlinespace[1mm]
\bottomrule
\end{tabular}
\par\end{centering}
\medskip{}

\centering{}%
\begin{tabular}{>{\raggedright}p{0.47\textwidth}>{\raggedright}p{0.47\textwidth}}
\toprule 
{\small\hspace{5em}}{\small\textbf{Suitable applications}} & {\small\hspace{5em}}{\small\textbf{Less suitable applications}}\tabularnewline
\midrule 
{\small A large number of simulations are needed to evaluate likelihood/moments.
The SMLE/SMM objective is difficult to optimize. There lacks clear
guidance on moment choice. Formulas for standard errors are not yet
established.}{\small\tablefootnote{An example is data with network dependence, where asymptotic theory
is an active research area. See \citet{lee2021networkinference} and
\citet{graham2020networkdata} for surveys. In this case, NNE provides
an alternative for inferences.}} & {\small Closed-form expressions are available for likelihood/moments.
The main estimation burden comes from sources other than the simulations
to evaluate likelihood/moments.}\tabularnewline
{\small\uline{Examples}}{\small : sequential search, discrete choices
with rich unobserved heterogeneity, matching games, choices on networks.} & {\small\uline{Examples}}{\small : dynamic choices or games where the
main burden is solving policy functions.}\tabularnewline
\bottomrule
\end{tabular}
\end{table}

We borrow from the econometric literature on neural nets to derive
theoretical properties of NNE. This literature started with \citet{white1989learning}
and \citet{white1990connectionist}, which establish that single-layer
neural nets can learn arbitrary conditional expectations. Later studies
mostly examine neural nets under the class of sieve estimators (see
\citealt{chen2007sieve}). More recently, \citet{farrel2019convergence}
study multi-layer neural nets. This literature assumes the setting
where the goal is to learn relations between data variables. Our paper
assumes a different setting where the goal is to learn the relation
between data and an econometric model's parameter. We adapt the results
in the literature and derive conditions under which they can continue
to hold in our setting.

This paper joins a literature that develops computationally light
estimators (e.g., \citealt{bajari2007estimating}, \citealt{pakes2007simple},
\citealt{su2012constrained}). A particular estimator which NNE shares
a similarity with is indirect inference (\citealt{gourieroux1993indirect},
\citealt{collard-wexler2013indirect}, \citealt{bao2017could}). Both
approaches avoid the integrals over unobservables in the econometric
model. However, the two approaches are conceptually different. As
discussed in \citet{gourieroux1993indirect} and \citet{smith2008indirect},
indirect inference is an extension of SMM, with the usual moments
replaced by the parameter of an auxiliary model. As a result, it falls
under extremum estimators and inherits issues from SMM. For example,
indirect inference has a non-smooth objective function in discrete
choice models, which creates difficulty for optimization (\citealt{bruins2018generalized}).
NNE is not an extremum estimator and faces no such difficulty. In
addition, similar to how SMM can be sensitive to moment choice, indirect
inference can be sensitive to the choice of auxiliary model. NNE puts
less burden on researchers to select moments. We discuss indirect
inference further in Appendix \ref{subsec:indirect-infer}.

\vspace{1.2em}

\section{Learning parameter value from data\label{sec:front}}

In this section, we start by describing a framework of structural
estimation. Under this framework, we describe how NNE computes the
parameter estimate of the structural econometric model and the statistical
accuracy of the estimate. We also provide the limit of NNE as $L$
increases.

\subsection{Setup}

In most applications, we are interested in modeling an outcome $\boldsymbol{y}_{i}$,
with $i=1,...,n$ indexing the individuals in data. For example, in
a consumer search model, $\boldsymbol{y}_{i}$ includes both the search
and purchase choices of consumer $i$. We also observe some attributes
$\boldsymbol{x}_{i}$. In the consumer search model, $\boldsymbol{x}_{i}$
collects the attributes of the products available to consumer $i$.
Let $\boldsymbol{y}\equiv\{\boldsymbol{y}_{i}\}_{i=1}^{n}$ and $\boldsymbol{x}\equiv\{\boldsymbol{x}_{i}\}_{i=1}^{n}$
collect the data across all individuals. For the purposes of NNE,
we consider econometric models that are parametric and allow researchers
to simulate the outcome $\boldsymbol{y}$. Formally, denote the econometric
model as $\boldsymbol{q}$ and 
\begin{equation}
\boldsymbol{y}=\boldsymbol{q}(\boldsymbol{x},\boldsymbol{\varepsilon};\boldsymbol{\theta}),\label{eq:econ-model}
\end{equation}
where $\boldsymbol{\varepsilon}$ collects the unobserved attributes
(sometimes interpreted simply as error terms), and $\boldsymbol{\theta}$
denotes the parameter vector (finite-dimensional in this paper). Throughout
this paper, the word ``model'' when used alone always refers to
an econometric model. When referring to a machine learning ``model,''
we will specifically point it out.

This formulation includes cases with i.i.d. and non-i.i.d. observations.
When individuals are independent observations, $\boldsymbol{y}_{i}$
does not depend on $(\boldsymbol{x}_{j},\boldsymbol{\varepsilon}_{j})$
for any $j\neq i$. However, for example, in a model of peer influence
over social network, choices are inter-dependent. So $\boldsymbol{y}_{i}$
may depend on $(\boldsymbol{x}_{j},\boldsymbol{\varepsilon}_{j})$
for $j\neq i$ and $\boldsymbol{y}_{i}$ is generally a function of
the entire $(\boldsymbol{x},\boldsymbol{\varepsilon})$.\footnote{For example, let $\boldsymbol{A}$ be the adjacency matrix and $y_{i}=\lambda\sum_{j=1}^{n}A_{ij}y_{j}+\boldsymbol{\beta}'\boldsymbol{x}_{i}+\varepsilon_{i}$,
then $\boldsymbol{y}=(\boldsymbol{I}-\lambda\boldsymbol{A})^{-1}(\boldsymbol{x}\boldsymbol{\beta}+\boldsymbol{\varepsilon})$.}

The maximum likelihood estimator (MLE) estimates $\boldsymbol{\theta}$
by maximizing $\log P(\boldsymbol{y}|\boldsymbol{x};\boldsymbol{\theta})$.
In the i.i.d. case, we have $\log P(\boldsymbol{y}|\boldsymbol{x};\boldsymbol{\theta})=\frac{1}{n}\sum_{i=1}^{n}\log P(\boldsymbol{y}_{i}|\boldsymbol{x}_{i};\boldsymbol{\theta})$.
For many econometric models, there is no explicit expression for $P(\boldsymbol{y}_{i}|\boldsymbol{x}_{i};\boldsymbol{\theta})$,
and one has to use Monte Carlo simulations to integrate out $\boldsymbol{\varepsilon}_{i}$.
This is known as the simulated maximum likelihood (SMLE). For subsequent
discussions, it is useful to define simulation burden. Fix a data
size $n$. Suppose the researcher uses $R$ draws of $\boldsymbol{\varepsilon}_{i}$
to evaluate $P(\boldsymbol{y}_{i}|\boldsymbol{x}_{i};\boldsymbol{\theta})$.
An optimization routine needs to evaluate the likelihood $P(\boldsymbol{y}|\boldsymbol{x};\boldsymbol{\theta})$
many times at different values of $\boldsymbol{\theta}$. We define
the simulation burden as $R$ times the number of likelihood evaluations
by the optimization routine.

An alternative method to estimate $\boldsymbol{\theta}$ is the generalized
method of moments (GMM). As with MLE, for many econometric models
the evaluation of the moment function requires simulations to integrate
out $\boldsymbol{\varepsilon}_{i}$. This is known as the simulated
method of moments (SMM). Suppose the researcher uses $R$ draws of
$\boldsymbol{\varepsilon}_{i}$ to evaluate the moments. Similarly,
the simulation burden is $R$ times the number of moment function
evaluations by the optimization routine.

\subsection{NNE point estimates \label{subsec:front-point}}

Fix an econometric model $\boldsymbol{q}$ as in equation (\ref{eq:econ-model}).
Recall that $\boldsymbol{y}\equiv\{\boldsymbol{y}_{i}\}_{i=1}^{n}$
and $\boldsymbol{x}\equiv\{\boldsymbol{x}_{i}\}_{i=1}^{n}$, thus
$\{\boldsymbol{y},\boldsymbol{x}\}$ denotes a dataset. In what follows,
we describe NNE as using a machine learning algorithm to learn a mapping
from datasets to the parameter values of the econometric model $\boldsymbol{q}$:
\begin{equation}
\{\boldsymbol{y},\boldsymbol{x}\}\mapsto\boldsymbol{\theta}.\label{eq:front-mapping}
\end{equation}
That is, given any dataset, NNE tries to ``recognize'' the value
of $\boldsymbol{\theta}$ for the econometric model. Importantly,
the purpose of the machine learning algorithm is to estimate $\boldsymbol{\theta}$,
not to model how $\boldsymbol{x}$ affects $\boldsymbol{y}$. The
relation between $\boldsymbol{y}$ and $\boldsymbol{x}$ is entirely
specified by the econometric model $\boldsymbol{q}$.

Building the machine learning model requires a training set composed
of training examples. Many machine recognition tasks (e.g., image
classification) require laborious human labeling to create training
sets. This is not required in NNE. Instead, we use the econometric
model to generate the training set.

Specifically, let $\ell\in\{1,2,...,L\}$ index the training examples.
For each $\ell$, draw $\boldsymbol{\theta}^{(\ell)}$ uniformly from
a parameter space $\Theta$.\footnote{We give more discussion on how to choose $\Theta$ in practice in
Section \ref{sec:search} and Appendix \ref{subsec:range}.} One may use a non-uniform distribution if a specific prior is available.
Let $\boldsymbol{\varepsilon}^{(\ell)}$ be a random draw of the unobservables
in the econometric model and $\boldsymbol{y}^{(\ell)}=\boldsymbol{q}(\boldsymbol{x},\boldsymbol{\varepsilon}^{(\ell)};\boldsymbol{\theta}^{(\ell)})$.
That is, $\{\boldsymbol{y}^{(\ell)},\boldsymbol{x}\}$ is a dataset
generated by the econometric model given $\boldsymbol{\theta}^{(\ell)}$
and the observed $\boldsymbol{x}$. We call it a training dataset.
Note that all training datasets are generated conditional on the same
observed $\boldsymbol{x}$. Our training set is
\begin{equation}
\{\boldsymbol{\theta}^{(\ell)},\{\boldsymbol{y}^{(\ell)},\boldsymbol{x}\}\}_{\ell=1,2...,L}.\label{eq:front-train-set-D}
\end{equation}
We train a machine learning model to recognize $\boldsymbol{\theta}^{(\ell)}$
from $\{\boldsymbol{y}^{(\ell)},\boldsymbol{x}\}$. After training,
we plug the real data into the machine learning model to obtain the
estimate for $\boldsymbol{\theta}$.

The machine learning model above uses an entire dataset as the input,
so the input dimension is in the order of $n$. One possible variation
is to use data moments as the input instead. For many structural econometric
models, data moments are sufficient for pinning down parameters, and
economic theory and intuition often suggest what these moments might
be. Let $\boldsymbol{m}$ denote a set of moments that summarizes
$\{\boldsymbol{y},\boldsymbol{x}\}$. For example, $\boldsymbol{m}$
may collect the mean of $\boldsymbol{y}_{i}$: $\bar{\boldsymbol{y}}\equiv n^{-1}\sum_{i=1}^{n}\boldsymbol{y}_{i}$,
the covariance matrix of $\boldsymbol{y}_{i}$: $n^{-1}\sum_{i=1}^{n}(\boldsymbol{y}_{i}-\bar{\boldsymbol{y}})(\boldsymbol{y}_{i}-\bar{\boldsymbol{y}})'$,
the cross-covariance matrix between $\boldsymbol{y}_{i}$ and $\boldsymbol{x}_{i}$:
$n^{-1}\sum_{i=1}^{n}(\boldsymbol{y}_{i}-\bar{\boldsymbol{y}})(\boldsymbol{x}_{i}-\bar{\boldsymbol{x}})'$,
etc.\footnote{Here, it is useful to note that we need not include moments that involve
$\boldsymbol{x}$ only (e.g., covariance matrix of $\boldsymbol{x}_{i}$),
because such moments would be constant across the training examples.} Similarly, let $\boldsymbol{m}^{(\ell)}$ denote the moments that
summarize the simulated dataset $\{\boldsymbol{y}^{(\ell)},\boldsymbol{x}\}$.
Our training set now becomes
\begin{equation}
\{\boldsymbol{\theta}^{(\ell)},\boldsymbol{m}^{(\ell)}\}_{\ell=1,2...,L}.\label{eq:front-train-set-M}
\end{equation}
We train a machine learning model to recognize $\boldsymbol{\theta}^{(\ell)}$
from $\boldsymbol{m}^{(\ell)}$. After training, we plug in the real
data moments to obtain the estimate for $\boldsymbol{\theta}$.

The reduction of data to moments has pros and cons. The dimension
of $\boldsymbol{m}$ is much smaller than that of $\{\boldsymbol{y},\boldsymbol{x}\}$.
In addition, the mapping from moments to $\boldsymbol{\theta}$ is
usually less complex than the mapping from data to $\boldsymbol{\theta}$.
Together, these factors reduce the required size and complexity of
our machine learning model, which lowers training cost. While large
and complex machine learning models (e.g., deep learning models) are
common nowadays, they tend to be costly to train.

Using data moments can result in some loss of information in the data.
At minimum, we would require that the parameter $\boldsymbol{\theta}$
is identified by the moments selected into $\boldsymbol{m}$ (in this
paper we do not specifically address cases where identification is
lacking). This requirement is the same as in GMM, where the moment
conditions must identify parameter. To mitigate the potential information
loss, one would like to include as many potentially relevant moments
as possible. However, it is known that in GMM redundant moments can
substantially worsen finite-sample estimation biases (e.g., \citealt{Xu2015selectmoments},
\citealt{newey2007gmmnotes}). This problem has been studied in settings
with lagged moments (\citealt{andersen1996manymoments}, \citealt{altonji1996small}),
higher-order moments (\citealt{breusch1999momentredundancy}), and
interaction moments (\citealt{donald2021numberofmoments}). In such
settings, researchers have to carefully balance the information gain
against the adverse effect on bias when deciding whether to include
additional moments for estimation.

Fortunately, we find NNE to be more robust to redundant moments. We
first reveal this property from a theoretical perspective in Section
\ref{subsec:front-convergence}, and then illustrate it with applications
in Section \ref{sec:AR1} and \ref{sec:search}. Intuitively, the
robustness comes from the fact that the machine learning algorithm
can learn, from the training set, which moments contribute to predicting
the parameter and which moments do not. GMM lacks such a mechanism
(optimal weighting matrix accounts for the variance-covariance among
moments, not how much each moment contributes to parameter estimation).

We now turn to more details on the machine learning model. Use $\boldsymbol{f}:\boldsymbol{m}\mapsto\boldsymbol{\theta}$
to denote a generic mapping from the space of moments to the space
of parameter values. We seek to find a mapping $\widehat{\boldsymbol{f}}$
such that
\[
\boldsymbol{\theta}^{(\ell)}\simeq\widehat{\boldsymbol{f}}(\boldsymbol{m}^{(\ell)}),\;\forall\ell.
\]

One may use any machine learning algorithm here. In this paper, we
construct $\widehat{\boldsymbol{f}}$ using shallow neural nets. The
main reasons are: (i) in our context shallow neural nets provide a
good balance between learning capacity and ease of training, and (ii)
there are relatively well developed statistical properties of shallow
neural nets for us to establish convergence results (later in Section
\ref{subsec:front-convergence}).

Algorithms for training neural nets are well established and we shall
not expand on their details. For the discussion here, it is sufficient
to view neural nets as a flexible functional form for $\boldsymbol{f}$
parameterized by a set of ``weights.'' The training chooses these
weights by minimizing a loss function. Let $k$ index the dimensions
of $\boldsymbol{\theta}$. We first consider the mean squared error
(MSE) loss function:
\begin{equation}
C_{1}(\boldsymbol{f})=L^{-1}\sum_{\ell=1}^{L}\sum_{k}\left[f_{k}(\boldsymbol{m}^{(\ell)})-\theta_{k}^{(\ell)}\right]^{2}.\label{eq:front-loss-mse}
\end{equation}
We choose $\widehat{\boldsymbol{f}}$ as the neural net $\boldsymbol{f}$
that minimizes $C_{1}$.\footnote{We abstract away from optimization issues in neural net training and
assume that the minimum can be attained. Neural network optimization
is an active research area (see e.g., \citealt{dauphin2014saddlepoint},
\citealt{li2018landscape}, \citealt{Du2019gradientdescent}).} We will consider alternative loss functions later.\textcolor{black}{{}
Readers familiar with machine learning may note that loss function
(\ref{eq:front-loss-mse}) does not have the usual penalty term for
regularization. Instead, we regularize the neural net by choosing
the number of hidden nodes. We do so for the convenience to establish
convergence results (}Section \ref{subsec:front-convergence}), as
the econometric literature on neural nets has mostly focused on regularization
via the number of hidden nodes (instead of a penalty term). In applications,
however, our experience is that NNE works well with either regularization
method.

The final element to be discussed is validation. Specifically, we
simulate additional datasets $\ell=L+1,...,L^{*}$ to form a validation
set $\{\boldsymbol{\theta}^{(\ell)},\boldsymbol{m}^{(\ell)}\}_{\ell=L+1}^{L^{*}}$.
Note the index here starts with $L+1$. The validation set is not
used in the loss function (\ref{eq:front-loss-mse}). Instead, the
validation set can be used to check the accuracy of $\widehat{\boldsymbol{f}}$
as well as choose the number of hidden nodes.

\subsection{NNE statistical accuracy\label{subsec:front-accuracy}}

The development so far has focused on training the neural net to give
a point estimate of $\boldsymbol{\theta}$. Next, we train the neural
net to provide the statistical accuracy of point estimates. Importantly,
we will use the same training set as in expression (\ref{eq:front-train-set-M}).
At first glance, this task may appear implausible because in each
training example, there is only one parameter value $\boldsymbol{\theta}^{(\ell)}$
associated with one moment value $\boldsymbol{m}^{(\ell)}$. However,
across the training examples, we can measure the dispersion of parameter
values around any given moment value. Conceptually, this dispersion
provides us with an estimate of statistical accuracy.

More formally, later in Section \ref{subsec:front-convergence} we
establish that $\widehat{\boldsymbol{f}}$ trained using the loss
in equation (\ref{eq:front-loss-mse}) converges to $\mathbb{E}(\boldsymbol{\theta}|\boldsymbol{m})$
as $L\rightarrow\infty$. Continuing with this result, here we seek
to estimate the variance vector $\mathbb{V}\mathrm{ar}(\boldsymbol{\theta}|\boldsymbol{m})$
or the covariance matrix $\mathbb{C}\mathrm{ov}(\boldsymbol{\theta}|\boldsymbol{m})$
as a measure of statistical accuracy. We achieve this by modifying
the NNE with two changes: (i) the neural net outputs not only a point
estimate but also the variance and covariance terms, and (ii) the
loss function is specified accordingly to properly train the neural
net.

Specifically, we configure the neural net $\boldsymbol{f}$ such that
its output has two parts. The first part is a vector $\boldsymbol{\mu}$
for the point estimate and the second part specifies a covariance
matrix $\boldsymbol{V}$. We write $\boldsymbol{f}=(\boldsymbol{\mu},\boldsymbol{V})$.
In practice, we need to ensure that $\boldsymbol{V}$ is positive
definite. One way to do so is Cholesky decomposition. We let $\boldsymbol{f}$
compute the Cholesky factor of $\boldsymbol{V}$ and then transform
it into $\boldsymbol{V}$. A special case arises in applications when
researchers are interested in only the variance terms. As we will
show below, in this case we may simplify $\boldsymbol{V}$ to be a
diagonal matrix with the variance terms only. To ensure positive variances,
we can let $\boldsymbol{f}$ compute the log of the diagonal elements
and then map them into $\boldsymbol{V}$.

Using the above notation, we write $(\boldsymbol{\mu}^{(\ell)},\boldsymbol{V}^{(\ell)})=\boldsymbol{f}(\boldsymbol{m}^{(\ell)})$
for each $\ell$ in the training set. We use the following loss function
to train the neural net. This loss function is a generalization of
$C_{1}$ and known as cross-entropy loss in the machine learning literature.
\begin{align}
C_{2}(\boldsymbol{f}) & =L^{-1}\sum_{\ell=1}^{L}\left[\log\left(|\boldsymbol{V}^{(\ell)}|\right)+\left(\boldsymbol{\theta}^{(\ell)}-\boldsymbol{\mu}^{(\ell)}\right)'\left(\boldsymbol{V}^{(\ell)}\right)^{-1}\left(\boldsymbol{\theta}^{(\ell)}-\boldsymbol{\mu}^{(\ell)}\right)\right],\nonumber \\
 & \text{where }(\boldsymbol{\mu}^{(\ell)},\boldsymbol{V}^{(\ell)})=\boldsymbol{f}(\boldsymbol{m}^{(\ell)}).\label{eq:front-loss-V}
\end{align}
We train a neural net $\widehat{\boldsymbol{f}}$ that minimizes this
loss function $C_{2}$ instead of $C_{1}$.

One may note that $C_{2}$ uses the normal p.d.f. However, this does
\textit{not }mean that we must assume normality for the underlying
distribution $P(\boldsymbol{\theta}|\boldsymbol{m})$. Formally, we
will show in Section \ref{subsec:front-convergence} that $\widehat{\boldsymbol{f}}$
converges to $\mathbb{E}(\boldsymbol{\theta}|\boldsymbol{m})$ and
$\mathbb{C}\mathrm{ov}(\boldsymbol{\theta}|\boldsymbol{m})$ as $L\rightarrow\infty$.
If $\boldsymbol{V}$ is specified as diagonal, then $\widehat{\boldsymbol{f}}$
converges to $\mathbb{E}(\boldsymbol{\theta}|\boldsymbol{m})$ and
$\mathbb{V}\mathrm{ar}(\boldsymbol{\theta}|\boldsymbol{m})$ as $L\rightarrow\infty$.
Importantly, these results hold for general $P(\boldsymbol{\theta}|\boldsymbol{m})$
that may or may not be normal.\footnote{Nevertheless, it is useful to note a general result in Bayesian statistics
that posteriors tend to normal distributions with $n\rightarrow\infty$
(see \citealt{gelman2004bayes_book}, \citealt{kim2002limited}).
Thus, in applications with a large number of observations, one may
assume $P(\boldsymbol{\theta}|\boldsymbol{m})$ to be normal.}

To summarize, we use the following steps to obtain both the point
estimate and a measure of statistical accuracy. First, we follow Section
\ref{subsec:front-point} to generate the pairs $\{\boldsymbol{\theta}^{(\ell)},\boldsymbol{m}^{(\ell)}\}_{\ell=1}^{L}.$
Then, we use these pairs to train a neural net $\widehat{\boldsymbol{f}}:\boldsymbol{m}\mapsto(\boldsymbol{\mu},\boldsymbol{V})$
under the loss function $C_{2}$. Finally, we plug the moments of
the real data into $\widehat{\boldsymbol{f}}$ to obtain the point
estimate and the statistical accuracy for the point estimate.

\subsection{Convergence results\label{subsec:front-convergence}}

We provide three convergence results for NNE as well as the intuitions
for these results. The first result concerns with the neural net trained
under loss function $C_{1}$ (see equation \ref{eq:front-loss-mse}).
The second result concerns with loss function $C_{2}$ (see equation
\ref{eq:front-loss-V}). The third result concerns with a more general
loss function $C_{3}$, which we will define. All the convergence
results rely on $L\rightarrow\infty$ and hold for any finite data
size $n$. Readers more interested in applications may skip this subsection
without much loss of continuity. All proofs are given in Appendix
\ref{subsec:proofs}. 

Our results build on the existing econometric literature on the asymptotic
properties of neural nets (e.g., \citealt{chen2007sieve} and \citealt{white1990connectionist}).
The literature studies the setting where neural nets learn about the
relation from $\boldsymbol{x}_{i}$ to $\boldsymbol{y}_{i}$, such
as $\mathbb{E}(\boldsymbol{y}_{i}|\boldsymbol{x}_{i})$. That is,
neural nets are used as econometric models. This is different from
our setting where we use neural nets to recover the parameter $\boldsymbol{\theta}$
of a econometric model. In other words, the relation to be learned
is from $\{\boldsymbol{x},\boldsymbol{y}\}$ to $\boldsymbol{\theta}$.
At a high level, our propositions ``translate'' the asymptotic results
in the literature from their original setting to our setting, deriving
conditions specific to our setting so that the asymptotics can be
upheld. It is useful to note that in principle, our results are not
restricted to neural nets. Generically, we expect a machine learning
model to converge to a target relation as the training set grows,
if its capacity properly adjusts to the training set size and it is
flexible enough to include the target relation (as its capacity grows).
Our results should still apply if the neural net is replaced by a
machine learning model for which such asymptotic property is true.

We start by formalizing several notions to be used in the convergence
results. Conditional on the observed $\boldsymbol{x}$, the econometric
model together with the prior over $\Theta$ (e.g., uniform) implies
a distribution of $\boldsymbol{\theta}$ and $\boldsymbol{y}$. This
distribution in turn implies a distribution of $\boldsymbol{\theta}$
and $\boldsymbol{m}$, which we denote as $P(\boldsymbol{\theta},\boldsymbol{m})$.
Then, $\mathbb{E}(\boldsymbol{\theta}|\boldsymbol{m})$ is the best
estimate of parameter given a set of moments. In Bayesian language,
this is known as the limited-information posterior mean, with ``limited-information''
signifying that the posterior is conditional on a set of moments instead
of the entire data (\citealt{kim2002limited}).\footnote{There is a relation between our method and approximate Bayesian computation
(ABC). Conceptually, NNE can be seen as an implementation of ABC that
takes advantages of neural nets.} To express asymptotic results, we introduce a sequence of function
spaces $\{{\cal F}_{L}\}_{L=1}^{\infty}$, with each ${\cal F}_{L}$
denoting a class of neural nets. Because the exact specification of
${\cal F}_{L}$ concerns only technical aspects of proofs, we describe
it in the appendix. For our exposition below, the key property of
$\{{\cal F}_{L}\}_{L=1}^{\infty}$ is that the capacity of the neural
nets in ${\cal F}_{L}$ grows gradually with $L$. Intuitively, this
property formalizes regularization -- the neural net's capacity should
be appropriate for the size of training set. Let $\widehat{\boldsymbol{f}}_{L}$
denote a neural net in ${\cal F}_{L}$ that minimizes $C_{1}$. Let
$\lVert\cdot\rVert$ denote the 2-norm of functions: $\lVert\boldsymbol{f}\rVert\equiv\left[\int\sum_{k}f_{k}(\boldsymbol{m})^{2}dP(\boldsymbol{m})\right]^{1/2}$.
A norm of zero means $\boldsymbol{f}$ is zero a.s.

\smallskip{}

\begin{prop}
\label{prop: mean}Suppose: (i) $\Theta$ is compact, (ii) the moments
$\boldsymbol{m}$ has a compact support, (iii) $\mathbb{E}(\boldsymbol{\theta}|\boldsymbol{m})$
is continuous in $\boldsymbol{m}$. Under the loss function (\ref{eq:front-loss-mse}),
we have $\lVert\widehat{\boldsymbol{f}}_{L}-\mathbb{E}(\boldsymbol{\theta}|\boldsymbol{m})\rVert\rightarrow0$
in probability as $L\rightarrow\infty$. $\square$\smallskip{}
\end{prop}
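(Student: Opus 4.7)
The plan is to cast the claim as a standard sieve M-estimation result, with $\mathbb{E}(\boldsymbol{\theta}|\boldsymbol{m})$ playing the role of the population minimizer and the neural-net class $\mathcal{F}_L$ playing the role of the sieve. Define the population criterion
\[
C_1^{*}(\boldsymbol{f}) \;=\; \mathbb{E}\!\left[\sum_k \bigl(f_k(\boldsymbol{m}) - \theta_k\bigr)^2\right],
\]
where the expectation is under $P(\boldsymbol{\theta},\boldsymbol{m})$. The first step I would carry out is the standard orthogonality decomposition
\[
C_1^{*}(\boldsymbol{f}) - C_1^{*}(\boldsymbol{f}^{*}) \;=\; \lVert \boldsymbol{f} - \boldsymbol{f}^{*}\rVert^{2},
\qquad \boldsymbol{f}^{*}(\boldsymbol{m}) \equiv \mathbb{E}(\boldsymbol{\theta}|\boldsymbol{m}),
\]
which identifies $\boldsymbol{f}^{*}$ as the unique (in 2-norm) minimizer of $C_1^{*}$ and shows that proving $C_1^{*}(\widehat{\boldsymbol{f}}_L) \to C_1^{*}(\boldsymbol{f}^{*})$ in probability is equivalent to the conclusion.

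The second step is approximation. Conditions (i)--(iii) ensure $\boldsymbol{f}^{*}$ is bounded (since $\boldsymbol{\theta}\in\Theta$ compact) and continuous on the compact support of $\boldsymbol{m}$. I would invoke the universal approximation theorem for single-hidden-layer neural nets (White 1990; Hornik et al.) to obtain a sequence $\boldsymbol{f}_L\in\mathcal{F}_L$ with $\sup_{\boldsymbol{m}}\lVert\boldsymbol{f}_L(\boldsymbol{m}) - \boldsymbol{f}^{*}(\boldsymbol{m})\rVert \to 0$, hence $C_1^{*}(\boldsymbol{f}_L)\to C_1^{*}(\boldsymbol{f}^{*})$. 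This uses the assumed property of $\{\mathcal{F}_L\}$ that the neural-net capacity grows with $L$ (the capacity must be allowed to diverge so that any continuous function on a compact set can eventually be approximated).

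The third step is uniform convergence of the empirical criterion: $\sup_{\boldsymbol{f}\in\mathcal{F}_L}\lvert C_1(\boldsymbol{f}) - C_1^{*}(\boldsymbol{f})\rvert \xrightarrow{p} 0$. Here compactness of $\Theta$ and of the moment support, together with the bounded output range one can impose on networks in $\mathcal{F}_L$, yield a uniformly bounded and Lipschitz integrand, so a bracketing/covering-number argument (as in Chen 2007) gives the result provided the complexity of $\mathcal{F}_L$ (entropy, number of hidden nodes) grows slowly enough relative to $L$. This is exactly the regularization condition encoded in the appendix's construction of $\{\mathcal{F}_L\}$, and it is the main obstacle: balancing the growth rate of the sieve against $L$ so that both the approximation bias and the stochastic fluctuation vanish.

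Given these three ingredients, the final step is the routine sieve inequality: since $\widehat{\boldsymbol{f}}_L$ minimizes $C_1$ over $\mathcal{F}_L$,
\[
C_1^{*}(\widehat{\boldsymbol{f}}_L) \;\le\; C_1(\widehat{\boldsymbol{f}}_L) + o_p(1) \;\le\; C_1(\boldsymbol{f}_L) + o_p(1) \;\le\; C_1^{*}(\boldsymbol{f}_L) + o_p(1) \;\longrightarrow\; C_1^{*}(\boldsymbol{f}^{*}),
\]
where the first and third inequalities use uniform convergence applied to $\widehat{\boldsymbol{f}}_L,\boldsymbol{f}_L\in\mathcal{F}_L$, and the middle inequality uses optimality of $\widehat{\boldsymbol{f}}_L$. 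Combining with the orthogonality identity from the first step delivers $\lVert\widehat{\boldsymbol{f}}_L - \mathbb{E}(\boldsymbol{\theta}|\boldsymbol{m})\rVert \xrightarrow{p} 0$, as required.
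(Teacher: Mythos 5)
Your proposal is correct and follows essentially the same route as the paper: the paper packages the argument into a general sieve-convergence lemma (identification via the orthogonality decomposition $C_1^{*}(\boldsymbol{f})-C_1^{*}(\boldsymbol{f}^{*})=\lVert\boldsymbol{f}-\boldsymbol{f}^{*}\rVert^{2}$, denseness of the neural-net sieve from Hornik/White, and uniform convergence over $\mathcal{F}_L$ via Chen (2007)), and then verifies its hypotheses for the squared-error loss exactly as you do. The only cosmetic difference is that you run the sieve inequality chain directly rather than routing it through a separately stated lemma.
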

In words, the proposition says that as we increase the number of training
datasets, NNE converges to $\mathbb{E}(\boldsymbol{\theta}|\boldsymbol{m})$
as a function of $\boldsymbol{m}$. The conditions are relatively
mild. The compactness of $\Theta$ is also assumed in standard treatment
of MLE and GMM. The support-compactness of $\boldsymbol{m}$ can be
restrictive technically but is satisfied in many applications (see
\citealt{farrel2019convergence} for some discussion). The continuity
condition requires $\mathbb{E}(\boldsymbol{\theta}|\boldsymbol{m})$
not to change abruptly when the observed data moments change only
slightly, which is satisfied in virtually any application.

We note that Proposition \ref{prop: mean} lets $L\rightarrow\infty$
while holding the data size $n$ constant. Also note that given an
application where $n$ is fixed, the accuracy of $\mathbb{E}(\boldsymbol{\theta}|\boldsymbol{m})$
as an estimate of $\boldsymbol{\theta}$ will not decrease when more
moments are added to $\boldsymbol{m}$. Instead, the accuracy will
weakly increase. Thus, with a sufficiently large $L$, the NNE will
become weakly more accurate as we add moments to $\boldsymbol{m}$.
This is in contrast to SMM, where it is known that even with the number
of simulations $R\rightarrow\infty$, redundant moments generally
lead to larger biases. We have discussed this difference in Section
\ref{subsec:front-point} from a less technical perspective.

Next, we move to loss function $C_{2}$. For the following proposition,
the neural net outputs both the point estimate and the measure of
statistical accuracy as described in Section \ref{subsec:front-accuracy}.
We use $\widehat{\boldsymbol{f}}_{L}$ to denote a neural net that
minimizes $C_{2}$ in ${\cal F}_{L}$.

\medskip{}

\begin{prop}
\label{prop: cov} Suppose conditions (i) - (iii) in Proposition \ref{prop: mean}
hold, and in addition: (iv) $\mathbb{V}\mathrm{ar}(\boldsymbol{\theta}|\boldsymbol{m})$
is continuous in $\boldsymbol{m}$ and $\mathbb{V}\mathrm{ar}(\boldsymbol{\theta}|\boldsymbol{m})\geq\delta$
for some $\delta>0$. Under loss (\ref{eq:front-loss-V}) with diagonal
covariance matrix, we have $\lVert\widehat{\boldsymbol{f}}_{L}-\boldsymbol{f}^{*}\rVert\rightarrow0$
in probability with $\boldsymbol{f}^{*}=\left[\mathbb{E}(\boldsymbol{\theta}|\boldsymbol{m}),\mathbb{V}\mathrm{ar}(\boldsymbol{\theta}|\boldsymbol{m})\right]$.\medskip{}

In addition to conditions (i)-(iv) above, suppose: (v) $\mathbb{C}\mathrm{ov}(\boldsymbol{\theta}|\boldsymbol{m})$
is continuous in $\boldsymbol{m}$ and its smallest eigenvalue is
bounded below by some positive number. Under loss (\ref{eq:front-loss-V})
with full covariance matrix, we have $\lVert\widehat{\boldsymbol{f}}_{L}-\boldsymbol{f}^{*}\rVert\rightarrow0$
in probability with $\boldsymbol{f}^{*}=\left[\mathbb{E}(\boldsymbol{\theta}|\boldsymbol{m}),\mathbb{C}\mathrm{ov}(\boldsymbol{\theta}|\boldsymbol{m})\right]$.
$\square$
\end{prop}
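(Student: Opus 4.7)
The plan is to mirror the sieve-style argument behind Proposition 1 but with two modifications: (a) identify the pointwise minimizer of the population version of $C_{2}$ so that it coincides with $\boldsymbol{f}^{*}$, and (b) check that the extra regularity conditions (iv) and (v) keep the non-quadratic loss well-behaved so the same uniform law-of-large-numbers machinery still applies.

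First I would compute the pointwise population loss. For a given $\boldsymbol{m}$, the conditional expectation of the integrand in $C_{2}$ can be written as
\[
g(\boldsymbol{\mu},\boldsymbol{V};\boldsymbol{m}) \;=\; \log|\boldsymbol{V}| \;+\; \mathrm{tr}\!\left[\boldsymbol{V}^{-1}\Sigma(\boldsymbol{m})\right] \;+\; \bigl(\mathbb{E}(\boldsymbol{\theta}|\boldsymbol{m})-\boldsymbol{\mu}\bigr)'\boldsymbol{V}^{-1}\bigl(\mathbb{E}(\boldsymbol{\theta}|\boldsymbol{m})-\boldsymbol{\mu}\bigr),
\]
where $\Sigma(\boldsymbol{m})=\mathbb{C}\mathrm{ov}(\boldsymbol{\theta}|\boldsymbol{m})$. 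For any positive-definite $\boldsymbol{V}$, the quadratic form in $\boldsymbol{\mu}$ is minimized at $\boldsymbol{\mu}^{*}(\boldsymbol{m})=\mathbb{E}(\boldsymbol{\theta}|\boldsymbol{m})$. Substituting back, the remaining expression $\log|\boldsymbol{V}|+\mathrm{tr}(\boldsymbol{V}^{-1}\Sigma(\boldsymbol{m}))$ is the classical Gaussian log-likelihood in the covariance parameter; differentiating with respect to $\boldsymbol{V}^{-1}$ gives the unique minimizer $\boldsymbol{V}^{*}(\boldsymbol{m})=\Sigma(\boldsymbol{m})$. For the diagonal variant, restricting $\boldsymbol{V}=\mathrm{diag}(v_{1},\ldots,v_{K})$ separates $g$ into a sum over coordinates $\sum_{k}[\log v_{k}+v_{k}^{-1}\mathbb{E}((\theta_{k}-\mu_{k})^{2}|\boldsymbol{m})]$, each term minimized at $\mu_{k}^{*}=\mathbb{E}(\theta_{k}|\boldsymbol{m})$ and $v_{k}^{*}=\mathbb{V}\mathrm{ar}(\theta_{k}|\boldsymbol{m})$. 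This identifies $\boldsymbol{f}^{*}$ as the pointwise population minimizer in both cases.

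Next I would carry over the sieve consistency argument from Proposition 1. Conditions (i)--(iii) together with (iv)/(v) ensure that $\boldsymbol{f}^{*}$ is continuous in $\boldsymbol{m}$ on a compact support and that its covariance component has eigenvalues bounded away from $0$ (and, since $\Theta$ is compact, also bounded above). This continuous, compactly-supported target can be approximated uniformly by neural nets in $\mathcal{F}_{L}$ for $L$ large, where the Cholesky/log-diagonal parametrization guarantees that the approximating $\boldsymbol{V}$'s remain positive definite with eigenvalues in a compact subset of $(0,\infty)$. On this restricted region, $\log|\boldsymbol{V}|$ and $\boldsymbol{V}^{-1}$ are continuous and bounded, so the integrand of $C_{2}$ is continuous and bounded in the network parameters, and the uniform law of large numbers used in Proposition 1 delivers uniform convergence of $C_{2}$ to its population counterpart over $\mathcal{F}_{L}$. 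Combining uniqueness of the population minimizer with uniform convergence yields $\lVert\widehat{\boldsymbol{f}}_{L}-\boldsymbol{f}^{*}\rVert\to 0$ in probability, exactly as in the mean-only case.

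The main obstacle is the non-quadratic, non-Lipschitz nature of $C_{2}$: both $\log|\boldsymbol{V}|$ and $\boldsymbol{V}^{-1}$ blow up as $\boldsymbol{V}$ approaches singularity, so the uniform law of large numbers does not apply without first restricting the sieve to $\boldsymbol{V}$'s with eigenvalues bounded away from zero. Condition (iv) (respectively (v)) supplies this lower bound for the target $\boldsymbol{f}^{*}$, and the parametrization via $\log v_{k}$ or via a Cholesky factor with a floor lets the sieve inherit it. A related subtlety is justifying that the pointwise-in-$\boldsymbol{m}$ minimization interchange is valid globally; this follows because the integrated loss can be written as $\int g(\boldsymbol{\mu}(\boldsymbol{m}),\boldsymbol{V}(\boldsymbol{m});\boldsymbol{m})\,dP(\boldsymbol{m})$ and $g$ is minimized at $(\boldsymbol{\mu}^{*}(\boldsymbol{m}),\boldsymbol{V}^{*}(\boldsymbol{m}))$ for $P$-almost every $\boldsymbol{m}$, so no coupling across $\boldsymbol{m}$ is induced by the expectation.
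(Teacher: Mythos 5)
Your identification of the pointwise population minimizer is correct and matches the paper: conditioning on $\boldsymbol{m}$ and expanding the quadratic form gives exactly the decomposition $\log|\boldsymbol{V}|+\mathrm{tr}[\boldsymbol{V}^{-1}\Sigma(\boldsymbol{m})]+(\mathbb{E}(\boldsymbol{\theta}|\boldsymbol{m})-\boldsymbol{\mu})'\boldsymbol{V}^{-1}(\mathbb{E}(\boldsymbol{\theta}|\boldsymbol{m})-\boldsymbol{\mu})$, minimized uniquely at $\boldsymbol{\mu}^{*}=\mathbb{E}(\boldsymbol{\theta}|\boldsymbol{m})$, $\boldsymbol{V}^{*}=\Sigma(\boldsymbol{m})$, with the coordinatewise separation in the diagonal case. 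You also correctly flag that conditions (iv)/(v) are what let one restrict the output space to covariance matrices with eigenvalues in a compact subset of $(0,\infty)$, which is precisely how the paper restores Lipschitz continuity of the individual loss $h$ over $\Theta\times\Delta$.

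The gap is in your closing step, where you assert that ``uniqueness of the population minimizer with uniform convergence yields $\lVert\widehat{\boldsymbol{f}}_{L}-\boldsymbol{f}^{*}\rVert\to0$ \ldots{} exactly as in the mean-only case.'' It is not exactly as in the mean-only case, and this is where the real work of the paper's proof lies. The sieve consistency lemma requires the \emph{well-separated minimum} condition: $\inf_{\boldsymbol{f}\in{\cal F}:\lVert\boldsymbol{f}-\boldsymbol{f}^{*}\rVert\geq\epsilon}\mathbb{E}[h(\boldsymbol{\theta},\boldsymbol{f}(\boldsymbol{m}))]>\mathbb{E}[h(\boldsymbol{\theta},\boldsymbol{f}^{*}(\boldsymbol{m}))]$, not merely that $\boldsymbol{f}^{*}$ is the unique minimizer; in an infinite-dimensional function class, pointwise ($P$-a.e.) uniqueness does not by itself rule out a sequence of functions bounded away from $\boldsymbol{f}^{*}$ in $L^{2}(P)$ whose population losses approach the minimum. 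For Proposition \ref{prop: mean} this separation is free because the loss gap equals $\lVert f-f^{*}\rVert^{2}$ identically; for $C_{2}$ the pointwise gap is the KL-type quantity $\frac{V^{*}}{V}-\log\frac{V^{*}}{V}-1+\frac{(\mu^{*}-\mu)^{2}}{V}$, which has no such closed-form relation to the norm. The paper closes this by two steps your proposal omits: (a) Berge's theorem plus compactness of $\Delta$ and ${\cal M}$ yield a uniform $\delta>0$ such that $\lVert\boldsymbol{f}(\boldsymbol{m})-\boldsymbol{f}^{*}(\boldsymbol{m})\rVert\geq\epsilon/2$ forces the pointwise gap to be at least $\delta$; and (b) a Chebyshev-type bound shows that $\lVert\boldsymbol{f}-\boldsymbol{f}^{*}\rVert\geq\epsilon$ forces the set $A=\{\boldsymbol{m}:\lVert\boldsymbol{f}(\boldsymbol{m})-\boldsymbol{f}^{*}(\boldsymbol{m})\rVert\geq\epsilon/2\}$ to have $P(A)\geq\tau>0$, so the integrated gap is at least $\tau\delta$. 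Without (a) and (b) the appeal to the sieve machinery does not go through; with them, your argument coincides with the paper's.
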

\medskip{}

In words, the proposition says that as we increase the number of training
datasets, NNE converges to $\mathbb{V}\mathrm{ar}(\boldsymbol{\theta}|\boldsymbol{m})$
or $\mathbb{C}\mathrm{ov}(\boldsymbol{\theta}|\boldsymbol{m})$ as
a function of $\boldsymbol{m}$, in addition to $\mathbb{E}(\boldsymbol{\theta}|\boldsymbol{m})$.
Compared to Proposition \ref{prop: mean}, the additional conditions
essentially require the underlying distribution $P(\boldsymbol{\theta}|\boldsymbol{m})$
to be bounded away from a degenerate one. Importantly, note that the
proposition does not make any distributional assumption on $P(\boldsymbol{\theta}|\boldsymbol{m})$.
In particular, it does not require $P(\boldsymbol{\theta}|\boldsymbol{m})$
to be normal. This point has been previously noted in Section \ref{subsec:front-accuracy}.

Finally, we turn to a more general loss function. The goal is to generalize
and provide some high-level intuition behind Proposition \ref{prop: mean}
and \ref{prop: cov}. To introduce this loss function, let $\phi(\cdot;\boldsymbol{\gamma})$
be a family of positive density functions for $\boldsymbol{\theta}$
parameterized by vector $\boldsymbol{\gamma}\in\Gamma$ for some space
$\Gamma$. For example, in case of the normal family, $\boldsymbol{\gamma}$
collects both the mean $\boldsymbol{\mu}$ and covariance matrix $\boldsymbol{V}$
and $\phi(\boldsymbol{\theta};\boldsymbol{\gamma})=|2\pi\boldsymbol{V}|^{-\frac{1}{2}}e^{-\frac{1}{2}(\boldsymbol{\theta}-\boldsymbol{\mu})'\boldsymbol{V}^{-1}(\boldsymbol{\theta}-\boldsymbol{\mu})}$.
Let $\boldsymbol{f}$ be a neural net that outputs the elements of
$\boldsymbol{\gamma}$. Now we may define the following loss function.

\begin{equation}
C_{3}(\boldsymbol{f})=L^{-1}\sum_{\ell=1}^{L}-\log\phi\left[\boldsymbol{\theta}^{(\ell)};\:\boldsymbol{f}(\boldsymbol{m}^{(\ell)})\right].\label{eq:front-loss-entropy}
\end{equation}
Note that $C_{3}$ reduces to $C_{2}$ if $\phi$ is the normal family.
It further reduces to $C_{1}$ if we impose an identity covariance
matrix.

To develop intuition, we consider the minimization of the limiting
version of $C_{3}$. As $L\rightarrow\infty$ , we would expect $C_{3}$
to approach $C_{3}^{*}$ defined as follows.
\[
C_{3}^{*}(\boldsymbol{f})=-\mathbb{E}\left[\log\phi\left(\boldsymbol{\theta};\:\boldsymbol{f}(\boldsymbol{m})\right)\right].
\]
The expectation on the right hand side is taken over $P(\boldsymbol{\theta},\boldsymbol{m})$.
As we know from the theory of MLE (e.g., \citealt{white1982maximum}),
minimization of $C_{3}^{*}$ amounts to the minimization of Kullback--Leibler
(KL) divergence. To be more precise, suppose that function $\boldsymbol{f}^{*}$
minimizes $C_{3}^{*}$. Then, for each possible value of $\boldsymbol{m}$,
$\boldsymbol{f}^{*}(\boldsymbol{m})$ should be equal to a value of
$\boldsymbol{\gamma}$ that makes the density $\phi(\boldsymbol{\theta};\boldsymbol{\gamma})$
closest to $P(\boldsymbol{\theta}|\boldsymbol{m})$ in terms of the
KL divergence. As it turns out, if $\phi$ is specified as the normal
family, the KL divergence is minimized when $\phi$ takes the same
mean and covariance matrix as $P(\boldsymbol{\theta}|\boldsymbol{m})$.
This result explains Proposition \ref{prop: cov}. If $\phi$ is specified
as the normal family with identity covariance matrix, then the KL
divergence is minimized when $\phi$ takes the same mean as $P(\boldsymbol{\theta}|\boldsymbol{m})$.
This result explains Proposition \ref{prop: mean}.

For subsequent applications in this paper, we will not be using loss
function $C_{3}$ -- loss functions $C_{1}$ and $C_{2}$ will be
sufficient. Nevertheless, for completeness we formalize the above
intuition with $C_{3}$ in a proposition below. For this proposition,
$\widehat{\boldsymbol{f}}_{L}$ denotes a neural net that minimizes
$C_{3}$ in ${\cal F}_{L}$. We use ${\cal KL}(\cdot\lVert\cdot)$
to denote the KL divergence.\smallskip{}

\begin{prop}
\label{prop: general} Suppose: (i) $\Theta$ is compact and $\Gamma$
is compact with a non-empty interior, (ii) $\boldsymbol{m}$ has a
compact support, (iii) $\phi$ is continuously differentiable on $\Theta\times\Gamma$,
(iv) $\mathbb{E}\left[\log\phi(\boldsymbol{\theta};\boldsymbol{\gamma})|\boldsymbol{m}\right]$
is continuous in $\boldsymbol{m}$ and $\boldsymbol{\gamma}$, (v)
for each $\boldsymbol{m}$, $\mathrm{argmin}_{\boldsymbol{\gamma}\in\Gamma}{\cal KL}\left[P(\boldsymbol{\theta}|\boldsymbol{m})\:\lVert\:\phi(\boldsymbol{\theta};\boldsymbol{\gamma})\right]$
is a single point in the interior of $\Gamma$. Let $\boldsymbol{f}^{*}(\boldsymbol{m})$
denote the minimizing point in (v). Then $\lVert\widehat{\boldsymbol{f}}_{L}-\boldsymbol{f}^{*}\rVert\rightarrow0$
in probability. $\square$\smallskip{}
\end{prop}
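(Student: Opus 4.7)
My plan is to cast the problem as a sieve M-estimator consistency problem and apply a standard consistency theorem (in the style of Chen 2007) whose hypotheses are essentially conditions (i)--(v) plus the approximation/complexity properties already encoded in $\{\mathcal{F}_L\}$. The main conceptual step is to identify the population objective and show that its unique minimizer, viewed as a function of $\boldsymbol{m}$, is exactly $\boldsymbol{f}^*$. The main technical step is uniform convergence of $C_3$ to its population analog over the sieve classes.

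First I would introduce the population criterion
\[
C_3^*(\boldsymbol{f}) \;=\; -\mathbb{E}\bigl[\log\phi(\boldsymbol{\theta};\boldsymbol{f}(\boldsymbol{m}))\bigr]
\;=\; -\mathbb{E}_{\boldsymbol{m}}\Bigl\{\mathbb{E}\bigl[\log\phi(\boldsymbol{\theta};\boldsymbol{\gamma})\,\big|\,\boldsymbol{m}\bigr]\Big|_{\boldsymbol{\gamma}=\boldsymbol{f}(\boldsymbol{m})}\Bigr\},
\]
and observe that for each fixed $\boldsymbol{m}$, minimizing $-\mathbb{E}[\log\phi(\boldsymbol{\theta};\boldsymbol{\gamma})\mid\boldsymbol{m}]$ over $\boldsymbol{\gamma}\in\Gamma$ differs from minimizing $\mathcal{KL}[P(\boldsymbol{\theta}|\boldsymbol{m})\,\|\,\phi(\boldsymbol{\theta};\boldsymbol{\gamma})]$ only by the additive constant $\mathbb{E}[\log P(\boldsymbol{\theta}|\boldsymbol{m})\mid\boldsymbol{m}]$ (which does not depend on $\boldsymbol{\gamma}$). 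Condition (v) therefore says this pointwise problem has a unique interior minimizer $\boldsymbol{f}^*(\boldsymbol{m})$, so $\boldsymbol{f}^*$ is the unique pointwise (and hence in $\lVert\cdot\rVert$) minimizer of $C_3^*$ over all measurable $\Gamma$-valued functions. The continuity in (iv) together with compactness in (i)--(ii) ensure $\boldsymbol{f}^*$ is continuous on the support of $\boldsymbol{m}$, which is the regularity needed to approximate it by elements of $\mathcal{F}_L$.

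Next I would verify the two standard hypotheses of sieve consistency. (a) \emph{Approximation:} there exists a sequence $\boldsymbol{f}_L^\dagger\in\mathcal{F}_L$ with $\lVert \boldsymbol{f}_L^\dagger - \boldsymbol{f}^*\rVert\to 0$ and $C_3^*(\boldsymbol{f}_L^\dagger)\to C_3^*(\boldsymbol{f}^*)$. This uses the universal-approximation property already built into the definition of the neural-net sieve $\{\mathcal{F}_L\}$ referenced in the appendix, combined with continuity of $\boldsymbol{f}^*$ (from (iv)) and continuity of the map $\boldsymbol{f}\mapsto C_3^*(\boldsymbol{f})$ (which follows from continuous differentiability of $\phi$ in (iii) and compactness of $\Theta\times\Gamma$, giving a uniform Lipschitz bound on $\log\phi$). (b) \emph{Uniform convergence:} $\sup_{\boldsymbol{f}\in\mathcal{F}_L}|C_3(\boldsymbol{f})-C_3^*(\boldsymbol{f})|\to 0$ in probability. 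Since $\boldsymbol{f}(\boldsymbol{m})\in\Gamma$ for all $\boldsymbol{f}\in\mathcal{F}_L$ and $\log\phi$ is bounded and Lipschitz on $\Theta\times\Gamma$ by (i)--(iii), this reduces to a uniform law of large numbers over a Donsker-type class whose complexity (controlled by the number of hidden nodes) grows slowly enough with $L$; this is precisely what the construction of $\mathcal{F}_L$ in the appendix is designed to guarantee, and it is the technical step I expect to be the most delicate.

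Given (a) and (b), the standard sieve-extremum argument finishes the proof: because $\widehat{\boldsymbol{f}}_L$ minimizes $C_3$ over $\mathcal{F}_L$,
\[
C_3^*(\widehat{\boldsymbol{f}}_L) \;\le\; C_3(\widehat{\boldsymbol{f}}_L) + o_p(1) \;\le\; C_3(\boldsymbol{f}_L^\dagger) + o_p(1) \;=\; C_3^*(\boldsymbol{f}_L^\dagger) + o_p(1) \;\to\; C_3^*(\boldsymbol{f}^*),
\]
so $C_3^*(\widehat{\boldsymbol{f}}_L)\to C_3^*(\boldsymbol{f}^*)$ in probability. Combining this with a well-separation argument for $\boldsymbol{f}^*$ (any function bounded away from $\boldsymbol{f}^*$ in $\lVert\cdot\rVert$ has strictly larger $C_3^*$, which follows from the uniqueness in (v) and the continuity/compactness conditions) delivers $\lVert\widehat{\boldsymbol{f}}_L-\boldsymbol{f}^*\rVert\to 0$ in probability. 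Propositions \ref{prop: mean} and \ref{prop: cov} then follow as the two special cases where $\phi$ is normal with identity covariance and normal (diagonal or full) covariance, respectively; the main obstacle throughout is controlling the sieve complexity so that uniform convergence holds while $\mathcal{F}_L$ is rich enough to approximate $\boldsymbol{f}^*$.
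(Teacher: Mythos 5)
Your proposal is correct and follows essentially the same route as the paper: the paper proves this proposition by applying its Lemma \ref{lem:sieve} (a Chen-2007-style sieve consistency result packaging exactly your approximation and uniform-convergence steps), identifies $\boldsymbol{f}^*(\boldsymbol{m})$ as the pointwise minimizer of $\mathbb{E}[-\log\phi(\boldsymbol{\theta};\boldsymbol{\gamma})\mid\boldsymbol{m}]$ via the same KL-divergence decomposition, and obtains continuity of $\boldsymbol{f}^*$ from conditions (iv)--(v) via Berge's theorem. The only place you compress what the paper works out in detail is the well-separation step, which the paper establishes by applying Berge's theorem a second time to the gap function $d(\boldsymbol{m})=\inf_{\lVert\boldsymbol{\gamma}-\boldsymbol{f}^*(\boldsymbol{m})\rVert\geq\epsilon/2}Q(\boldsymbol{\gamma},\boldsymbol{m})-Q(\boldsymbol{f}^*(\boldsymbol{m}),\boldsymbol{m})$, taking its infimum over the compact support of $\boldsymbol{m}$, and then bounding $P(\{\boldsymbol{m}:\lVert\boldsymbol{f}(\boldsymbol{m})-\boldsymbol{f}^*(\boldsymbol{m})\rVert\geq\epsilon/2\})$ from below -- but you name precisely the right ingredients for it.
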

In words, as the number of training datasets increases, the NNE will
converge to a function that brings the parameterized family $\phi$
as close to $P(\boldsymbol{\theta}|\boldsymbol{m})$ as possible.
The key assumptions are sufficient levels of continuity and a uniqueness
condition. The uniqueness condition (i.e., assumption (v)) puts a
restriction on $\phi$, requiring that there are not multiple members
of $\phi$ being equally closest to $P(\boldsymbol{\theta}|\boldsymbol{m})$.
We note that this uniqueness condition is not imposed on the neural
net. A neural net is usually over-parameterized by a large number
of weights, and there are always multiple combinations of weights
that equally minimize the loss function. We only require $\widehat{\boldsymbol{f}}_{L}$
to take one of these weight combinations that minimize the loss function.\footnote{We thank a referee for pointing this out.}

\section{Illustration by a Simple Example\label{sec:AR1}}

In this section, we illustrate NNE through a simple AR(1) model. The
simplicity of the model permits us to graphically present NNE and
compare it to SMM. We note that for this simple econometric model,
closed-form expressions of likelihood and moments are available. Thus,
simulation-based estimators like NNE do not really offer computational
gains. However, our goal here is not to compare computational performances,
but to illustrate how NNE works and develop intuitions that extend
to more complex structural econometric models. Readers more interested
in NNE's performance in a structural application may skip this section
and go to Section \ref{sec:search}.

\subsection{Model setup and estimation\label{subsec:AR1-setup}}

We consider a simple AR(1) model. Let $\varepsilon_{i}\sim{\cal N}(0,1)$
and $\beta\in[0,1)$,
\begin{equation}
y_{i}=\beta y_{i-1}+\varepsilon_{i}.\label{eq:AR1}
\end{equation}
The model has a single parameter $\beta$. As we will see, this simple
setting makes it possible for us to graphically illustrate estimation
procedures. We assume that the initial value $y_{1}$ is drawn from
the stationary distribution, ${\cal N}(0,1/(1-\beta^{2}))$. We set
the sample size $n=100$, so that a dataset is $\{y_{i}\}_{i=1}^{100}$.
Below, we describe how NNE and SMM estimate $\beta$. Both NNE and
SMM are moment-based methods and we compare the two methods to highlight
their similarities and differences. 

\begin{algorithm}
\caption{\label{alg:NNE-AR1} NNE for AR(1)}
\onehalfspacing

\medskip{}
Suppose we have a dataset $\{y_{i}\}_{i=1}^{n}$, where $y_{i}$ follows
the AR(1) process. NNE uses the steps below to obtain an estimate
for $\beta$.
\begin{enumerate}
\item Given an index $\ell$, draw $\beta^{(\ell)}$ uniformly from an interval
such as $[0,0.9]$. Draw $y_{1}^{(\ell)}\sim{\cal N}[0,1/(1-(\beta^{(\ell)})^{2})]$.
Draw $\varepsilon_{i}^{(\ell)}\sim{\cal N}(0,1)$ and compute $y_{i}^{(\ell)}=\beta^{(\ell)}\cdot y_{i-1}^{(\ell)}+\varepsilon_{i}^{(\ell)}$
for $i=2,...,n$.
\item Compute the moment $m^{(\ell)}=\frac{1}{n-1}\sum_{i=2}^{n}y_{i}^{(\ell)}y_{i-1}^{(\ell)}$
.
\item Repeat the above for all $\ell\in\{1,...,L^{*}\}$ to obtain the set
$\{\beta^{(\ell)},m^{(\ell)}\}_{\ell=1}^{L^{*}}$. Split the set 90/10
for training and validation. Train a neural net to predict $\beta^{(\ell)}$
from $m^{(\ell)}$.
\item Apply the neural net to the moment of the original dataset, $m=\frac{1}{n-1}\sum_{i=2}^{n}y_{i}y_{i-1}$,
to obtain the estimate $\widehat{\beta}$.
\end{enumerate}
\end{algorithm}

For NNE, we list the steps in Algorithm \ref{alg:NNE-AR1}. In particular,
note here we use a single data moment, $m=\frac{1}{n-1}\sum_{i=2}^{n}y_{i}y_{i-1}$,
which is the covariance between $y_{i}$ and its own lag. This single-moment
case allows straightforward graphical presentation, which we will
show in Section \ref{subsec:AR1-compare}. We will examine using multiple
moments in Section \ref{subsec:AR1-moments}.

For SMM, we start with GMM using the same moment used by NNE above.
By equation (\ref{eq:AR1}), we have
\begin{equation}
\mathbb{E}(y_{i}y_{i-1})=\underbrace{\beta/(1-\beta^{2})}_{g(\beta)}.\label{eq:AR1-moment}
\end{equation}
We denote the right hand side as function $g(\beta)$. Because $g(\beta)$
is strictly increasing over $\beta\in[0,1)$, one could solve for
the precise value of $\beta$ if the value of $\mathbb{E}(y_{i}y_{i-1})$
were known. However, in estimation, we do not know $\mathbb{E}(y_{i}y_{i-1})$
but only a sample realization of it. This sample realization is $m=\frac{1}{n-1}\sum_{i=2}^{n}y_{i}y_{i-1}$.
GMM estimates $\beta$ by solving:
\[
g(\beta)-m=0.
\]

We know the closed-form expression for $g(\beta)$ in this simple
AR(1) model. However, in general structural estimation, closed-form
expressions of moments are often unavailable. To mirror this more
general scenario, suppose for a moment that we do not know the closed
form of $g(\beta)$ and have to evaluate it via simulations. For any
given $\beta$ we use the AR(1) model to simulate $R$ copies of the
data: $\{y_{i}^{(r)}\}_{i=1}^{n}$, $r=1,...,R$. Then, we approximate
$g(\beta)$ by $\widehat{g}(\beta)=\frac{1}{R}\sum_{r=1}^{R}(\frac{1}{n-1}\sum_{i=2}^{n}y_{i}^{(r)}y_{i-1}^{(r)})$.
A larger $R$ makes the approximation more accurate but also more
costly. SMM estimates $\beta$ by solving
\[
\widehat{g}(\beta)-m=0.
\]
Next, we illustrate NNE and SMM graphically to compare the two approaches.

\subsection{Graphic illustration of estimation\label{subsec:AR1-compare}}

Figure \ref{fig:AR1-graph} uses six plots to graphically illustrate
SMM and NNE in the estimation of AR(1). Below we discuss the plots
in order. The goal is to develop a better understanding of how NNE
approaches parameter estimation.

\begin{figure}
\begin{centering}
\begin{minipage}[t]{0.4\columnwidth}%
\begin{center}
\includegraphics[scale=0.55]{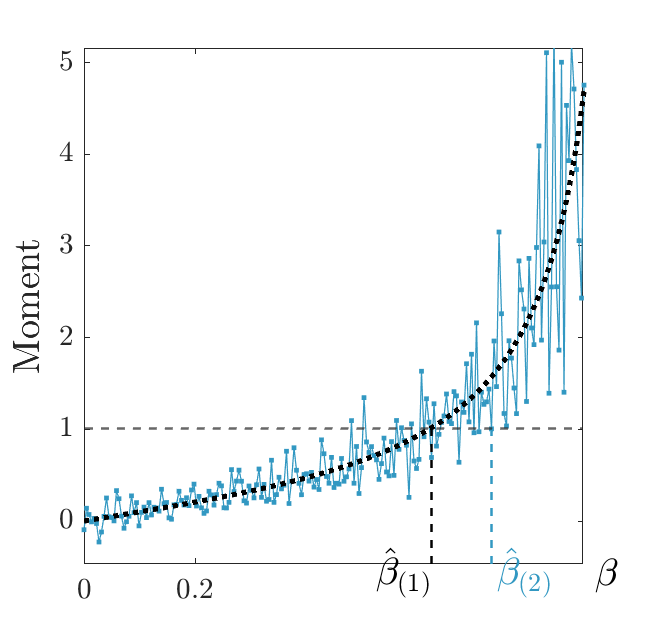}
\par\end{center}
{\footnotesize\vspace{-2em}}\textbf{(a)}{\footnotesize{} Black dotted
curve shows $\mathbb{E}(y_{t}y_{t-1})$ as a function of $\beta$.
Jagged blue curve shows the approximation by SMM with $R=1$.}{\footnotesize\par}%
\end{minipage}\hspace{5em}%
\begin{minipage}[t]{0.4\columnwidth}%
\begin{center}
\includegraphics[scale=0.55]{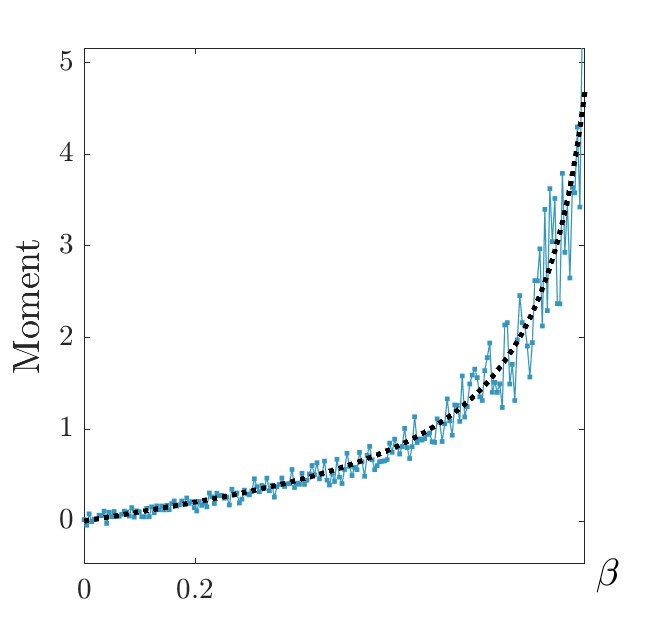}
\par\end{center}
{\footnotesize\vspace{-2em}}\textbf{(b)}{\footnotesize{} Jagged blue
curve shows the approximation by SMM with $R=5$.}{\footnotesize\par}%
\end{minipage}
\par\end{centering}
\vspace{2em}

\begin{centering}
\begin{minipage}[t]{0.4\columnwidth}%
\begin{center}
\includegraphics[scale=0.55]{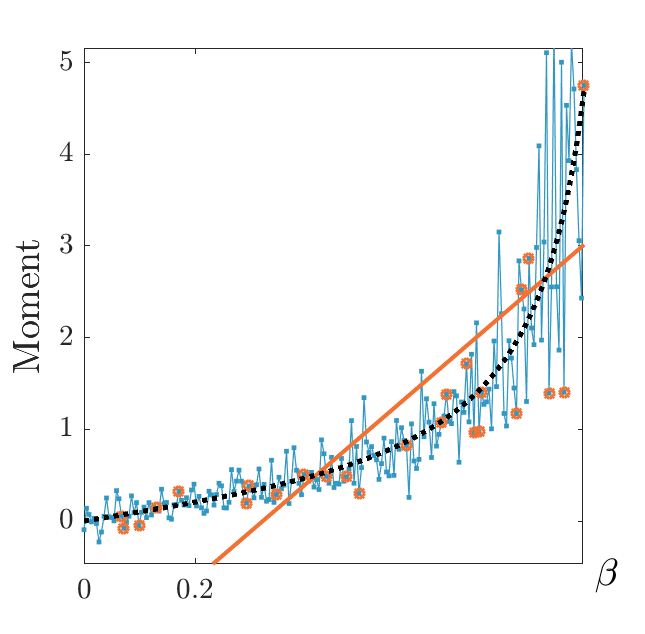}
\par\end{center}
{\footnotesize\vspace{-2em}}\textbf{(c)}{\footnotesize{} A linear fit
with $L=25$ training points sampled from jagged blue curve. These
training points are shown as red circles.}{\footnotesize\par}%
\end{minipage} \hspace{5em}%
\begin{minipage}[t]{0.4\columnwidth}%
\begin{center}
\includegraphics[scale=0.55]{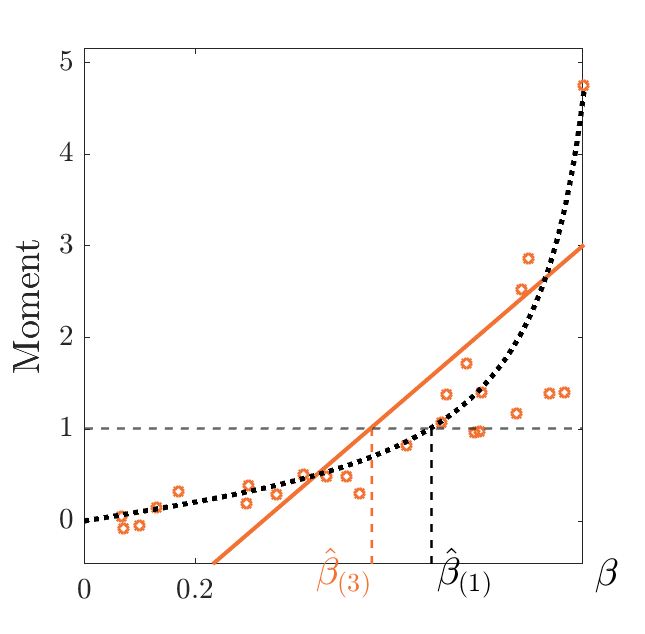}
\par\end{center}
{\footnotesize\vspace{-2em}}\textbf{(d)}{\footnotesize{} Estimation
of $\beta$ based on the linear fit.}{\footnotesize\par}%
\end{minipage}
\par\end{centering}
\vspace{2em}

\begin{centering}
\begin{minipage}[t]{0.4\columnwidth}%
\begin{center}
\includegraphics[scale=0.55]{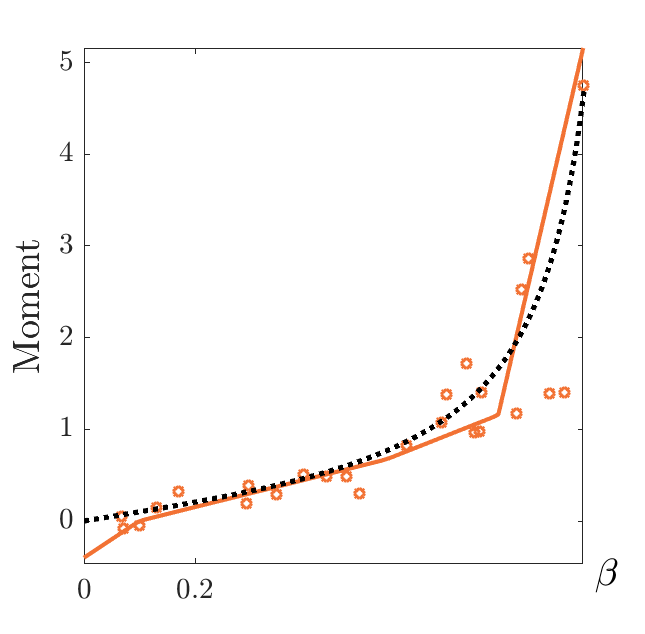}
\par\end{center}
{\footnotesize\vspace{-2em}}\textbf{(e)}{\footnotesize{} A fit by neural
net with ReLu activation, using $L=25$ training points.}{\footnotesize\par}%
\end{minipage}\hspace{5em}%
\begin{minipage}[t]{0.4\columnwidth}%
\begin{center}
\includegraphics[scale=0.55]{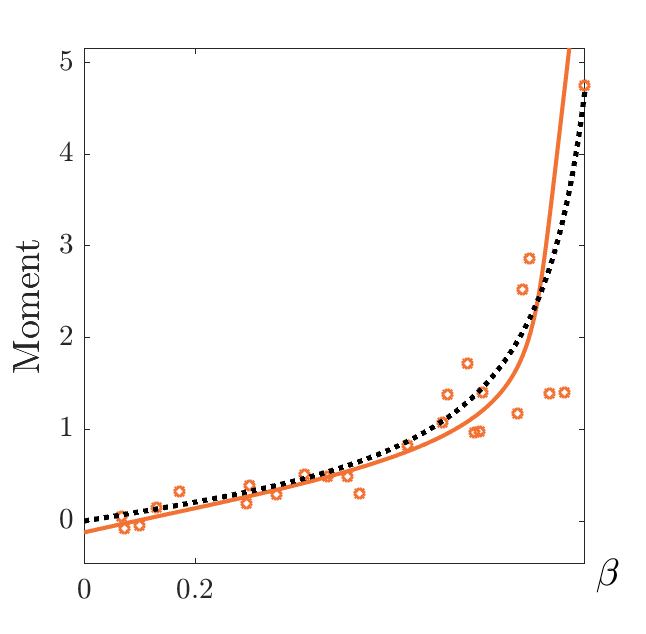}
\par\end{center}
{\footnotesize\vspace{-2em}}\textbf{(f)}{\footnotesize{} A fit by neural
net with sigmoid activation, using $L=25$ training points.}{\footnotesize\par}%
\end{minipage}
\par\end{centering}
\bigskip{}
\caption{Estimation of AR(1)\label{fig:AR1-graph}}
\end{figure}

Plot (a) and (b) illustrate SMM. The dotted black curve in plot (a)
shows $g(\beta)$ as defined in equation (\ref{eq:AR1-moment}). As
a hypothetical example, the horizontal dashed line shows the value
of $m=\frac{1}{n-1}\sum_{i=2}^{n}y_{i}y_{i-1}$ in the real data.
Under this example, GMM's estimate for $\beta$ is given by the intersection
between this horizontal dashed line and $g(\beta)$, indicated as
$\widehat{\beta}_{(1)}$ on the axis. The jagged blue curve shows
$\widehat{g}(\beta)$ evaluated with $R=1$. The SMM's estimate for
$\beta$ is given by the point on the jagged blue curve that is closest
to the horizontal dashed line, marked as $\widehat{\beta}_{(2)}$
on the axis. The gap between $\widehat{\beta}_{(1)}$ and $\widehat{\beta}_{(2)}$
is the estimation error due to simulations. This error can be reduced
by a larger $R$ at the expense of a higher computational cost. As
an example, plot (b) shows $\widehat{g}(\beta)$ evaluated with $R=5$,
which we see approximates $g(\beta)$ more closely.

Plot (c) and (d) illustrate a simplified version of NNE that replaces
neural net with a linear function. The simplification helps to demonstrate
the core idea of NNE. We use $L=25$ training points: $\{\beta^{(\ell)},m^{(\ell)}\}_{\ell=1}^{25}$.
In plot (c), these training points are shown as red circles and are
randomly sampled from the same jagged blue curve $\widehat{g}(\beta)$
as in plot (a). The red line is a linear fit of $\beta^{(\ell)}$
onto $m^{(\ell)}$. Plot (d) reproduces the horizontal dashed line
that represents a hypothetical value of $m$ in real data. The NNE's
estimate for $\beta$ is at the intersection between the red line
and horizontal dashed line, marked as $\widehat{\beta}_{(3)}$ on
the axis. The difference between $\widehat{\beta}_{(1)}$ and $\widehat{\beta}_{(3)}$
is a \textit{learning} error, caused by us using a parameterized function
(a linear fit in this case) to learn $g(\beta)$.

Plot (e) and (f) illustrate NNE. We use the same $L=25$ training
points as above. Plot (e) fits the points by a neural net with 4 hidden
neurons and ReLu activation function. We see a much better fit than
the linear function. The learning error is not marked out in the plot,
but one can see that it will be smaller than in plot (d). Aside from
ReLu, another popular activation function is the sigmoid. Plot (f)
uses a neural net with 4 hidden neurons and sigmoid activation function.
The fit is on par with that in plot (e).

Overall, plots (c) - (f) demonstrate the key idea of NNE that it attempts
a \textit{functional} approximation of the relation between moment
and parameter. Under this idea, an important detail which we have
not emphasized above is that NNE approximates the inverse function
$g^{-1}$ instead of $g$. That is, the neural net predicts $\beta^{(\ell)}$
from $m^{(\ell)}$ (not $m^{(\ell)}$ from $\beta^{(\ell)}$). The
purpose is to facilitate estimation. Specifically, suppose $\widehat{f}$
approximates $g^{-1}$, then we may estimate $\beta$ directly as
$\widehat{f}(m)$. However, suppose $\widehat{f}$ approximated $g$
instead, then we would have to solve $\widehat{f}(\beta)=m$ to obtain
an estimate of $\beta$. This problem will complicate further when
we use multiple moments because $\widehat{\boldsymbol{f}}(\beta)=\boldsymbol{m}$
typically has no exact solution.

To summarize, Figure \ref{fig:AR1-graph} shows that SMM and NNE take
different approaches and subsequently incur different types of estimation
errors. While both rely on approximations of the mapping between moment
and parameter, SMM makes a \textit{point-by-point approximation from
parameter to moment} whereas NNE makes a \textit{functional approximation
from moment to parameter}. It is not difficult to see that in cases
where the moment evaluation has sizable simulation errors (so that
the point-by-point approximation is noisy), the functional approach
by NNE can have advantages. A similar comparison applies to NNE and
indirect inference (see Appendix \ref{subsec:indirect-infer}).

\subsection{Robustness to redundant moments\label{subsec:AR1-moments}}

So far, we have focused on the estimation relying a single moment
for the purpose of visual illustration. Next, we move to estimation
using multiple moments to examine NNE's robustness to redundant moments
in the context of the AR(1) model. As discussed in Section \ref{sec:front},
it is known that redundant moments can introduce substantial biases
to SMM in finite samples. In comparison, NNE should be less affected
by redundant moments thanks to its learning mechanism.

We use Monte Carlo experiments to compare SMM and NNE with multiple
moments. The simplicity of AR(1) allows us to choose $R=\infty$ for
SMM, which is just GMM. This choice removes simulation errors and
thus helps to isolate the impact of redundant moments. For NNE, we
use $L=1000$. Our Monte Carlo exercise sets $\beta=0.6$ and generates
1000 datasets using the AR(1) model. We estimate $\beta$ from each
of these Monte Carlo datasets.

Table \ref{tab:AR1-moments} shows the results under six different
moment specifications. Row 1 is our benchmark relying on the single
moment $\frac{1}{n-1}\sum_{i=2}^{n}y_{i}y_{i-1}$. Row 2 adds the
moment $\frac{1}{n}\sum_{i=1}^{n}y_{i}^{2}$. Note that $\mathbb{E}(y_{i}^{2})=1/(1-\beta^{2})$
so the moment is informative about $\beta$. We see a slight reduction
in the RMSE for GMM, suggesting that in this case the information
gain from the added moment overcomes the potential bias increase.
We also see a slight reduction in the RMSE for NNE, showing that NNE
is able to capitalize on the information from the added moment too.

\begin{table}
\caption{Estimation of AR(1) with Different Moments\label{tab:AR1-moments}}

\begin{centering}
\begin{tabular}{lcrrcrr}
\hline 
\multicolumn{1}{l}{Moments} &  & \multicolumn{2}{c}{SMM ($R=\infty$)} &  & \multicolumn{2}{c}{NNE ($L=1e3$)}\tabularnewline
 &  & Bias & RMSE &  & Bias & RMSE\tabularnewline
\hline 
1) $y_{i}y_{i-1}$ &  & -0.019 {\footnotesize (.003)} & 0.093 {\footnotesize (.002)} &  & -0.017 {\footnotesize (.003)} & 0.091 {\footnotesize (.002)}\tabularnewline
2) $y_{i}y_{i-1}$, $y_{i}^{2}$ &  & -0.015 {\footnotesize (.003)} & 0.088 {\footnotesize (.002)} &  & -0.013 {\footnotesize (.003)} & 0.086 {\footnotesize (.002)}\tabularnewline
3) $y_{i}y_{i-k},k=1,2,3$ &  & -0.026 {\footnotesize (.003)} & 0.099 {\footnotesize (.003)} &  & -0.015 {\footnotesize (.003)} & 0.091 {\footnotesize (.002)}\tabularnewline
4) $y_{i}y_{i-k},k=1,2,...,10$ &  & -0.043 {\footnotesize (.003)} & 0.111 {\footnotesize (.003)} &  & -0.014 {\footnotesize (.003)} & 0.095{\footnotesize{} (.002)}\tabularnewline
5) $y_{i}y_{i-1}$, $y_{i}^{2}y_{i-1}$, $y_{i}y_{i-1}^{2}$ &  & -0.075 {\footnotesize (.003)} & 0.130 {\footnotesize (.003)} &  & -0.014 {\footnotesize (.003)} & 0.095 {\footnotesize (.002)}\tabularnewline
6) $y_{i}y_{i-k}$, $y_{i}^{2}y_{i-k}$, $y_{i}y_{i-k}^{2}$, $k=1,2,3$ &  & -0.086 {\footnotesize (.003)} & 0.135 {\footnotesize (.003)} &  & -0.013 {\footnotesize (.003)} & 0.096 {\footnotesize (.002)}\tabularnewline
\hline 
\end{tabular}{\footnotesize\medskip{}
}{\footnotesize\par}
\par\end{centering}
{\footnotesize Notes: SMM ($R=\infty$) is the same as GMM and follows
the usual 2-step procedure. The 1st step constructs optimal weighting
matrix accounting for serial correlations. NNE ($L=1e3$) uses a shallow
neural net of 32 hidden nodes. Reported numbers are based on 1000
Monte Carlo datasets. Numbers in parentheses are standard errors.}{\footnotesize\par}
\end{table}

\begin{figure}
\begin{minipage}[t]{0.4\columnwidth}%
\begin{center}
\includegraphics[scale=0.55]{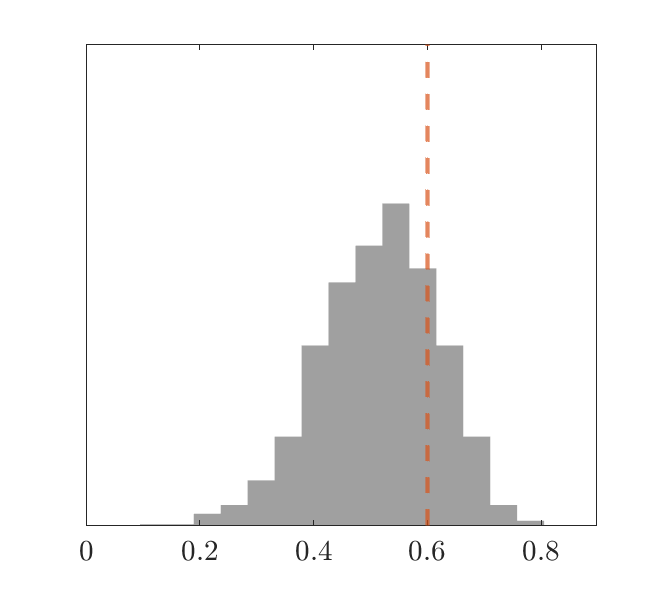}
\par\end{center}
\begin{center}
{\footnotesize\vspace{-2em}}{\small SMM with $R=\infty$}{\small\par}
\par\end{center}%
\end{minipage}\hspace{5em}%
\begin{minipage}[t]{0.4\columnwidth}%
\begin{center}
\includegraphics[scale=0.55]{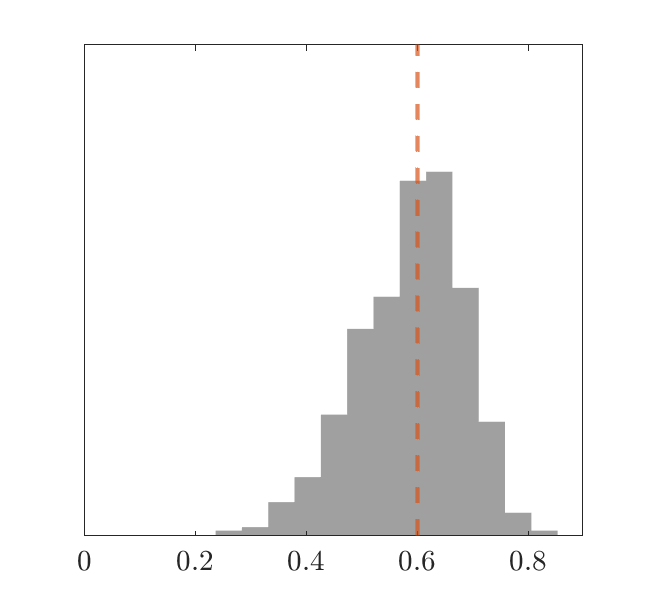}
\par\end{center}
\begin{center}
{\footnotesize\vspace{-2em}}{\small NNE with $L=1e3$}{\small\par}
\par\end{center}%
\end{minipage}\bigskip{}

{\footnotesize Notes: Histograms of the estimates of $\beta$ across
1000 Monte Carlo datasets. The vertical dashed red lines display the
true value of $\beta$. The moment specification follows row 6 in
Table \ref{tab:AR1-moments}.}{\footnotesize\par}

\medskip{}

\caption{Distributions of Estimates for AR(1) with Redundant Moments\label{fig:AR1-hist-moments}}
\end{figure}

Row 3 and 4 add moments involving further lags to the benchmark case.
Note that $\mathbb{E}(y_{i}y_{i-k})=\beta^{k}/(1-\beta^{2})$, so
each of these added moments by itself is informative about $\beta$.
However, moments with more lags (i.e., larger $k$) may become almost
redundant after we include moments with fewer lags (i.e., smaller
$k$). In both of these rows, GMM has larger bias and RMSE than the
benchmark case (row 1). In comparison, NNE's performance remains close
to the benchmark case.

Row 5 and 6 add higher-order moments to the benchmark case. While
these moments may appear informative at glance, a bit of algebra will
show that $\mathbb{E}(y_{i}^{2}y_{i-k})=0$ and $\mathbb{E}(y_{i}y_{i-k}^{2})=0$
regardless of $\beta$. Therefore, these moments are irrelevant and
redundant. In both rows, GMM now has much larger bias and RMSE than
the benchmark case. In comparison, NNE's performance remains close
to the benchmark case. For a visual comparison, Figure \ref{fig:AR1-hist-moments}
plots the histograms of the estimates using the nine moments in row
6. The bias of GMM is clear. 

To summarize, Table \ref{tab:AR1-moments} shows that NNE is less
sensitive to redundant moments, consistent with our discussion in
Section \ref{sec:front}. Note, in particular, that the simplicity
of AR(1) allows us to derive closed-form expressions of moments and
see whether a moment may be redundant. However, doing so more generally
in structural estimation is not feasible. In applications of GMM/SMM,
researchers often need to carefully select moments rather than just
``throwing in'' moments that might or might not be useful for recovering
parameters. In this regard, the robustness of NNE lessens the burden
on researchers to select moments. 

\section{Application to Consumer Sequential Search\label{sec:search}}

This section studies an application to the sequential search model
(\citealp{weitzman1979optimal}, \citealt{ursu2018power}). The model
has attracted growing interests as data on consumer online purchase
journey become increasingly available (i.e., what alternative options
consumers viewed before purchases). The model is typically estimated
by SMLE and the estimation is known to be difficult. The main challenge
comes from the large number of possible combinations of search and
purchase choices. The challenge makes likelihood evaluation difficult
and allows potential gains from using NNE.

Below, we start by briefly describing the search model and how to
estimate it with NNE and SMLE. Then, we present Monte Carlo studies
of both estimation approaches. We focus on estimation accuracy as
well as computational costs. Next, we estimate the search model using
real data. With real data, we can no longer calculate estimation accuracy
(because the true value of search model parameter is unknown). So
instead, we focus on model fit. Finally, we examine how NNE's performance
varies with the choice of moments as well as data size.

\subsection{Search model\label{subsec:search-model}}

We describe the search model in the context of hotel bookings online,
largely following \citet{ursu2018power}. A consumer is shown a list
of $J$ hotel options. She can make a search by clicking into a hotel
option. After searches, she may make a purchase by booking one of
the hotels. Let consumer $i$'s utility from booking hotel $j$ be
\[
u_{ij}=\underbrace{\boldsymbol{z}_{ij}'\boldsymbol{\beta}}_{v_{ij}}+\varepsilon_{ij},
\]
where vector $\boldsymbol{z}_{ij}$ collects hotel attributes, including
star rating, review score, location score, log price, whether it belongs
to a chain, and whether a promotion is offered. We use the log price
because price has a skewed distribution. Consumer $i$ observes $v_{ij}\equiv\boldsymbol{z}_{ij}'\boldsymbol{\beta}$
before searching. However, she observes $\varepsilon_{ij}$ only after
searching hotel $j$. An outside option where she makes no purchase
offers an utility of $u_{i0}=\eta+\varepsilon_{i0}$. This outside
utility is known to her without search. It is assumed that $\varepsilon_{ij}\sim{\cal N}(0,1)$.
The distribution of $\varepsilon_{ij}$ is assumed known to consumers
before search.

Consumers incur costs for searching. The search cost may vary by the
ranking position of a hotel in the hotel list. Let $s_{ij}\in\{1,2,...,J\}$
be the ranking of hotel $j$. We specify the cost of searching hotel
$j$ as
\[
c_{ij}=e^{\delta_{0}+\delta_{1}\log(s_{ij})}.
\]
The exponential function ensures that $c_{ij}$ is always positive.
Parameter $\delta_{1}$ captures the effect of ranking. We take the
log of $s_{ij}$ because doing so improves model fit. One convention
in the literature is to allow one ``free search,'' which we follow.
That is, consumers do not incur search cost for her first searched
hotel. The purpose is to accommodate data where observations are recorded
only if consumer has made at least one search.

The optimal search and purchase strategy is characterized in \citet{weitzman1979optimal}.
Briefly speaking, the strategy associates each product option $j$
with a ``reservation utility,'' which is a function of $v_{ij}$
and $c_{ij}$. The consumer orders her searches by reservation utility,
starting with the option with the highest reservation utility. The
consumer stops searching when the highest utility from the searched
options is higher than the maximum of the reservation utilities in
the remaining options.

In data, we observe each consumer's purchase and searches. Let $\boldsymbol{y}_{ij}$
collect two dummy variables $y_{ij}^{(\text{buy})}$ and $y_{ij}^{(\text{search})}$,
where $y_{ij}^{(\text{buy})}=1$ if option $j$ is purchased and $y_{ij}^{(\text{search})}=1$
if option $j$ is searched. Let $\boldsymbol{x}_{ij}=(\boldsymbol{z}_{ij},\log s_{ij})'$.
So the data can be denoted as $\{\{\boldsymbol{y}_{ij},\boldsymbol{x}_{ij}\}_{j=1}^{J}\}_{i=1}^{n}$.
The parameter vector $\boldsymbol{\theta}$ includes $\boldsymbol{\beta}$,
$\eta$, $\delta_{0}$, and $\delta_{1}$. A few key statistics will
be useful in our exposition: buy rate, number of searches per consumer,
and search ranking. The buy rate is the fraction of consumers who
made purchases and the search ranking is the average ranking of searched
options.

\subsection{Estimation of search model \label{subsec:search-estimation}}

We describe how NNE and SMLE estimate the consumer search model.

\subsubsection*{NNE for search model}

Algorithm \ref{alg:NNE-search} lists the main steps of NNE applied
to the consumer search model. Below, we give further details of the
algorithm (on the parameter space $\Theta$, moment choice $\boldsymbol{m}$,
and neural net training) for readers who would like to replicate our
results. Readers who are more interested in the results may skip these
details without loss of continuity.

\begin{algorithm}
\caption{\label{alg:NNE-search} NNE for consumer search}
\onehalfspacing

\medskip{}
Suppose we have a dataset $\{\{\boldsymbol{y}_{ij},\boldsymbol{x}_{ij}\}_{j=1}^{J}\}_{i=1}^{n}$,
where $\boldsymbol{y}_{ij}$ collects search and purchase choices
and $\boldsymbol{x}_{ij}$ collects hotel attributes and rankings.
We use the following steps to estimate the search model.
\begin{enumerate}
\item Given an index $\ell$, draw $\boldsymbol{\theta}^{(\ell)}$ uniformly
from the parameter space $\Theta$. Given this parameter draw and
the observed $\boldsymbol{x}_{ij}$ for all $i$ and $j$, simulate
the search and purchase outcomes $\boldsymbol{y}_{ij}^{(\ell)}$ for
all $i$ and $j$.
\item Compute the moments $\boldsymbol{m}^{(\ell)}$ that summarize $\{\{\boldsymbol{y}_{ij}^{(\ell)},\boldsymbol{x}_{ij}\}_{j=1}^{J}\}_{i=1}^{n}$.
\item Repeat the above for each $\ell\in\{1,...,L^{*}\}$ to obtain the
set $\{\boldsymbol{\theta}^{(\ell)},\boldsymbol{m}^{(\ell)}\}_{\ell=1}^{L^{*}}$.
Split the set 90/10 for training and validation. Train a neural net
to predict $\boldsymbol{\theta}^{(\ell)}$ from $\boldsymbol{m}^{(\ell)}$.
\item Apply the neural net to the moments of the original dataset to obtain
$\widehat{\boldsymbol{\theta}}$.
\end{enumerate}
\end{algorithm}

We specify the parameter space $\Theta$ as follows: $\eta\in[2,5]$,
$\delta_{0}\in[-5,-2]$, $\delta_{1}\in[-0.25,0.25]$, and $\beta_{k}\in[-0.5,0.5]$
for all $k$. This $\Theta$ covers wide ranges of key data statistics.
In a typical training set, the buy rate ranges from 0 to 1, number
of searches per consumer ranges from 1 to 23.3, and search ranking
ranges from 1.5 to 25.5. In addition, the results in this section
are not sensitive to the specification of $\Theta$. As a related
issue, in Appendix \ref{subsec:range} we examine the behavior of
NNE when $\Theta$ fails to contain the true $\boldsymbol{\theta}$
and suggest a way to check whether $\Theta$ contains the true $\boldsymbol{\theta}$.

We specify the moments $\boldsymbol{m}$ as follows. First, we include:
(i) the mean of $\boldsymbol{y}_{ij}$ and (ii) the cross-covariance
matrix between $\boldsymbol{y}_{ij}$ and $\boldsymbol{x}_{ij}$.\footnote{We do not include the covariance matrix of $\boldsymbol{y}_{ij}$
because it is implied by the mean of $\boldsymbol{y}_{ij}$. To see
it, note that $\boldsymbol{y}_{ij}$ consists of two dummy variables
and the purchase dummy can be 1 only when the search dummy is 1.} We construct additional moments by aggregating $\boldsymbol{y}_{ij}$
to the consumer level. Specifically, let $\widetilde{\boldsymbol{y}}_{i}$
collect three variables: whether consumer $i$ made any non-free search,
how many searches the consumer made, and whether the consumer made
a purchase. We add to $\boldsymbol{m}$: (iii) the mean of $\widetilde{\boldsymbol{y}}_{i}$,
(iv) the cross-covariance matrix between $\widetilde{\boldsymbol{y}}_{i}$
and $J^{-1}\sum_{j=1}^{J}\boldsymbol{x}_{ij}$, and (v) the covariance
matrix of $\widetilde{\boldsymbol{y}}_{i}$. In total, there are $2+14+3+21+6=46$
moments in $\boldsymbol{m}$. Alternative specifications of $\boldsymbol{m}$
are examined in Section \ref{subsec:search-other}.

The NNE in this section uses loss function $C_{2}$ with a diagonal
$\boldsymbol{V}$ (see Section \ref{subsec:front-accuracy}). Unless
stated otherwise, we use $L^{*}=1e4$ training and validation datasets.
The neural net uses ReLu activation function (sigmoid gives qualitatively
the same results). We use 64 hidden nodes (it is a convention to use
powers of 2). This number of hidden nodes is chosen by the validation
loss. The estimation accuracy of NNE is not sensitive to the number
of hidden nodes, so a coarse search (e.g., a few powers of 2) should
be sufficient. See Appendix \ref{subsec:config} for more details.

A final detail on the implementation of NNE is that we trim the training
and validation sets by excluding datasets where: nobody makes a purchase,
everyone makes a purchase, nobody makes non-free searches, or everyone
searches all options. These ``corner cases'' lack identification
of $\boldsymbol{\theta}$. Although trimming reduces the training
set size, we find it slightly improves the estimation accuracy of
NNE. Intuitively, this is because the training can focus on realistic
instead of corner cases.

\subsubsection*{SMLE for search model}

The prevailing approach to estimate search model is SMLE. The main
challenge of estimation lies in the enormous number (in millions/billions)
of possible search and purchase choices by a consumer. To feasibly
evaluate the likelihood over these choices, the literature applies
smoothing on the likelihood function. Yet, choosing the right amount
of smoothing is difficult (\citealt{geweke2001computationally}, \citealt{ursu2018power}).
Below, we give more details on SMLE. Readers who are more interested
in the results may skip these details without loss of continuity.

The most straightforward approach to evaluate likelihood by simulations
is the accept-reject method. Specifically, fix a consumer $i$. We
simulate $R$ draws of $\{\varepsilon_{ij}\}_{j=0}^{J}$ and count
the proportion of draws under which the model produces the same outcome
as observed in the data. A problem here is that this proportion can
often be zero. To see it, note the outcome of a consumer includes
both her search and purchase choices. Conditional on the consumer
searching $k$ options, there are $C_{k}^{J}$ possible search combinations
and $k+1$ possible purchase choices. So, in total there are $\sum_{k=1}^{J}C_{k}^{J}\cdot(k+1)$
possible outcomes for a consumer. This number is over ten billion
for $J=30$. However, a computationally feasible value of $R$ is
in the tens or hundreds. As a result, the observed outcome can often
be evaluated as a probability-zero event.

The literature uses smoothing to address the probability-zero problem.
Instead of counting only the simulations that exactly match the observed
outcome, it allows a ``partial match'' between simulations and observed
outcome. The definition of partial match is based on a set of inequalities
that characterize optimal search and purchase choices. As an example,
one inequality is that the maximum utility among searched options
exceeds the reservation utility of any unsearched option. We refer
readers to \citet{ursu2018power} for complete description of the
inequalities.\footnote{A detail here is that some inequalities require knowing the search
order. In the real data, this order is not observed and we follow
\citet{ursu2018power} to assume that higher-ranked options are searched
first. In Monte Carlo, we make the search order observable to SMLE
(which gives SMLE an advantage over NNE).} A draw of $\{\varepsilon_{ij}\}_{j=0}^{J}$ is counted as partially
matching the observed outcome if the inequalities are close to being
satisfied. A smoothing factor (or scaling factor) $\lambda$ calibrates
the partial match. A larger $\lambda$ is less lenient on unsatisfied
inequalities. $\lambda\rightarrow\infty$ is equivalent to the accept-reject
method.

A difficulty in likelihood smoothing is choosing $\lambda$. The estimate
$\widehat{\boldsymbol{\theta}}$ is often sensitive to $\lambda$.
But there is no established procedure for choosing $\lambda$ (see
discussion in \citealt{geweke2001computationally}). In practice,
researchers resort to Monte Carlo experiments to try out different
values of $\lambda$. The experiments require repeating SMLE many
times for each trial value of $\lambda$, which is computationally
expensive. Even then, it is unclear how to set $\boldsymbol{\theta}$
in these Monte Carlo experiments. One possibility is to use a preliminary
estimate of SMLE under a guess of $\lambda$. But this preliminary
estimate can be far from the true $\boldsymbol{\theta}$ due to the
sensitivity of SMLE to $\lambda$.

In this section, the optimization in SMLE starts at the center of
$\Theta$ specified above in NNE. The optimization result is not sensitive
to the starting point -- using other start points tends to lead to
almost the same result. For the number of simulations, we use $R=50$
(as in \citealt{ursu2018power}) unless stated otherwise.

\subsection{Monte Carlo studies\label{subsec:search-MC}}

In the Monte Carlo studies, we first choose a ``true'' parameter
value for the search model. Then, we use the search model to simulate
a dataset and try to recover the parameter value from the dataset.
We will refer to this simulated dataset as a ``Monte Carlo dataset,''
in order to distinguish it from the also-simulated datasets used in
training NNE.

Specifically, we generate a Monte Carlo dataset as follows. We set
$n=1000$ and $J=30$ to resemble the real data that we will estimate
later in Section \ref{subsec:search-real}. For the hotel attributes
$\boldsymbol{x}_{ij},$ we draw each attribute from a distribution
similar to its distribution in the real data.\footnote{The star rating takes 2, 3, 4, and 5 with probabilities 0.05, 0.25,
0.4, and 0.3, respectively. The review score takes 3, 3.5, 4, 4.5,
and 5 with probabilities 0.08, 0.17, 0.4, 0.3, and 0.05, respectively.
The location score follows ${\cal N}(4,0.3)$. The dummy for chain
hotels takes 1 with probability 0.8. The dummy for promotion takes
1 with probability 0.6. The log price follows ${\cal N}(0.15,0.6)$.
The ranking position $s_{ij}$ enumerates from 1 to $J=30$.} We draw $\varepsilon_{ij}$ from ${\cal N}(0,1)$. Then, we compute
$\boldsymbol{y}_{ij}$ using the optimal search and purchase strategy.
The true parameter is set at $\boldsymbol{\beta}=(0.1,\,0,\,0.2,\,-0.2,\,0.2,\,-0.2)'$,
$\eta=3$, and $\boldsymbol{\delta}=(-4,0.1)$. 

Below, we first examine the accuracy of parameter estimates and associated
computational costs. We then examine the accuracy of the economic
predictions implied by the parameter estimates.

\subsubsection*{Estimation accuracy and costs}

For either NNE or SMLE, we repeat the estimation across 100 Monte
Carlo datasets to assess the distribution of the estimates. Figure
\ref{fig:search-MC-NNE-par} plots the estimates by NNE. The red dotted
lines show the true parameter values. We see that in all plots the
estimates concentrate around the true values. Some biases may be noticeable
but are relatively small.

\begin{figure}
\begin{centering}
\hspace{-2em}\includegraphics[scale=0.85]{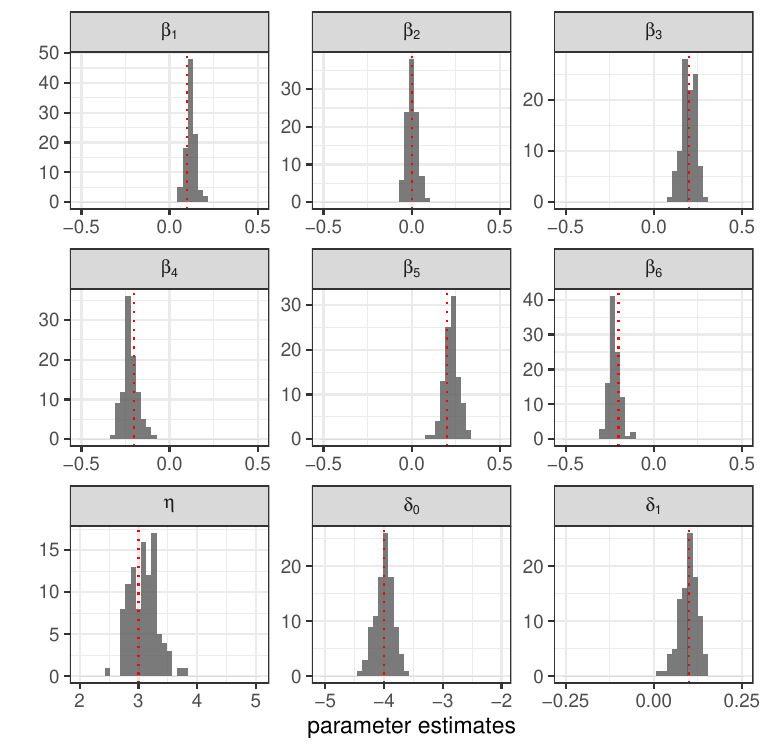}
\par\end{centering}
{\footnotesize Notes: Each plot corresponds to one parameter in the
search model. The red dotted line shows the true parameter value.
The histogram shows the parameter estimates across 100 Monte Carlo
datasets. The range of the horizontal axis equals the parameter's
range in $\Theta$. }{\footnotesize\par}

\medskip{}

\caption{\label{fig:search-MC-NNE-par} NNE's Estimates for Search Model in
Monte Carlo}
\end{figure}

\begin{figure}
\begin{centering}
\hspace{-2em}\includegraphics[scale=0.85]{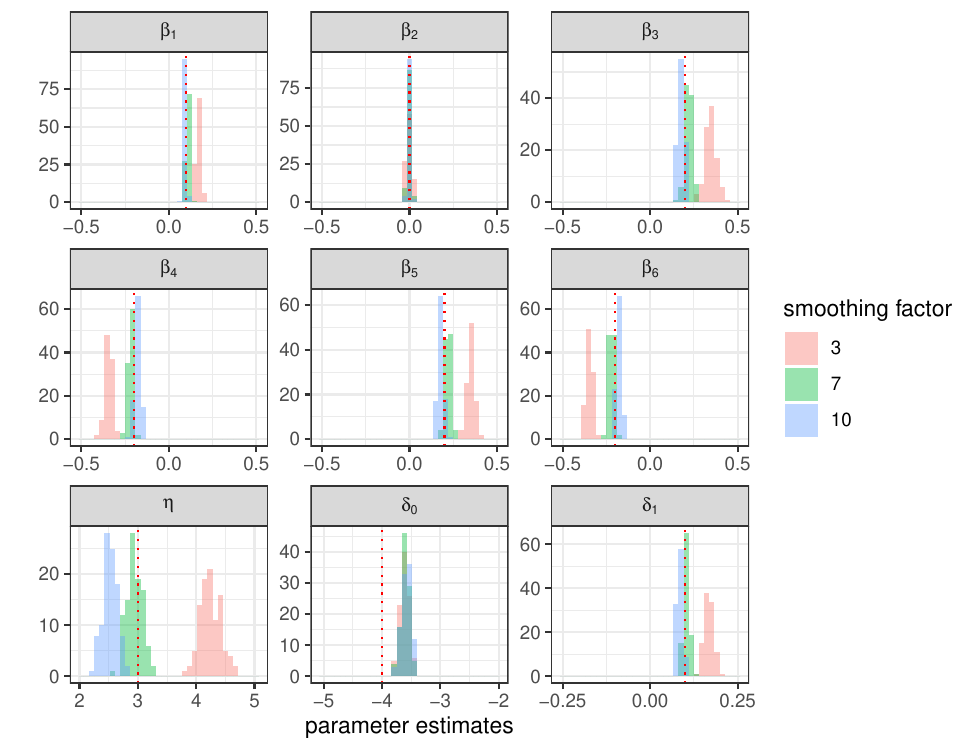}
\par\end{centering}
{\footnotesize Notes: Same as Figure \ref{fig:search-MC-NNE-par}.}{\footnotesize\par}

\medskip{}

\caption{\label{fig:search-MC-MLE-par} SMLE's Estimates for Search Model in
Monte Carlo}
\end{figure}

Figure \ref{fig:search-MC-MLE-par} plots the estimates by SMLE under
three smoothing factors: $\lambda=$ 3, 7, and 10. In particular,
we include $\lambda=7$ because it is the optimal smoothing factor
for SMLE here. Specifically, we use a grid of $\lambda$ from 1 to
15. For each point on the grid, we repeat SMLE over 100 Monte Carlo
datasets to compute a RMSE. The RMSE is smallest at $\lambda=7$.
We give the details in Appendix \ref{subsec:smoothing-factor}. It
is worth noting that obtaining this optimal $\lambda$ not only is
computationally expensive but also requires knowing the true value
of $\boldsymbol{\theta}$ (to calculate RMSE). More generally, it
is known that finding optimal $\lambda$ is difficult in the estimation
of search models.

Figure \ref{fig:search-MC-MLE-par} shows that the estimates by SMLE
are sensitive to $\lambda$. We also see that all three values of
$\lambda$ lead to some biases. As we will show in a moment, both
model fit and counterfactual analysis are more accurate with NNE than
SMLE, even when SMLE uses the optimal $\lambda=7$. It is also worth
noting that the SMLE here has a higher computational cost than the
NNE in Figure \ref{fig:search-MC-NNE-par}. We now turn to a more
detailed comparison of computational costs.

Figure \ref{fig:search-MC-rmse} plots the estimation accuracy of
SMLE and NNE as we vary computation costs (by varying $L^{*}$ and
$R$). We measure estimation accuracy by RMSE. We measure computational
costs in two ways: simulation burden and compute time. The left plot
of Figure \ref{fig:search-MC-rmse} shows the simulation burden, which
is the number of search model simulations required to carry out the
estimation (for NNE this burden equals $L^{*}$; for SMLE this burden
equals $R$ times the number of likelihood evaluations in optimization).
The simulation burden is invariant to hardware or code implementation.
The right plot of Figure \ref{fig:search-MC-rmse} shows the total
compute time, with our hardware and code.\footnote{A further computational benefit of NNE, not illustrated in Figure
\ref{fig:search-MC-rmse}, is that it is highly parallelizable. All
training datasets can be computed at once in parallel, which is useful
with computer clusters.}

\begin{figure}
\begin{centering}
\hspace{-2em}\includegraphics[scale=0.85]{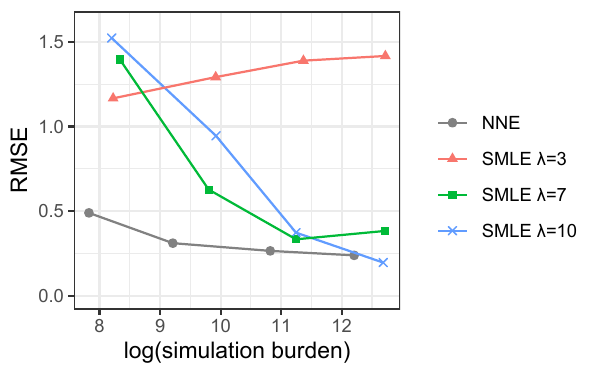}\includegraphics[scale=0.85]{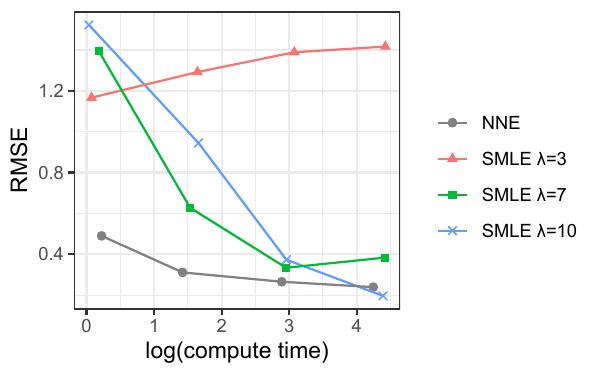}
\par\end{centering}
{\footnotesize Notes: The ``error'' in RMSE is defined as the Euclidean
distance between the true $\boldsymbol{\theta}$ and its estimate.
Simulation burden is the number of search model simulations required
to carry out estimation. Simulation burden and compute time are varied
by changing $L^{*}$ and $R$: $L^{*}\in\{2500,1e4,5e4,2e5\}$, $R\in\{5,25,100,400\}$.
The number of hidden nodes increases with $L^{*}$ and we set it to
be roughly $\propto\sqrt{L}$ (see Section \ref{subsec:front-convergence}
and Appendix \ref{subsec:proofs}).}{\footnotesize\par}

\medskip{}

\caption{\label{fig:search-MC-rmse}RMSE vs. Computational Cost for Search
Model in Monte Carlo}
\end{figure}

The results from the two plots in Figure \ref{fig:search-MC-rmse}
are consistent. NNE has better estimation accuracy than SMLE for a
wide range of computation costs, and the advantage is especially large
at low computation costs. We see that the estimation accuracy of SMLE
is sensitive to $\lambda$ and the optimal $\lambda$ changes with
computational cost (or $R$). The sensitivity to $\lambda$ makes
it important to choose the right $\lambda$. However, as discussed
before, choosing $\lambda$ is difficult (a comparison like Figure
\ref{fig:search-MC-rmse} requires repeating SMLE many times as well
as knowing the true $\boldsymbol{\theta}$).

\subsubsection*{Implications by estimates}

We examine how the parameter estimates by NNE and SMLE translate into
economic predictions about consumer behaviors. We first look at model
fit and then counterfactual analysis.

Figure \ref{fig:MC-data-pattern} shows the model fit on three key
statistics. In each plot, the red dotted line shows the average value
observed across 100 Monte Carlo datasets. The histogram shows the
values predicted by the search model estimated with these Monte Carlo
datasets. The histogram's spread captures the statistical uncertainty
in both $\widehat{\boldsymbol{\theta}}$ and the key statistic.

\begin{figure}
\begin{centering}
\begin{minipage}[t]{0.5\columnwidth}%
\begin{center}
NNE
\par\end{center}
\begin{center}
\includegraphics[scale=0.85]{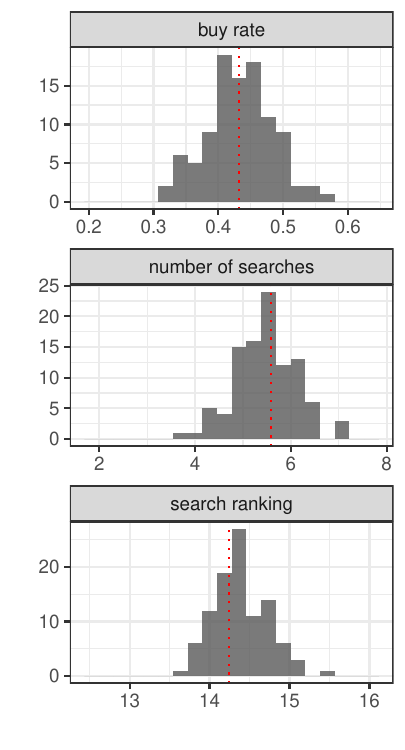}
\par\end{center}%
\end{minipage}%
\begin{minipage}[t]{0.5\columnwidth}%
\begin{center}
\hspace{-5em}SMLE \vspace{-0.5cm}
\par\end{center}
\begin{center}
\includegraphics[scale=0.85]{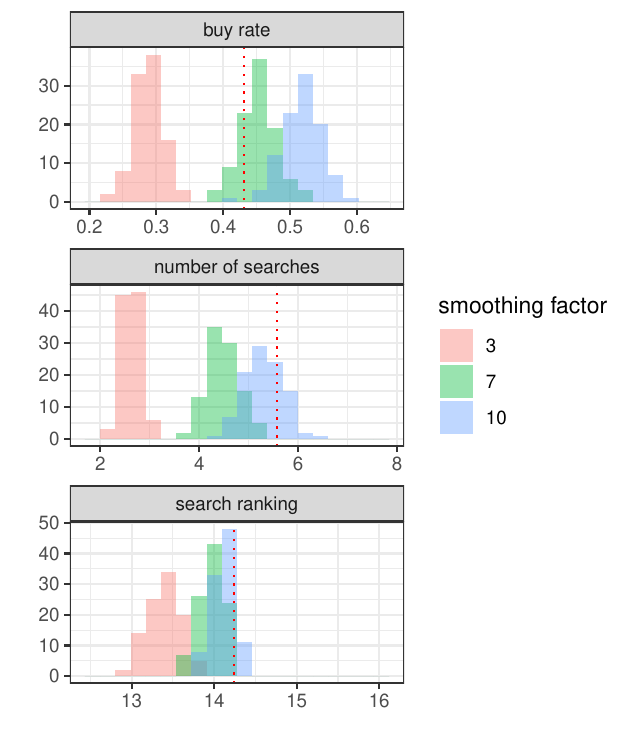}
\par\end{center}%
\end{minipage}
\par\end{centering}
{\footnotesize Notes: Plots are based on 100 Monte Carlo datasets (all
generated under the true $\boldsymbol{\theta}$). Take the 1st plot
as example. On each Monte Carlo dataset, we estimate the search model
and calculate the buy rate of a dataset simulated by the estimated
search model. The histogram plots the buy rates calculated as such
across the 100 Monte Carlo datasets. The red dotted line shows the
average of the buy rates directly observed in the Monte Carlo datasets.}{\footnotesize\par}

\medskip{}

\caption{\label{fig:MC-data-pattern}Model Fit on Key Statistics in Monte Carlo}
\end{figure}

From Figure \ref{fig:MC-data-pattern} we see that the predictions
under NNE all center around the observed averages. The SMLE again
shows sensitivity to the smoothing factor $\lambda$, and it shows
clear biases under all three smoothing factors. Biases are present
even with the optimal $\lambda=7$ obtained with expensive computations
and knowledge of true $\boldsymbol{\theta}$.

We now turn to counterfactual analysis. We focus on a counterfactual
that computes the impact of search cost on purchases. Specifically,
given a value of $\boldsymbol{\theta}$, we can use the search model
to compute the increment in buy rate if consumers no longer incur
search costs (with everything else unchanged). This increment measures
the purchases lost due to search costs. It provides the platform with
an upper bound on the effect of interventions to reduce search costs
(e.g., better web design, better ranking algorithms).

\begin{table}
\caption{\label{tab:search-mc-cf} Search Model Counterfactual -- What if
No Search Costs}

\begin{centering}
\begin{tabular}{llccccccccc}
\hline 
 &  & Truth &  & NNE &  & SMLE  &  & SMLE  &  & \multicolumn{1}{c}{SMLE }\tabularnewline
 &  &  &  &  &  & ($\lambda=3$) &  & ($\lambda=7$) &  & ($\lambda=10$)\tabularnewline
\hline 
Buy rate increase &  & 0.141 &  & 0.135 &  & 0.161 &  & 0.173 &  & \multicolumn{1}{c}{0.168}\tabularnewline
 &  & {\footnotesize (.001)} &  & {\footnotesize (.001)} &  & {\footnotesize (.001)} &  & {\footnotesize (.001)} &  & {\footnotesize (.001)}\tabularnewline
\hline 
\end{tabular}
\par\end{centering}
\medskip{}

{\footnotesize Notes: Results are averaged across 100 Monte Carlo datasets.
On each Monte Carlo dataset, we first estimate the search model and
then use the estimated search model to simulate the counterfactual.
Numbers in parentheses are standard errors for the averages.}{\footnotesize\par}
\end{table}

Table \ref{tab:search-mc-cf} displays the counterfactual results
under the true $\boldsymbol{\theta}$, NNE, and SMLE. The increment
in buy rate is closest to the truth when we use NNE's estimate. SMLE
gives us a considerably higher increment than the truth. Intuitively,
this is because search costs are over-estimated with SMLE (see Figure
\ref{fig:search-MC-MLE-par}).

\subsection{Real data estimation\label{subsec:search-real}}

We estimate the search model on a real dataset. Specifically, we use
the Expedia hotel search data in the main analysis of \citet{ursu2018power}.
There are $n=1055$ search sessions (which can be treated as individual
consumers in this search model). The number of options $J=33$ for
some consumers and $J=34$ for the others. Table \ref{tab:search-real-par}
reports the results.\footnote{Our model setup largely follows \citet{ursu2018power} but there are
a few differences, so the estimates are not directly comparable. Specifically,
we assume free first search and take the logs of ranking and price.
See Section \ref{subsec:search-model}.} We examine model fit later in Figure \ref{fig:real-data-pattern}.

\begin{table}
\caption{\label{tab:search-real-par} Search Model Estimates on Real Data}

\begin{centering}
\begin{tabular}{llrllrllrllrl}
\hline 
 &  & \multicolumn{2}{l}{NNE} &  & \multicolumn{2}{l}{SMLE} &  & \multicolumn{2}{l}{SMLE} &  & \multicolumn{2}{l}{SMLE}\tabularnewline
 &  & \multicolumn{2}{l}{} &  & \multicolumn{2}{l}{($\lambda=3$)} &  & \multicolumn{2}{l}{($\lambda=7$)} &  & \multicolumn{2}{l}{($\lambda=10$)}\tabularnewline
\hline 
Simul. burden ($\times10^{4}$) &  & \multicolumn{2}{l}{1.00} &  & \multicolumn{2}{l}{9.09} &  & \multicolumn{2}{l}{8.28} &  & \multicolumn{2}{l}{8.24}\tabularnewline
Compute time (mins) &  & \multicolumn{2}{l}{3.83} &  & \multicolumn{2}{l}{33.27} &  & \multicolumn{2}{l}{30.34} &  & \multicolumn{2}{l}{30.25}\tabularnewline
\hline 
 &  &  &  &  &  &  &  &  &  &  &  & \tabularnewline
\hline 
Stars, $\beta_{1}$ &  & 0.215 & {\footnotesize (.078)} &  & 0.218 & {\footnotesize (.109)} &  & 0.111 & {\footnotesize (.063)} &  & 0.081 & {\footnotesize (.047)}\tabularnewline
Review, $\beta_{2}$ &  & -0.089 & {\footnotesize (.107)} &  & -0.046 & {\footnotesize (.188)} &  & -0.005 & {\footnotesize (.106)} &  & 0.001 & {\footnotesize (.076)}\tabularnewline
Location, $\beta_{3}$ &  & 0.111 & {\footnotesize (.108)} &  & 0.178 & {\footnotesize (.196)} &  & 0.080 & {\footnotesize (.108)} &  & 0.056 & {\footnotesize (.077)}\tabularnewline
Chain, $\beta_{4}$ &  & 0.018 & {\footnotesize (.093)} &  & -0.012 & {\footnotesize (.119)} &  & -0.007 & {\footnotesize (.067)} &  & -0.007 & {\footnotesize (.049)}\tabularnewline
Promotion, $\beta_{5}$ &  & 0.183 & {\footnotesize (.091)} &  & 0.105 & {\footnotesize (.108)} &  & 0.050 & {\footnotesize (.062)} &  & 0.037 & {\footnotesize (.045)}\tabularnewline
Price, $\beta_{6}$ &  & -0.240 & {\footnotesize (.080)} &  & -0.390 & {\footnotesize (.117)} &  & -0.208 & {\footnotesize (.071)} &  & -0.155 & {\footnotesize (.052)}\tabularnewline
Outside, $\eta$ &  & 4.038 & {\footnotesize (.594)} &  & 3.670 & {\footnotesize (.927)} &  & 2.803 & {\footnotesize (.510)} &  & 2.557 & {\footnotesize (.361)}\tabularnewline
Search, $\delta_{0}$ &  & -2.953 & {\footnotesize (.259)} &  & -0.307 & {\footnotesize (.260)} &  & -1.290 & {\footnotesize (.191)} &  & -1.567 & {\footnotesize (.151)}\tabularnewline
Search, $\delta_{1}$ &  & 0.049 & {\footnotesize (.036)} &  & -0.147 & {\footnotesize (.093)} &  & -0.083 & {\footnotesize (.067)} &  & -0.061 & {\footnotesize (.053)}\tabularnewline
\hline 
\end{tabular}
\par\end{centering}
\medskip{}

{\footnotesize Notes: Numbers in parentheses are asymptotic standard
errors (for SMLE) and estimates of statistical accuracy outputted
by neural net (for NNE). All reported numbers are averaged across
100 estimation runs on the real data (estimates vary slightly between
estimation runs because SMLE and NNE are both simulation-based).}{\footnotesize\par}
\end{table}

The top panel of Table \ref{tab:search-real-par} reports the computational
costs, measured in terms of simulation burden and compute time. As
defined before, simulation burden is the number of search model simulations
required to carry out estimation, and compute time is total time used
to carry out estimation by our machine and code. Overall, we see the
computational cost of NNE is a number of times smaller than SMLE.

In the lower panel of Table \ref{tab:search-real-par}, the first
column reports the estimates by NNE. We see that the hotel star rating,
location score, and promotion have positive effects on the consumer
utility. The review score has a negative effect (as in \citealt{ursu2018power}),
though the estimate is not statistically significant. Price has a
negative effect and it is statistically significant. As to search
costs, the ranking effect ($\delta_{1}$) is positive but not statistically
significant. The lack of significance for $\delta_{1}$ is likely
because we allow a free search and there are only 8.8\% (or 93) consumers
who searched beyond the free search, which makes it statistically
difficult to pin down how ranking affects search cost. Overall, the
NNE's estimates are reasonable.

The next columns in Table \ref{tab:search-real-par} report the estimates
by SMLE under different smoothing factors. Consistent with what we
have seen in the Monte Carlo studies, estimates are sensitive to the
smoothing factor $\lambda$. In particular, SMLE's estimates for $\boldsymbol{\beta}$
and $\eta$ are closer to NNE's when $\lambda$ is 3 or 7. However,
SMLE's estimates for $\boldsymbol{\delta}$ are closer to NNE's when
$\lambda$ is 10. This sensitivity to $\lambda$ is problematic because,
as discussed before, choosing the optimal $\lambda$ is practically
difficult.

\begin{figure}
\begin{centering}
\begin{minipage}[t]{0.5\columnwidth}%
\begin{center}
NNE
\par\end{center}
\begin{center}
\includegraphics[scale=0.85]{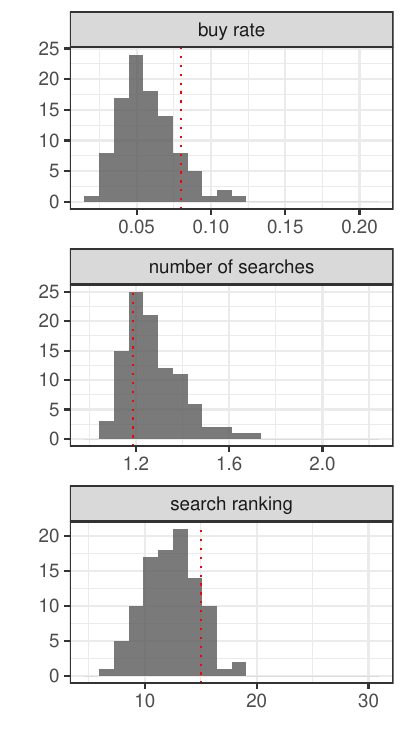}
\par\end{center}%
\end{minipage}%
\begin{minipage}[t]{0.5\columnwidth}%
\begin{center}
\hspace{-5em}SMLE \vspace{-0.5cm}
\par\end{center}
\begin{center}
\includegraphics[scale=0.85]{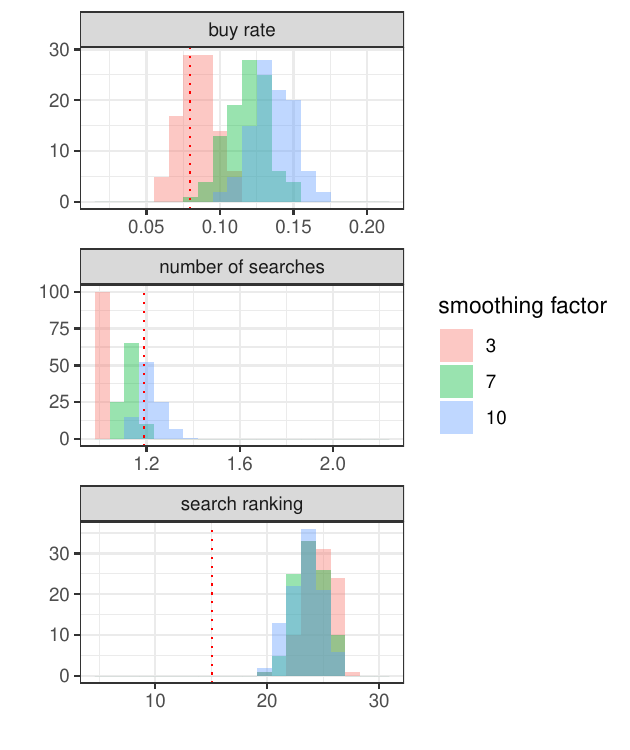}
\par\end{center}%
\end{minipage}
\par\end{centering}
{\footnotesize Notes: The histograms are based on 100 datasets bootstrapped
from the real data. Take the 1st plot as example. On each bootstrapped
dataset, we estimate the search model and calculate the buy rate of
a dataset simulated by the estimated search model. The histogram plots
the buy rates calculated as such across the 100 bootstrapped datasets.
The red dotted line shows the buy rate observed in real data.}{\footnotesize\par}

\medskip{}

\caption{\label{fig:real-data-pattern}Model Fit on Key Statistics in Real
Data}
\end{figure}

Unlike in Monte Carlo studies, we do not know the true value of the
search model parameter $\boldsymbol{\theta}$. Therefore, we cannot
calculate the RMSE of either NNE or SMLE. So instead, we focus on
model fit on three key statistics. Figure \ref{fig:real-data-pattern}
gives the results. In each plot, the red dotted line shows the value
observed in the real data. The histogram shows the predicted values
by estimated search model. The histogram's spread captures the statistical
uncertainty in both $\widehat{\boldsymbol{\theta}}$ and the key statistic.

Figure \ref{fig:real-data-pattern} shows that NNE leads to a reasonable
(albeit far from ideal) fit on the key statistics. With SMLE, the
fit varies substantially by the smoothing factor $\lambda$. This
sensitivity of SMLE to $\lambda$ is problematic because, as discussed
before, it is practically difficult to choose $\lambda$. More specifically,
SMLE can give a good fit on either the buy rate or the number of searches,
but not both. As to the search ranking, SMLE shows a substantially
worse fit than NNE. Overall, NNE seems to give a better model fit. 

The main findings from real data estimation are consistent with our
findings in Monte Carlo studies. The root of the difficulty of SMLE
lies in the enormous number of search and buy combinations. Without
smoothing, an infeasibly large number of simulations will be needed
to integrate out the unobservables for likelihood evaluation. Smoothing
makes SMLE feasible, but it also introduces biases. In contrast, NNE
does not require evaluating integrals over unobservables.

\subsection{Additional results on search model\label{subsec:search-other}}

We provide further results on the performance of NNE in estimating
the consumer search model. In particular, we examine the impacts of
moment choice and data size.

\subsubsection*{Choice of moments}

We measure how the estimation accuracy of NNE varies as the moment
choice varies. This exercise is related to the general discussion
in Section \ref{sec:front} on NNE's robustness to redundant moments.
(Also see Section \ref{sec:AR1} for a similar exercise under the
much simpler AR(1) model.)

Recall that so far we have included 46 data moments in $\boldsymbol{m}$
(see Section \ref{subsec:search-MC}). We now construct five alternative
specifications of $\boldsymbol{m}$ by removing or adding moments.
The first specification removes the covariance matrix of $\widetilde{\boldsymbol{y}}_{i}$
(40 moments remain). The second specification further removes moments
by dropping the non-free search dummy from $\widetilde{\boldsymbol{y}}_{i}$
(32 moments remain). The third specification removes $\widetilde{\boldsymbol{y}}_{i}$
all together (16 moments remain). The fourth specification adds to
the original 46 moments the cross-covariances between $\boldsymbol{y}_{ij}$
and $\boldsymbol{x}_{ij}^{2}$ (60 moments total). The fifth specification
further adds the cross-covariances between $\widetilde{\boldsymbol{y}}_{i}$
and $J^{-1}\sum_{j=1}^{J}\boldsymbol{x}_{ij}^{2}$ (81 moments total).

Table \ref{tab:search-moments} reports the estimation accuracy under
the six specifications of moments, with each row corresponding to
one specification. The first column lists the numbers of moments.
We do not see sizable increases in bias or RMSE as we move from 16
to 81 moments. Instead, the bias and RMSE largely decrease. The result
suggests that NNE is able to capitalize on the information from additional
moments while being robust to potentially redundant moments.

\begin{table}
\caption{\label{tab:search-moments}NNE in Search Model with Alternative Moment
Specifications}

\begin{centering}
\begin{tabular}{clllll}
\hline 
Number of moments &  &  & Total |bias| &  & RMSE\tabularnewline
\hline 
16 &  &  & 1.210 {\footnotesize (.027)} &  & 0.649 {\footnotesize (.017)}\tabularnewline
32 &  &  & 1.201 {\footnotesize (.026)} &  & 0.630 {\footnotesize (.015)}\tabularnewline
40 &  &  & 0.914 {\footnotesize (.029)} &  & 0.508 {\footnotesize (.015)}\tabularnewline
46 &  &  & 0.183 {\footnotesize (.029)} &  & 0.311 {\footnotesize (.017)}\tabularnewline
60 &  &  & 0.186 {\footnotesize (.035)} &  & 0.291 {\footnotesize (.015)}\tabularnewline
81 &  &  & 0.171 {\footnotesize (.033)} &  & 0.279 {\footnotesize (.015)}\tabularnewline
\hline 
\end{tabular}
\par\end{centering}
\medskip{}

{\footnotesize Notes: Each row corresponds to one specification of
moments. Total |bias| is the sum of absolute biases across the parameters
of the search model. Results are based on 100 Monte Carlo datasets.
Numbers in parentheses are standard errors.}{\footnotesize\par}
\end{table}

\subsubsection*{Data size}

Our analysis so far has focused around data size $n=1000$. We now
examine how the estimation accuracy of NNE varies with data size and
compare it to SMLE. To facilitate the comparison, we keep the simulation
burdens about the same between NNE and SMLE (by using appropriate
$L^{*}$ and $R$).

\begin{figure}
\begin{centering}
\hspace{-1em}\includegraphics[scale=0.85]{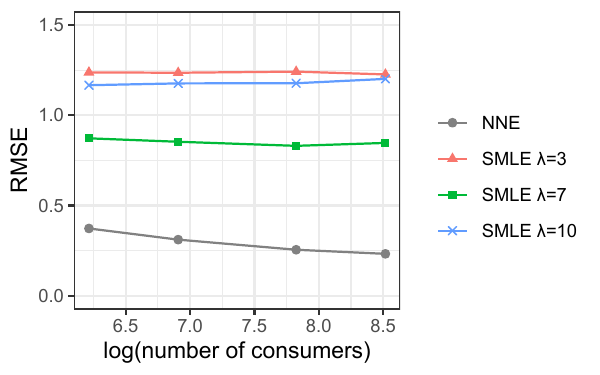}\includegraphics[scale=0.85]{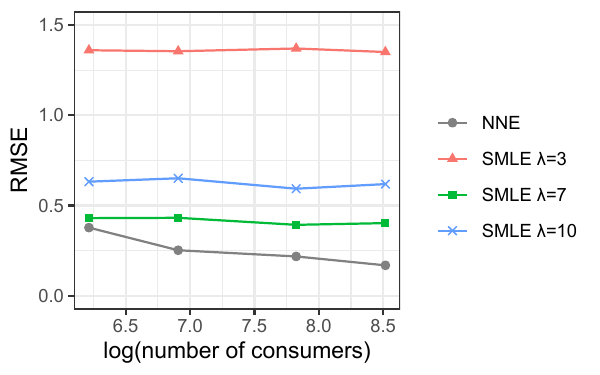}
\par\end{centering}
{\footnotesize Notes: All reported numbers are averaged across 100
Monte Carlo datasets. Within each plot, $L^{*}$ and $R$ are chosen
so that NNE and SMLE have about the same simulation burdens. In the
left plot $L^{*}=1e4$ and $R=15$. In the right plot $L^{*}=4e4$
and $R=50$.}{\footnotesize\par}

\medskip{}

\caption{\label{fig:NNE-sample-size}Estimation of Search Model with Different
Data Sizes}
\end{figure}

Figure \ref{fig:NNE-sample-size} reports the results for $n\in\{500,1000,2500,5000\}$.
The two plots differ in simulation burden, and a higher burden is
used in the right plot. Within each plot, NNE and SMLE are made to
have about the same simulation burdens. For SMLE we include three
smoothing factors: $\lambda=3,7,$ and $10$. As before, we include
$\lambda=7$ because a grid search finds it optimal for SMLE at $n=1000$
(Figure \ref{fig:NNE-sample-size} here suggests that $\lambda=7$
is also optimal for the other values of $n$). We again note that
finding this optimal $\lambda$ not only is computationally expensive
but also uses the knowledge of the true search model parameter.

The most important observation from Figure \ref{fig:NNE-sample-size}
is that the RMSE of NNE decreases with $n$ faster than SMLE. In other
words, NNE can capitalize on larger data better. Intuitively, this
is because the smoothing required by SMLE introduces bias in estimates
(as we have seen in Section \ref{subsec:search-MC}). This bias is
not a finite-sample bias and tends not to decrease with $n$. To reduce
this bias and subsequently RMSE, one needs to increase $R$ and use
a larger $\lambda$ (the optimal $\lambda$ increases with $R$).
However, a larger $R$ also implies a higher computational cost.

\section{Conclusion\label{sec:conclusion}}

We study a novel approach that leverages machine learning to estimate
the parameters of (structural) econometric models. One uses the structural
econometric model to generate datasets. These generated datasets,
together with the parameter values under which they are generated,
can be used to train a machine learning model to ``recognize'' parameter
values from datasets. As such, the approach offers a bridge to connect
the existing machine learning techniques and the current structural
econometric models.

NNE has its advantages and boundaries. We find that it has a well-defined
and meaningful limit, is robust to redundant moments, and achieves
good estimation accuracy at light computational costs in suitable
applications. The applications that can benefit the most from NNE
are where SMLE or SMM would require large numbers of simulations.
NNE is unlikely to show gains in applications where the main estimation
burden is not simulations. One example is where closed-form likelihood
functions are available. Another example is dynamic choice models
where solving the economic model constitutes the main burden in estimation.
NNE also requires that the econometric model can be used to simulate
data.

The paper leaves several possibilities unexplored, which open avenues
for future research. The first possibility is extension to very large-scale
problems where the econometric model has hundreds or more parameters,
including nuisance parameters. It is useful to note that although
in this paper NNE outputs all the parameters of the econometric model,
it is straightforward to configure NNE to output only a subset of
parameters. One can exclude nuisance parameters from this subset to
focus on the main parameters. In addition, while we have focused on
shallow neural nets for a balance between learning capacity and ease
of training, for very large-scale problems one may explore deep neural
nets.

The second unexplored possibility is to identify the relevant moments.
This exercise is useful because it can help to clarify the sources
of econometric identification and makes structural estimation more
transparent. NNE seems particularly well positioned for this exercise.
A well-trained neural net should automatically disregard irrelevant
inputs. However, a difficulty lies in the interpretability of neural
nets. The literature on interpretable machine learning may shed lights
on how to identify the inputs that a neural net disregards.

The third unexplored possibility is pre-training NNE. For any given
structural econometric model, a researcher can train a NNE and make
it available to other researchers who wish to apply the structural
model on their own data. One difficulty to overcome is that the pre-trained
NNE must not depend on a specific $\boldsymbol{x}$ (recall that in
this paper the training is conditional on $\boldsymbol{x}$). However,
if made possible, pre-trained NNE packages will significantly reduce
the costs of researchers applying structural estimation. An additional
benefit exists in handling data privacy. Because NNE only requires
data moments as inputs, with a pre-trained NNE one can carry out the
estimation without access to the full data. This feature is particularly
useful when researchers collaborate with government agencies or companies
to study data that contain sensitive personal information. Researchers
will not require access to individual-level data but only aggregated
data moments to estimate structural models.

\newpage

\appendix
\newpage

\section{Appendix}

\subsection{Number of hidden nodes\label{subsec:config}}

\begin{figure}
\begin{centering}
\includegraphics[scale=0.3]{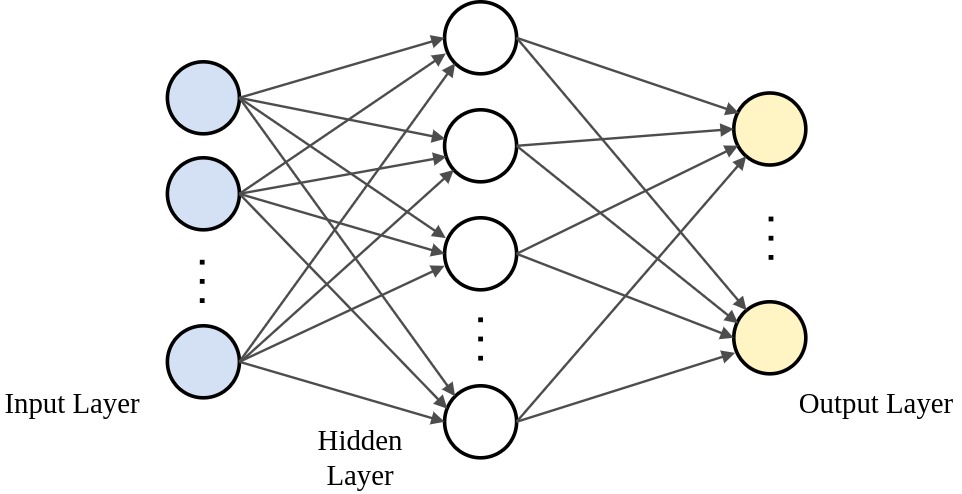}
\par\end{centering}
\bigskip{}

\caption{\label{fig:appendix-nn}A Shallow Neural Network}
\end{figure}

Figure \ref{fig:appendix-nn} visually presents a single-hidden-layer
neural net by a collection of nodes and arrows. Neural nets with a
single hidden layer are often referred to as shallow neural nets.
For shallow neural nets, an important configuration choice is the
number of nodes in the hidden layer. As we add more nodes, the neural
net becomes more flexible and capable of learning more complex mappings.
However, large neural networks may overfit the training set. The appropriate
number of hidden nodes is usually chosen by validation loss.

In the application of NNE in Section \ref{sec:search}, we use validation
loss to choose the number of hidden nodes for $L^{*}=1e4$ (which
is the setting for most of our results). We train three neural nets
with 32, 64, and 128 hidden nodes, respectively (it is a convention
to use powers of 2). The training uses examples $\ell=1,...,L$. Then,
we calculate validation losses with examples $\ell=L+1,...,L^{*}$.
The 64-node neural net typically has the smallest validation loss.
So we use 64 hidden nodes for $L^{*}=1e4$ in all results. In fact,
if averaged across 100 Monte Carlo datasets, the validation losses
($C_{2}$) with 32, 64, and 128 hidden nodes are -19.32, -20.16, and
-19.39. 

It is useful to note that in choosing the neural net configuration
via validation, all the candidate configurations use the same training
set and validation set. We do not need to simulate additional datasets
using the econometric model. So, choosing the neural net configuration
is relatively low-cost in applications where simulations of the structural
econometric model constitute the bulk of the computational cost.

A related question is how sensitive NNE's estimates are to the neural
net configuration? If they are not sensitive, one can use a coarse
grid to search for the optimal configuration. Otherwise, one needs
to do a fine search. Figure \ref{fig:network-conf} plots the estimates
by NNE under different numbers of hidden nodes. The histograms largely
overlap, which indicates that estimates are not sensitive to the number
of hidden nodes. This result suggests that a coarse grid (e.g., a
few powers of 2) should be sufficient for choosing the number of hidden
nodes.

\begin{figure}
\begin{centering}
\includegraphics[scale=0.85]{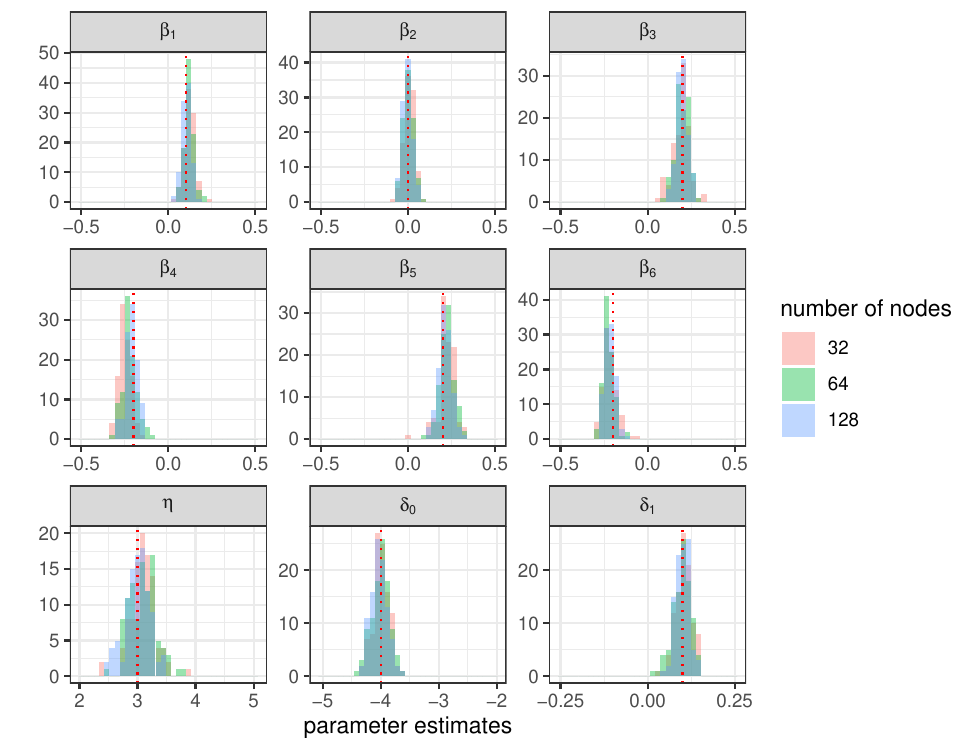}
\par\end{centering}
{\footnotesize Notes: Each plot corresponds to one parameter in the
search model. The histogram shows the parameter estimates across 100
Monte Carlo datasets. The range of the horizontal axis equals the
parameter’s range in $\Theta$.}{\footnotesize\par}

\medskip{}

\caption{\label{fig:network-conf}Estimates with Different Neural Net Configurations}

\end{figure}

\subsection{Parameter space $\Theta$\label{subsec:range}}

\begin{figure}
\begin{centering}
\includegraphics[scale=0.85]{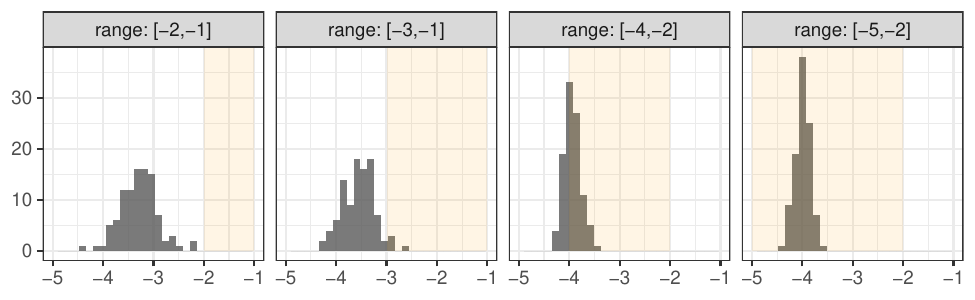}
\par\end{centering}
{\footnotesize Notes: Each histogram shows the estimates of $\delta_{0}$
by NNE across 100 Monte Carlo datasets. The four plots differ in the
specified range of $\delta_{0}$ in $\Theta$. The shaded area in
each plot shows the specified range of $\delta_{0}$. The true value
of $\delta_{0}=-4$.}{\footnotesize\par}

\medskip{}

\caption{\label{fig:par-bound}Sensitivity Analysis of NNE to $\Theta$}
\end{figure}

In NNE, the parameter values of the econometric model in training
and validation sets are drawn from a parameter space $\Theta$. A
practical question is how NNE will behave if $\Theta$ fails to contain
the true value of $\boldsymbol{\theta}$. We explore this question
in consumer search model (Section \ref{sec:search}). We will focus
on the search cost parameter $\delta_{0}$, but results with other
parameters are similar. Recall that we have specified the range for
$\delta_{0}$ as $[-5,-2]$ and the true value of $\delta_{0}=-4$.

Figure \ref{fig:par-bound} shows the distributions of NNE's estimates
for $\delta_{0}$ under different specified ranges of $\delta_{0}$.
The ranges in the first two plots do not include the truth $\delta_{0}=-4$.
In both plots, most of the estimates fall between the specified range
and the true value. The range in the third plot has the truth at its
boundary. The range in the last plot contains the truth in its interior.
In these two plots, the estimates closely center around the truth.

Overall, we see that NNE is well-behaved even when $\Theta$ is mis-specified.
The estimate typically falls outside $\Theta$ in the direction towards
the true parameter value. This behavior suggests we check whether
NNE's estimate is inside $\Theta$. If not, then $\Theta$ likely
does not contain the truth and needs to be adjusted.

\subsection{Optimal smoothing factor for SMLE\label{subsec:smoothing-factor}}

\begin{figure}
\begin{centering}
\hspace{-2em}\includegraphics[scale=0.85]{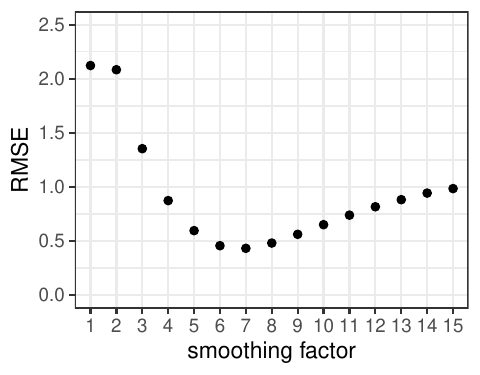}
\par\end{centering}
{\footnotesize Notes: For each value of the smoothing factor, we repeat
SMLE across 100 Monte Carlo datasets to calculate the RMSE. SMLE here
uses $R=50$. The ``error'' in RMSE is defined as the Euclidean
distance between the true value and the estimate of $\boldsymbol{\theta}$.}{\footnotesize\par}

\medskip{}

\caption{\label{fig:MLE-smoothing-RMSE}Optimal Smoothing Factor in SMLE}
\end{figure}

To estimate the consumer search model with SMLE requires a smoothing
factor. We use grid search to find the optimal smoothing factor $\lambda$
for $R=50$ (which is the setting for most of our results). Specifically,
we use a grid of $\lambda$ from 1 to 15. For each value on the grid,
we repeat SMLE across 100 Monte Carlo datasets to calculate the RMSE
(so the SMLE is repeated 1500 times in total). Figure \ref{fig:MLE-smoothing-RMSE}
reports the results. We see $\lambda=7$ achieves the lowest RMSE.

\subsection{Estimates of statistical accuracy}

\begin{figure}
\begin{minipage}[t]{0.45\columnwidth}%
\begin{center}
NNE \vspace{-0.25cm}
\par\end{center}
\begin{center}
\includegraphics[scale=0.85]{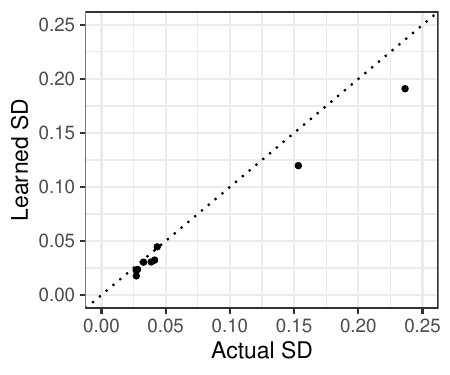}
\par\end{center}%
\end{minipage}\hfill{}%
\begin{minipage}[t]{0.45\columnwidth}%
\begin{center}
SMLE \vspace{-0.25cm}
\par\end{center}
\begin{center}
\includegraphics[scale=0.85]{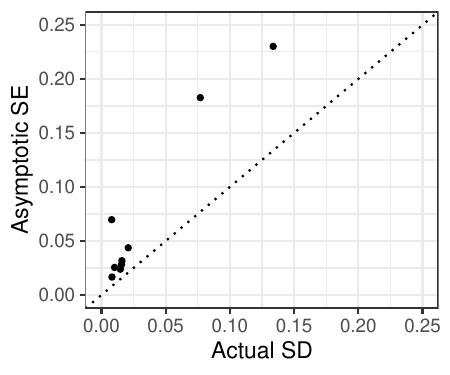}
\par\end{center}%
\end{minipage}

{\footnotesize Notes: Each point represents one parameter in the search
model.}{\footnotesize\textcolor{red}{{} }}{\footnotesize Horizontal axis
shows the standard deviation of the point estimates across 100 Monte
Carlo datasets. For NNE (left plot), vertical axis shows the estimate
of statistical accuracy outputted by neural net, averaged across Monte
Carlo datasets. For SMLE (right plot), vertical axis shows the asymptotic
standard error, averaged across Monte Carlo datasets.}{\footnotesize\par}

\medskip{}

\caption{\label{fig:search-MC-se}Estimates of Statistical Accuracy}
\end{figure}

Section \ref{subsec:front-accuracy} describes how NNE can output
an estimate of statistical accuracy in addition to the point estimate.
Section \ref{subsec:front-convergence} provides the theoretical properties
of this estimate of statistical accuracy. Below, we examine this estimate
of statistical accuracy in the consumer search model, by comparing
it to the standard deviation of the point estimates across Monte Carlo
datasets.

The left plot of Figure \ref{fig:search-MC-se} shows the results.
Each point represents one parameter in the search model (there are
9 in total). The horizontal axis shows the standard deviation of the
point estimates. The vertical axis shows the estimate of statistical
accuracy given by NNE. Recall that in the search model we have used
the loss function $C_{2}$ with a diagonal $\boldsymbol{V}$. So the
estimates of statistical accuracy are taken as square roots of the
diagonal entries. We see that the points are not far from the $45^{\circ}$
line, indicating that NNE estimates the statistical accuracy reasonably
well. 

As a comparison, the right plot of Figure \ref{fig:search-MC-se}
repeats the above exercise but for SMLE ($\lambda=7$). The vertical
axis shows the asymptotic standard error. We see larger deviations
from the $45^{\circ}$ line. The deviations are likely caused by smoothing.
Smoothing affects the likelihood function's slope and curvature, which
the asymptotic formula relies on. Thus, it is not surprising that
smoothing can distort the asymptotic standard errors of SMLE.

\subsection{Regularized polynomial instead of neural net\label{subsec:lasso}}

We examine how the estimation accuracy of NNE is affected if the neural
net is replaced by polynomial regression. Specifically, we train a
(multivariate) polynomial from $\boldsymbol{m}$ to the econometric
model's parameter. For example, if $\boldsymbol{m}$ collects 10 moments,
a second-degree polynomial will have 65 terms and a third-degree polynomial
will have 285 terms. The polynomial is estimated using lasso. The
regularization factor is chosen using the validation set ($\ell=L+1,...,L^{*}$).

Table \ref{tab:AR1-lasso} replicates Table \ref{tab:AR1-moments}
but with lasso polynomials. There are two observations when we compare
the numbers in Table \ref{tab:AR1-moments} to Table \ref{tab:AR1-lasso}.
The first observation is that the lasso shows larger biases than the
neural net (while not reported in the table, the biases with fourth-degree
polynomial are still about twice the biases with the neural net).
This observation indicates that the polynomials are not as flexible
to capture the relation between $\boldsymbol{m}$ and $\beta$. The
second observation is that unlike SMM, lasso does not show clear increases
in bias as we add moments. So lasso exhibits some robustness to redundant
moments that we have seen with neural nets. This observation is intuitive
because both the neural net and lasso can use the training set to
learn which moments contribute to estimation.

\begin{table}
\caption{Estimation of AR(1) with Different Moments, Lasso\label{tab:AR1-lasso}}

\begin{centering}
\begin{tabular}{lcccccc}
\hline 
\multicolumn{1}{l}{Moments} &  & \multicolumn{2}{c}{Lasso degree 2 ($L$=$1e3$)} &  & \multicolumn{2}{c}{Lasso degree 3 ($L$=$1e3$)}\tabularnewline
 &  & Bias & RMSE &  & Bias & RMSE\tabularnewline
\hline 
1) $y_{i}y_{i-1}$ &  & -0.102 {\footnotesize (.003)} & 0.132 {\footnotesize (.002)} &  & -0.055 {\footnotesize (.003)} & 0.114 {\footnotesize (.002)}\tabularnewline
2) $y_{i}y_{i-1}$, $y_{i}^{2}$ &  & -0.067 {\footnotesize (.002)} & 0.101 {\footnotesize (.002)} &  & -0.046 {\footnotesize (.003)} & 0.100 {\footnotesize (.002)}\tabularnewline
3) $y_{i}y_{i-k},k=1,2,3$ &  & -0.066 {\footnotesize (.003)} & 0.114 {\footnotesize (.002)} &  & -0.044 {\footnotesize (.003)} & 0.106 {\footnotesize (.002)}\tabularnewline
4) $y_{i}y_{i-k},k=1,2,...,10$ &  & -0.055 {\footnotesize (.003)} & 0.112{\footnotesize{} (.002)} &  & -0.040 {\footnotesize (.003)} & 0.107{\footnotesize{} (.002)}\tabularnewline
5) $y_{i}y_{i-1}$, $y_{i}^{2}y_{i-1}$, $y_{i}y_{i-1}^{2}$ &  & -0.095 {\footnotesize (.003)} & 0.130 {\footnotesize (.002)} &  & -0.048 {\footnotesize (.003)} & 0.111 {\footnotesize (.002)}\tabularnewline
6) $y_{i}y_{i-k}$, $y_{i}^{2}y_{i-k}$, $y_{i}y_{i-k}^{2}$, $k=1,2,3$ &  & -0.065 {\footnotesize (.003)} & 0.115 {\footnotesize (.002)} &  & -0.043 {\footnotesize (.003)} & 0.108 {\footnotesize (.002)}\tabularnewline
\hline 
\end{tabular}\medskip{}
\par\end{centering}
{\footnotesize Notes: This table replicates Table \ref{tab:AR1-moments},
except that polynomials replace the neural net in NNE. Results are
based on 1000 Monte Carlo datasets. Numbers in parentheses are standard
errors.}{\footnotesize\par}
\end{table}

\subsection{Indirect inference\label{subsec:indirect-infer}}

A graphical illustration as in Figure \ref{fig:AR1-graph} can be
made for indirect inference. Indirect inference extends SMM by allowing
one to use the parameter of an auxiliary econometric model as moments.
We note AR(1) is a simple model that does not require indirect inference
to estimate. Nevertheless, the setting works well for illustrating
the conceptual difference between indirect inference and NNE. 

The auxiliary model is typically mis-specified. We use MA(1) as the
auxiliary model: $y_{i}=\varepsilon_{i}+\alpha\varepsilon_{i-1}$
with $\varepsilon_{i}\sim{\cal N}(0,1)$. There are two common estimators
for MA(1). The first estimator uses the auto-covariance in MA(1):
$\mathbb{E}(y_{i}y_{i-1})=\alpha$, which suggests estimating $\alpha$
with $\widehat{\alpha}^{AC}=\frac{1}{n-1}\sum_{i=2}^{n}y_{i}y_{i-1}$.
The superscript ``AC'' signifies auto-covariance. Note that $\widehat{\alpha}^{AC}$
coincides with the moment used by SMM in Section \ref{subsec:AR1-setup}.
Thus, the consequent indirect inference estimator for $\beta$ is
effectively the same as the SMM illustrated in Figure \ref{fig:AR1-graph}(a)
in Section \ref{sec:AR1}.

\begin{figure}
\begin{centering}
\includegraphics[scale=0.58]{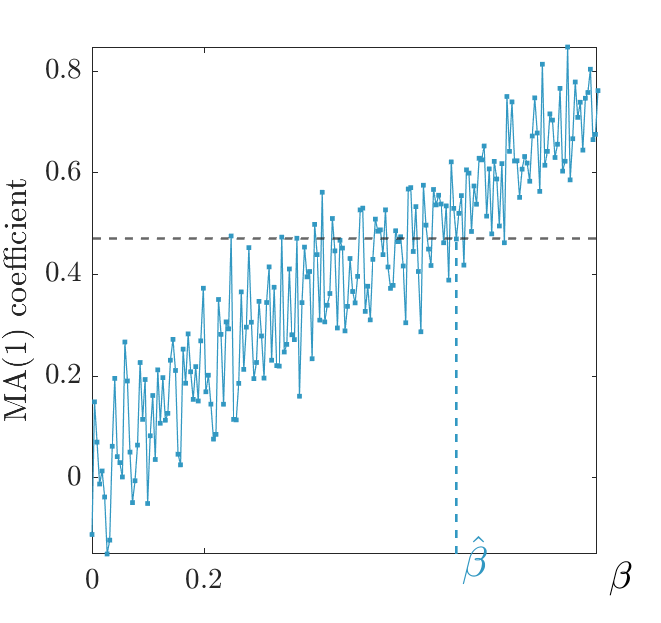}
\par\end{centering}
{\footnotesize Notes: Jagged blue curve shows the least-square estimate
of the MA(1) coefficient as a function of $\beta$.}{\footnotesize\par}

\medskip{}

\caption{\label{fig:indirect_infer}Indirect Inference for AR(1) }
\end{figure}

The second estimator for MA(1) uses least squares. It typically specifies
$\varepsilon_{0}=0$ and then uses the MA(1) to compute $\varepsilon_{1}=y_{1}$,
$\varepsilon_{2}=y_{2}-\alpha\varepsilon_{1}$, $\varepsilon_{3}=y_{3}-\alpha\varepsilon_{2}$,
... One estimates $\alpha$ with $\widehat{\alpha}^{LS}=\text{argmin}_{\alpha}\sum_{i=1}^{n}\varepsilon_{i}^{2}$.
The consequent indirect inference estimator is illustrated in Figure
\ref{fig:indirect_infer}. The jagged curve shows, for each $\beta$,
the value of $\widehat{\alpha}^{LS}$ estimated from a dataset simulated
with AR(1) under that $\beta$. Suppose that the horizontal dashed
line shows the value of $\widehat{\alpha}^{LS}$ from real data. Then,
indirect inference estimates $\beta$ by the point on the jagged curve
closest to the horizontal dashed line, marked as $\widehat{\beta}$
on the horizontal axis. 

The discussion above shows a conceptual similarity between indirect
inference and SMM that they both attempt approximations \textit{point-by-point}
over $\beta$. In contrast, NNE attempts to learn a \textit{functional}
approximation (discussed in Section \ref{sec:AR1}).

\newpage

\section{Proofs\label{subsec:proofs}}

We first set up the preliminaries for proving Proposition \ref{prop: mean}-\ref{prop: general}.
Our asymptotics ($L\rightarrow\infty$) are stated for any fixed data
size $n$. Conditional on the observed $\boldsymbol{x}$, let $P$
denote the probability distribution for $(\boldsymbol{\theta},\boldsymbol{m})$,
where $\boldsymbol{\theta}\sim{\cal U}(\Theta)$ and $\boldsymbol{m}|\boldsymbol{\theta}$
follows the distribution implied by the econometric model $\boldsymbol{q}$
and the specification of $\boldsymbol{m}$ as a function of $\{\boldsymbol{y},\boldsymbol{x}\}$.
Then, the training examples $\{\boldsymbol{\theta}^{(\ell)},\boldsymbol{m}^{(\ell)}\}_{\ell=1}^{L}$
are i.i.d. samples from $P$.

We now work towards a lemma that states the general conditions for
a sequence of neural networks to converge to a general target function.
This lemma forms the basis of all our proofs. We first define the
function space where the target function resides. Let ${\cal M}$
denote the support for $\boldsymbol{m}$ under $P$. Let ${\cal F}$
denote the function space ${\cal F}\equiv\{\boldsymbol{f}:{\cal M}\rightarrow\Delta,\boldsymbol{f}\text{ is continuous}\}$,
where $\Delta$ is some subset of a Euclidean space. Let $\lVert\cdot\rVert$
denote the 2-norm, that is, $\lVert\boldsymbol{f}\rVert^{2}=\int\lVert\boldsymbol{f}(\boldsymbol{m})\rVert^{2}dP(\boldsymbol{m})$.
All our propositions share the common goal of learning some target
function $\boldsymbol{f}^{*}\in{\cal F}$ that maps ${\cal M}$ to
the interior of $\Delta$. For example, $\boldsymbol{f}^{*}$ is $\mathbb{E}(\boldsymbol{\theta}|\boldsymbol{m})$
in Proposition \ref{prop: mean}.

Next, we need to define the sequence of trained neural networks that
we want to converge to the target function as $L\rightarrow\infty$.
A proper definition of a trained neural network requires two elements,
the loss function and the class of neural nets within which we minimize
the loss function. For our purposes, it is sufficiently general to
consider loss function constructed from an individual loss function
$h:\Theta\times\Delta\rightarrow\mathbb{R}$:
\[
C(\boldsymbol{f})=L^{-1}\sum_{\ell=1}^{L}h\left[\boldsymbol{\theta}^{(\ell)},\boldsymbol{f}(\boldsymbol{m}^{(\ell)})\right].
\]
We are interested in $\widehat{\boldsymbol{f}}_{L}$ such that $C(\widehat{\boldsymbol{f}}_{L})-\mathrm{inf}_{\boldsymbol{f}\in{\cal F}_{L}}C(\boldsymbol{f})$
is a term that converges to 0 in probability as $L\rightarrow\infty$.
If the infimum can be attained, then we can simply write $\widehat{\boldsymbol{f}}_{L}\in\mathrm{argmin}_{\boldsymbol{f}\in{\cal F}_{L}}C(\boldsymbol{f})$.
Here, ${\cal F}_{L}$ is the class of neural nets in which we minimize
the loss function.

We focus on the sequence ${\cal F}_{L}$ constructed from shallow
neural nets, which are sufficient for our purpose. Asymptotic properties
of shallow neural nets have been well studied in econometric literature
(deep neural nets are very recently studied by \citealt{farrel2019convergence}).
We adopt the shallow neural nets used in \citet{white1990connectionist}.
Let $\psi:\mathbb{R}\rightarrow\mathbb{R}$ be any sigmoid function
such as the logistic function. Let $v$ denote the dimension of $\boldsymbol{m}$.
Let
\begin{align}
{\cal F}(u,b) & \equiv\left\{ \boldsymbol{f}\in{\cal F}:f_{k}(\boldsymbol{m})=w_{1k0}+\sum_{j=1}^{u}w_{1kj}\cdot\psi\Big(w_{0j0}+\sum_{i=1}^{v}w_{0ji}\cdot m_{i}\Big),\right.\nonumber \\
 & \left.\hspace{10em}\sum_{j=0}^{u}|w_{1kj}|\leq b,\;\sum_{j=1}^{u}\sum_{i=0}^{v}|w_{0ji}|\leq u\cdot b\right\} .\label{eq:F(q,B)}
\end{align}
In words, ${\cal F}(u,b)$ denotes the single-hidden-layer neural
networks with $u$ hidden units and weights $\boldsymbol{w}$ bounded
in a way by $b$. A finite $b$ makes ${\cal F}(u,b)$ a compact space,
which allows the infimum of the loss function to be attainable. We
relax this bound as $L$ increases. Specifically, our lemma holds
for any sequence $\{{\cal F}_{L}\}_{L=1}^{\infty}$ where ${\cal F}_{L}={\cal F}(u_{L},b_{L})$
with $u_{L}$ and $b_{L}$ that grow sufficiently slow. By \citet{white1989learning},
a permissible choice is $u_{L}\propto\sqrt{L}$ and $b_{L}\propto\log(L)$.

We are in a position to state the lemma that links $\widehat{\boldsymbol{f}}_{L}$
to $\boldsymbol{f}^{*}$.

\medskip{}

\begin{lem}
\label{lem:sieve}Let ${\cal M}$ and $\Theta$ be compact and let
$\Delta$ be compact with a non-empty interior. In addition, assume
the following:\vspace{0.5em}

(i) For any $\epsilon>0$, $\mathbb{E}\left[h(\boldsymbol{\theta},\boldsymbol{f}^{*}(\boldsymbol{m}))\right]<\inf_{\boldsymbol{f}\in{\cal F}:\lVert\boldsymbol{f}-\boldsymbol{f}^{*}\rVert\geq\epsilon}\mathbb{E}\left[h(\boldsymbol{\theta},\boldsymbol{f}(\boldsymbol{m}))\right]$.

(ii) $h$ is continuously differentiable over $\Theta\times\Delta$.

\vspace{0.5em}Then, we have $\lVert\widehat{\boldsymbol{f}}_{L}-\boldsymbol{f}^{*}\rVert\rightarrow0$
in probability. $\square$
\end{lem}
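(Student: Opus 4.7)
The plan is to apply the standard consistency argument for sieve extremum estimators (Chen, 2007). Let $Q(\boldsymbol{f}) \equiv \mathbb{E}[h(\boldsymbol{\theta},\boldsymbol{f}(\boldsymbol{m}))]$ be the population criterion and $Q_L(\boldsymbol{f}) \equiv L^{-1}\sum_{\ell=1}^{L} h[\boldsymbol{\theta}^{(\ell)},\boldsymbol{f}(\boldsymbol{m}^{(\ell)})]$ its sample analog. Condition (i) is a well-separated identification condition on $\boldsymbol{f}^*$, and the regularity in condition (ii) together with the compactness of $\Theta$ and $\Delta$ makes $h$ uniformly bounded and uniformly continuous on $\Theta\times\Delta$. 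The argument proceeds in three steps: approximation, uniform convergence, and identification.

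First I would verify the \emph{approximation} step. By the universal approximation theorem for single-hidden-layer sigmoid networks (Hornik, Stinchcombe, and White), for any continuous $\boldsymbol{f}^* : {\cal M} \to \mathrm{int}(\Delta)$ on compact ${\cal M}$ there exists a sequence $\boldsymbol{\pi}_L \boldsymbol{f}^* \in {\cal F}(u_L,b_L)$ with $\sup_{\boldsymbol{m}\in{\cal M}}\lVert\boldsymbol{\pi}_L\boldsymbol{f}^*(\boldsymbol{m})-\boldsymbol{f}^*(\boldsymbol{m})\rVert \to 0$, provided $u_L$ and $b_L$ grow without bound; the stated rates $u_L \propto \sqrt{L}$ and $b_L \propto \log L$ are admissible. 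Because $\boldsymbol{f}^*$ maps into the interior of $\Delta$, eventually $\boldsymbol{\pi}_L\boldsymbol{f}^*$ also maps into $\Delta$, so it is a legitimate competitor to $\widehat{\boldsymbol{f}}_L$.

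Next I would establish the \emph{uniform convergence} $\sup_{\boldsymbol{f}\in{\cal F}_L}|Q_L(\boldsymbol{f})-Q(\boldsymbol{f})| \to 0$ in probability. Boundedness of $h$ on $\Theta\times\Delta$ combined with the fact that ${\cal F}(u_L,b_L)$ has weights bounded by $b_L$ yields covering/bracketing numbers for the induced class $\{h(\cdot,\boldsymbol{f}(\cdot)) : \boldsymbol{f}\in{\cal F}_L\}$ that grow at most polynomially in $u_L$ and $b_L$. The chosen rates $u_L \propto \sqrt{L}$, $b_L\propto\log L$ keep this complexity small relative to $L$, so a standard sieve uniform law of large numbers (e.g., Chen, 2007, Lemma~3.1; White, 1990) applies. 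Combining the two steps with the near-minimizing property of $\widehat{\boldsymbol{f}}_L$, we obtain $Q(\widehat{\boldsymbol{f}}_L) \leq Q_L(\widehat{\boldsymbol{f}}_L)+o_p(1) \leq Q_L(\boldsymbol{\pi}_L\boldsymbol{f}^*)+o_p(1) \leq Q(\boldsymbol{\pi}_L\boldsymbol{f}^*)+o_p(1) = Q(\boldsymbol{f}^*)+o_p(1)$, where the last equality uses approximation together with uniform continuity of $h$.

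Finally I would invoke condition (i) to upgrade this to $\lVert\widehat{\boldsymbol{f}}_L-\boldsymbol{f}^*\rVert\to 0$ in probability. If not, there would be $\epsilon>0$ such that $\mathbb{P}(\lVert\widehat{\boldsymbol{f}}_L-\boldsymbol{f}^*\rVert\geq\epsilon)$ stays bounded below along a subsequence; on that event $Q(\widehat{\boldsymbol{f}}_L)\geq \inf_{\lVert\boldsymbol{f}-\boldsymbol{f}^*\rVert\geq\epsilon}Q(\boldsymbol{f}) > Q(\boldsymbol{f}^*)$ by (i), contradicting the previous display. The main obstacle is the uniform law in Step~2: controlling the bracketing/covering entropy of a sieve whose dimension and weight bounds diverge simultaneously. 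The weight cap $b_L\propto\log L$ is precisely what tames this complexity, and the relevant entropy bounds for shallow sigmoid networks are available from the cited literature, so the work reduces to checking that the hypotheses of an off-the-shelf sieve ULLN apply in this setting.
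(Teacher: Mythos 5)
Your proposal is correct and follows essentially the same route as the paper: both verify the standard sieve extremum consistency conditions (dense approximation of $\boldsymbol{f}^{*}$ by the growing network classes via Hornik--Stinchcombe--White, a sieve uniform law of large numbers controlled by the weight and node bounds as in Chen (2007), and the well-separated identification condition (i)), then chain the resulting inequalities to conclude. The only differences are presentational — your $o_{p}(1)$ chain versus the paper's $\delta/3$ bookkeeping, and the paper additionally records compactness of each ${\cal F}_{L}$ via Arzel\`a--Ascoli so the minimizer is attained, which your near-minimizer formulation sidesteps.
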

\begin{proof}
of \textbf{Lemma} \ref{lem:sieve}: We prove the lemma using \citet{chen2007sieve}.
Specifically, we need to show the following conditions are satisfied.
Recall that $\lVert\cdot\rVert$, when applied to a function, denotes
the 2-norm on ${\cal F}$.\medskip{}

1. $\mathbb{E}\left[h(\boldsymbol{\theta},\boldsymbol{f}^{*}(\boldsymbol{m}))\right]<\inf_{\boldsymbol{f}\in{\cal F}:\lVert\boldsymbol{f}-\boldsymbol{f}^{*}\rVert\geq\epsilon}\mathbb{E}\left[h(\boldsymbol{\theta},\boldsymbol{f}(\boldsymbol{m}))\right]$
for any $\epsilon>0$.

2. For any $L<L'$, we have ${\cal F}_{L}\subseteq{\cal F}_{L'}\subseteq{\cal F}$,
and there exists a sequence of functions $\boldsymbol{f}_{L}\in{\cal F}_{L}$
such that $\lVert\boldsymbol{f}_{L}-\boldsymbol{f}^{*}\rVert\rightarrow0$.

3. $\mathbb{E}\left[h(\boldsymbol{\theta},\boldsymbol{f}(\boldsymbol{m}))\right]$
is continuous in $\boldsymbol{f}$ under the norm $\lVert\cdot\rVert$
on ${\cal F}$.

4. Each ${\cal F}_{L}$ is compact under $\lVert\cdot\rVert$.

5. $\sup_{\boldsymbol{f}\in{\cal F}_{L}}|C(\boldsymbol{f})-\mathbb{E}\left[h(\boldsymbol{\theta},\boldsymbol{f}(\boldsymbol{m}))\right]|\rightarrow0$
in probability as $L\rightarrow\infty$.\medskip{}

To see these conditions indeed give $\lVert\widehat{\boldsymbol{f}}_{L}-\boldsymbol{f}^{*}\rVert\rightarrow0$
in probability, we need to make a series of arguments. To ease notations,
let $Q(\boldsymbol{f})$ denote the population loss function $\mathbb{E}\left[h(\boldsymbol{\theta},\boldsymbol{f}(\boldsymbol{m}))\right]$.
Given any $\epsilon>0$, take $\delta\equiv\inf_{\boldsymbol{f}\in{\cal F}:\lVert\boldsymbol{f}-\boldsymbol{f}^{*}\rVert\geq\epsilon}Q(\boldsymbol{f})-Q(\boldsymbol{f}^{*})$.
By condition 1, we have $\delta>0$. By condition 2, we can find a
sequence $\boldsymbol{f}_{L}\in{\cal F}_{L}$ that converges to $\boldsymbol{f}^{*}$
in $\lVert\cdot\rVert$. Given the definition $\widehat{\boldsymbol{f}}_{L}\in\text{argmin}_{\boldsymbol{f}\in{\cal F}_{L}}C(\boldsymbol{f})$,
we must have $C(\widehat{\boldsymbol{f}}_{L})\leq C(\boldsymbol{f}_{L})$.
By condition 5, we have $\Pr[|C(\widehat{\boldsymbol{f}}_{L})-Q(\widehat{\boldsymbol{f}}_{L})|<\delta/3]\rightarrow1$
and $\Pr\left[|C(\boldsymbol{f}_{L})-Q(\boldsymbol{f}_{L})|<\delta/3\right]\rightarrow1$
as $L\rightarrow\infty$. Therefore, $\Pr[Q(\widehat{\boldsymbol{f}}_{L})<Q(\boldsymbol{f}_{L})+2\delta/3]\rightarrow1$.
By condition 1, 3, and the construction of $\boldsymbol{f}_{L}$,
for any sufficiently large $L$ we have $Q(\boldsymbol{f}_{L})\leq Q(\boldsymbol{f}^{*})+\delta/3$.
Thus, $\Pr[Q(\widehat{\boldsymbol{f}}_{L})<Q(\boldsymbol{f}^{*})+\delta]\rightarrow1$
for any $\delta>0$. Using the definition of $\delta$, we get $\Pr[Q(\widehat{\boldsymbol{f}}_{L})<\inf_{\boldsymbol{f}\in{\cal F}:\lVert\boldsymbol{f}-\boldsymbol{f}^{*}\rVert\geq\epsilon}Q(\boldsymbol{f})]\rightarrow1$.
With $\widehat{\boldsymbol{f}}_{L}\in{\cal F}_{L}\subseteq{\cal F}$,
the event of $Q(\widehat{\boldsymbol{f}}_{L})<\inf_{\boldsymbol{f}\in{\cal F}:\lVert\boldsymbol{f}-\boldsymbol{f}^{*}\rVert\geq\epsilon}Q(\boldsymbol{f})$
implies $\lVert\widehat{\boldsymbol{f}}_{L}-\boldsymbol{f}^{*}\rVert<\epsilon$.
Thus, we have $\Pr[\lVert\widehat{\boldsymbol{f}}_{L}-\boldsymbol{f}^{*}\rVert<\epsilon]\rightarrow1$.

\medskip{}

We now check these five conditions are satisfied. Condition 1 is directly
provided by assumption (i). It basically requires the population loss
$\mathbb{E}\left[h(\boldsymbol{\theta},\boldsymbol{f}(\boldsymbol{m}))\right]$
to have a unique minimum at $\boldsymbol{f}^{*}$, and it is not approachable
elsewhere.

Condition 2 is shown by \citet{hornik1989approximation} and \citet{white1990connectionist}.
In general, neural nets are dense sieves for the space of continuous
functions.

Condition 3 needs the population loss function to be continuous in
$\boldsymbol{f}$. Our assumption (ii) says $h$ is Lipschitz continuous
(because it is continuously differentiable over a compact set). Therefore,
there exist some $c>0$ such that $\left|h(\boldsymbol{\theta},\boldsymbol{f}(\boldsymbol{m}))-h(\boldsymbol{\theta},\boldsymbol{f}'(\boldsymbol{m}))\right|\leq c\lVert\boldsymbol{f}(\boldsymbol{m})-\boldsymbol{f}'(\boldsymbol{m})\rVert$
for any $(\boldsymbol{\theta},\boldsymbol{m})$. Thus, $\left|\mathbb{E}h(\boldsymbol{\theta},\boldsymbol{f}(\boldsymbol{m}))-\mathbb{E}h(\boldsymbol{\theta},\boldsymbol{f}'(\boldsymbol{m}))\right|\leq c\mathbb{E}\lVert\boldsymbol{f}(\boldsymbol{m})-\boldsymbol{f}'(\boldsymbol{m})\rVert\leq c\sqrt{\mathbb{E}\lVert\boldsymbol{f}(\boldsymbol{m})-\boldsymbol{f}'(\boldsymbol{m})\rVert^{2}}=c\lVert\boldsymbol{f}-\boldsymbol{f}'\rVert$
(the second step uses Jensen's inequality). Therefore, the population
loss is continuous in $\boldsymbol{f}$.

For condition 4, we note that the derivatives of the functions in
${\cal F}_{L}$ are bounded, and thus are Lipschitz continuous with
a common Lipschitz constant. By the Arzela-Ascoli theorem, any function
sequence in ${\cal F}_{L}$ must have a convergent subsequence. Thus,
${\cal F}_{L}$ is compact.

Condition 5 is a high-level condition. It can be implied by the lower-level
condition 3.5M in \citet{chen2007sieve}. We note that condition 5
does not inherit the specific metric between functions used in condition
3.5M. We will use the sup norm $\lVert\cdot\rVert_{\infty}$ as this
metric when applying condition 3.5M.

Condition 3.5M(i) requires i.i.d. samples for the computation of $C$.
This requirement is satisfied by how we construct the training examples
in NNE. In addition, the condition requires $\mathbb{E}\left[\mathrm{sup}_{\boldsymbol{f}\in{\cal F}_{L}}|h(\boldsymbol{\theta},\boldsymbol{f}(\boldsymbol{m}))|\right]<\infty$
for all $L$. This requirement is provided for by our assumption (ii),
which says that $h$ is continuous and thus bounded over the compact
$\Theta\times\Delta$.

Condition 3.5M(ii) is satisfied if there is a constant $c>0$ such
that for any $\boldsymbol{\theta}$ and $\boldsymbol{m}$, we have
$|h(\boldsymbol{\theta},\boldsymbol{f}(\boldsymbol{m}))-h(\boldsymbol{\theta},\boldsymbol{f}'(\boldsymbol{m}))|\leq c\lVert\boldsymbol{f}-\boldsymbol{f}'\rVert_{\infty}$.
To get this result, note the Euclidean distance between $\boldsymbol{f}(\boldsymbol{m})$
and $\boldsymbol{f}'(\boldsymbol{m})$ is bounded by $\sqrt{\dim(\Delta)}\times\lVert\boldsymbol{f}-\boldsymbol{f}'\rVert_{\infty}$.
In addition, assumption (ii) says $h$ is Lipschitz continuous. Thus,
we only need to take $c$ as the product of $\sqrt{\dim(\Delta)}$
and the Lipschitz constant of $h$.

Condition 3.5M(iii) says the number of balls required to cover ${\cal F}_{L}$
cannot grow too fast. This is satisfied with any sufficiently slow
rates of $u_{L}$ and $b_{L}$.

We have checked all the conditions for $\lVert\widehat{\boldsymbol{f}}_{L}-\boldsymbol{f}^{*}\rVert$
to converge to zero in probability.
\end{proof}
\medskip{}

\begin{proof}
of \textbf{Proposition} \ref{prop: mean}: We apply Lemma \ref{lem:sieve},
with the target function as $\boldsymbol{f}^{*}=\mathbb{E}(\boldsymbol{\theta}|\boldsymbol{m})$.
We can take $\Delta$ as any compact set that contains $\mathrm{conv}(\Theta)$
in its interior, which is always possible given that $\Theta$ is
compact. The reason for taking the convex hull is to accommodate cases
where some dimensions of $\Theta$ are discrete so that $\mathbb{E}(\boldsymbol{\theta}|\boldsymbol{m})$
may take values outside $\Theta$. One simplest choice of $\Delta$
is $[\underline{\theta},\overline{\theta}]^{p}$, where $p$ denotes
the dimension of $\boldsymbol{\theta}$ and $\underline{\theta}<\theta_{k}<\overline{\theta}$
for all $\boldsymbol{\theta}\in\Theta$ and all $k=1,...,p$.

Assumption (ii) of Lemma \ref{lem:sieve} is immediately provided
by the square-error form of $h$, that is, $h(\boldsymbol{\theta},\boldsymbol{f})=\sum_{k}(\theta_{k}-f_{k})^{2}$.

As to assumption (i), we need to show $\boldsymbol{f}^{*}$ minimizes
the population loss in ${\cal F}$ and further, $\lVert\boldsymbol{f}-\boldsymbol{f}^{*}\rVert\geq\epsilon$
can bound the population loss's value at $\boldsymbol{f}$ away from
this minimum. It suffices to consider the case where $\boldsymbol{\theta}$
is single-dimensional, because the population loss simply sums across
each dimension. Note
\[
\mathbb{E}(\theta-f(\boldsymbol{m}))^{2}=\int\mathbb{E}\left[(\theta-f(\boldsymbol{m}))^{2}|\boldsymbol{m}\right]dP(\boldsymbol{m}).
\]
Take any $f\neq f^{*}$ . We have
\begin{align*}
\mathbb{E}\left[(\theta-f(\boldsymbol{m}))^{2}|\boldsymbol{m}\right] & -\mathbb{E}\left[(\theta-f^{*}(\boldsymbol{m}))^{2}|\boldsymbol{m}\right]\\
= & \mathbb{E}\left[-2\theta f(\boldsymbol{m})+2\theta f^{*}(\boldsymbol{m})+f(\boldsymbol{m})^{2}-f^{*}(\boldsymbol{m})^{2}|\boldsymbol{m}\right]\\
= & -2f^{*}(\boldsymbol{m})f(\boldsymbol{m})+2f^{*}(\boldsymbol{m})^{2}+f(\boldsymbol{m})^{2}-f^{*}(\boldsymbol{m})^{2}\\
= & \left[f(\boldsymbol{m})-f^{*}(\boldsymbol{m})\right]^{2}.
\end{align*}
The second equality uses the fact $f^{*}=\mathbb{E}(\theta|\boldsymbol{m})$.
As a result,
\[
\mathbb{E}(\theta-f(\boldsymbol{m}))^{2}-\mathbb{E}(\theta-f^{*}(\boldsymbol{m}))^{2}=\lVert f-f^{*}\rVert^{2}.
\]
Thus, the difference between the population loss at $f$ and the the
population loss at $f^{*}$ is exactly $\lVert f-f^{*}\rVert^{2}$.
In particular, $\inf_{f\in{\cal F}:\lVert f-f^{*}\rVert\geq\epsilon}\mathbb{E}\left[h(\theta,f(\boldsymbol{m}))\right]=\mathbb{E}\left[h(\theta,f^{*}(\boldsymbol{m}))\right]+\epsilon^{2}$.
Thus, assumption (i) is satisfied.
\end{proof}
\medskip{}

\begin{proof}
of \textbf{Proposition} \ref{prop: cov} - part (i): We apply Lemma
\ref{lem:sieve}, with $\boldsymbol{f}^{*}=\left[\mathbb{E}(\boldsymbol{\theta}|\boldsymbol{m}),\mathbb{V}\mathrm{ar}(\boldsymbol{\theta}|\boldsymbol{m})\right]$.
Again let $p$ be the dimension of $\boldsymbol{\theta}$. Because
$\mathbb{V}\mathrm{ar}(\boldsymbol{\theta}|\boldsymbol{m})$ is assumed
to be bounded away from zero and $\Theta$ is bounded, we may choose
$\underline{v},\overline{v}>0$ such that $\underline{v}<\mathbb{V}\mathrm{ar}(\theta_{k}|\boldsymbol{m})<\overline{v}$
for all $k$. Let $\underline{\theta}$ and $\overline{\theta}$ be
defined as in the proof of Proposition \ref{prop: mean}, we may choose
$\Delta$ to be $[\underline{\theta},\overline{\theta}]^{p}\times[\underline{v},\overline{v}]^{p}$.

The individual loss function is $h(\boldsymbol{\theta},\boldsymbol{f})=\sum_{k}\log(V_{k})+V_{k}^{-1}(\theta_{k}-\mu_{k})^{2}$,
where $\{\mu_{k},V_{k}\}_{k=1}^{p}$ are collected in $\boldsymbol{f}$.
Assumption (ii) of Lemma \ref{lem:sieve} is satisfied with this choice
of $h$.

As to assumption (i), we use a similar argument as in the proof of
Proposition \ref{prop: mean}. As before, we consider the case where
$p=1$, so that we may write $\boldsymbol{f}\equiv(\mu,V)$. The cases
with $p>1$ follow because the population loss function again simply
sums across each dimension $k$. Fix a small $\epsilon>0$. Our argument
starts by noting
\[
\mathbb{E}h(\theta,\boldsymbol{f}(\boldsymbol{m}))=\int\mathbb{E}\left[h(\theta,\boldsymbol{f}(\boldsymbol{m}))|\boldsymbol{m}\right]dP(\boldsymbol{m}).
\]
We first examine the part inside the integral. We have
\begin{align}
 & \mathbb{E}\left[h(\theta,\boldsymbol{f}(\boldsymbol{m}))|\boldsymbol{m}\right]-\mathbb{E}\left[h(\theta,\boldsymbol{f}^{*}(\boldsymbol{m}))|\boldsymbol{m}\right]\nonumber \\
 & \qquad=\mathbb{E}\left[\log(V(\boldsymbol{m}))+\frac{(\theta-\mu(\boldsymbol{m}))^{2}}{V(\boldsymbol{m})}-\log(V^{*}(\boldsymbol{m}))-\frac{(\theta-\mu^{*}(\boldsymbol{m}))^{2}}{V^{*}(\boldsymbol{m})}\:\Big|\boldsymbol{m}\right]\nonumber \\
 & \qquad=\frac{V^{*}(\boldsymbol{m})}{V(\boldsymbol{m})}-\log\left[\frac{V^{*}(\boldsymbol{m})}{V(\boldsymbol{m})}\right]-1+\frac{1}{V(\boldsymbol{m})}\cdot(\mu^{*}(\boldsymbol{m})-\mu(\boldsymbol{m}))^{2}.\label{eq:proof.1}
\end{align}
The last equality uses the definitions $\mu^{*}(\boldsymbol{m})=\mathbb{E}(\theta|\boldsymbol{m})$
and $V^{*}(\boldsymbol{m})=\mathbb{V}\mathrm{ar}(\theta|\boldsymbol{m})$.
We want to show that the integral of (\ref{eq:proof.1}) is bounded
away from zero if $\lVert\boldsymbol{f}-\boldsymbol{f}^{*}\rVert\geq\epsilon$.
We do this in two steps.

First, let $Q:\Delta^{2}\rightarrow\mathbb{R}$ be defined as $Q(\boldsymbol{t},\boldsymbol{t}')=t_{2}'/t_{2}-\log(t_{2}'/t_{2})-1+(1/t_{2})\cdot(t_{1}'-t_{1})^{2}$.
Then (\ref{eq:proof.1}) equals $Q(\boldsymbol{f}(\boldsymbol{m}),\boldsymbol{f}^{*}(\boldsymbol{m}))$.
Define $d(\boldsymbol{m})\equiv\inf_{\boldsymbol{t}\in\Delta:\lVert\boldsymbol{t}-\boldsymbol{f}^{*}(\boldsymbol{m})\rVert\geq\epsilon/2}Q(\boldsymbol{t},\boldsymbol{f}^{*}(\boldsymbol{m}))$.
Note $d$ is positive because $Q$ reaches its minimum zero only when
$\boldsymbol{t}=\boldsymbol{t}'$. In addition, by Berge's theorem,
$d$ is a continuous function. This, together with the compactness
of ${\cal M}$, gives $\delta\equiv\inf_{\boldsymbol{m}\in{\cal M}}d(\boldsymbol{m})>0$.
We have the result that for any $\boldsymbol{f}$ and $\boldsymbol{m}$,
$\lVert\boldsymbol{f}(\boldsymbol{m})-\boldsymbol{f}^{*}(\boldsymbol{m})\rVert\geq\epsilon/2$
implies (\ref{eq:proof.1}) is no less than $\delta$.

Second, let $A\equiv\{\boldsymbol{m}\in{\cal M}:\lVert\boldsymbol{f}(\boldsymbol{m})-\boldsymbol{f}^{*}(\boldsymbol{m})\rVert\geq\epsilon/2\}$.
We have $\lVert\boldsymbol{f}-\boldsymbol{f}^{*}\rVert^{2}=\int\lVert\boldsymbol{f}(\boldsymbol{m})-\boldsymbol{f}^{*}(\boldsymbol{m})\rVert^{2}dP(\boldsymbol{m})\leq P(A)c^{2}+(1-P(A))\epsilon^{2}/4$,
where $c$ denotes an upper bound for $\lVert\boldsymbol{f}(\boldsymbol{m})-\boldsymbol{f}^{*}(\boldsymbol{m})\rVert$
implied by the compactness of $\Delta$. A result of this inequality
is that $P(A)$ cannot be too small if $\lVert\boldsymbol{f}-\boldsymbol{f}^{*}\rVert$
is not too small. More precisely, there exists some $\tau>0$ such
that for any $\boldsymbol{f}$, $\lVert\boldsymbol{f}-\boldsymbol{f}^{*}\rVert\geq\epsilon$
implies $P(A)\geq\tau$.

Now, pick any $\boldsymbol{f}$ with $\lVert\boldsymbol{f}-\boldsymbol{f}^{*}\rVert\geq\epsilon$,
we have
\begin{align*}
\mathbb{E}\left[h(\theta,\boldsymbol{f}(\boldsymbol{m}))\right] & -\mathbb{E}\left[h(\theta,\boldsymbol{f}^{*}(\boldsymbol{m}))\right]\\
 & =\int\left\{ \mathbb{E}\left[h(\theta,\boldsymbol{f}(\boldsymbol{m}))|\boldsymbol{m}\right]-\mathbb{E}\left[h(\theta,\boldsymbol{f}^{*}(\boldsymbol{m}))|\boldsymbol{m}\right]\right\} dP(\boldsymbol{m})\\
 & \geq\int_{A}\delta dP(\boldsymbol{m})\\
 & \geq\tau\delta.
\end{align*}
In words, the population loss function at $\boldsymbol{f}$ is at
least $\tau\delta>0$ larger than the population loss at $\boldsymbol{f}^{*}$,
where neither $\tau$ or $\delta$ depends on $\boldsymbol{f}$. Therefore,
assumption (i) is satisfied.

\bigskip{}

Part (ii): We again apply Lemma \ref{lem:sieve}. By the condition
of the proposition, we can find some $\underline{\lambda}>0$ such
that the smallest eigenvalue of $\mathbb{C}\mathrm{ov}(\boldsymbol{\theta}|\boldsymbol{m})$
is larger than $\underline{\lambda}$. Let $\nabla$ denote the set
of all possible values of covariance matrix $\boldsymbol{V}$ such
that $\underline{v}\leq\text{diag}(\boldsymbol{V})\leq\overline{v}$
and the smallest eigenvalue of $\boldsymbol{V}$ is no less than $\underline{\lambda}$.
Note $\nabla$ specified as such is a compact subset of the convex
cone for positive definite matrices. We can then take $\Delta=[\underline{\theta},\overline{\theta}]^{p}\times\nabla$.

The specification of $h(\boldsymbol{\theta},\boldsymbol{f})=\log(|\boldsymbol{V}|)+(\boldsymbol{\theta}-\boldsymbol{\mu})'\boldsymbol{V}^{-1}(\boldsymbol{\theta}-\boldsymbol{\mu})$,
where $\boldsymbol{f}=(\boldsymbol{\mu},\boldsymbol{V})$, satisfies
assumption (ii) in Lemma \ref{lem:sieve}. Note, in particular, that
matrix inverse is a continuously differentiable operation on positive
definite matrices.

The proof to show assumption (i) is satisfied can use a similar argument
as the proof for part (i). The target function $\boldsymbol{f}^{*}$
is $\boldsymbol{\mu}^{*}=\mathbb{E}(\boldsymbol{\theta}|\boldsymbol{m})$
and $\boldsymbol{V}^{*}=\mathbb{C}\mathrm{ov}(\boldsymbol{\theta}|\boldsymbol{m})$.
Again we note the population loss can be written as
\[
\mathbb{E}h(\boldsymbol{\theta},\boldsymbol{f}(\boldsymbol{m}))=\int\mathbb{E}\left[h(\boldsymbol{\theta},\boldsymbol{f}(\boldsymbol{m}))|\boldsymbol{m}\right]dP(\boldsymbol{m}),
\]
where the integrand satisfies, with $\boldsymbol{f}=(\boldsymbol{\mu},\boldsymbol{V})$,
\begin{align}
\mathbb{E}\left[h(\boldsymbol{\theta},\boldsymbol{f}(\boldsymbol{m}))\big|\boldsymbol{m}\right] & -\mathbb{E}\left[h(\boldsymbol{\theta},\boldsymbol{f}^{*}(\boldsymbol{m}))\big|\boldsymbol{m}\right]\nonumber \\
=\; & \mathbb{E}\left[\log(|\boldsymbol{V}(\boldsymbol{m})|)+(\boldsymbol{\theta}-\boldsymbol{\mu}(\boldsymbol{m}))'\boldsymbol{V}(\boldsymbol{m})^{-1}(\boldsymbol{\theta}-\boldsymbol{\mu}(\boldsymbol{m}))\:\big|\:\boldsymbol{m}\right]\nonumber \\
 & \quad-\mathbb{E}\left[\log(|\boldsymbol{V}^{*}(\boldsymbol{m})|)+(\boldsymbol{\theta}-\boldsymbol{\mu}^{*}(\boldsymbol{m}))'\boldsymbol{V}^{*}(\boldsymbol{m})^{-1}(\boldsymbol{\theta}-\boldsymbol{\mu}^{*}(\boldsymbol{m}))\:\big|\:\boldsymbol{m}\right].\label{eq:proof.2}
\end{align}
Because $\boldsymbol{\theta}$ enters the above expectations in a
quadratic way, (\ref{eq:proof.2}) depends on the first and second
moments, but not the higher moments or distributional form, of $P(\boldsymbol{\theta}|\boldsymbol{m})$.
These first and second moments are given by $\boldsymbol{f}^{*}(\boldsymbol{m})$.
Therefore, we can write a function $Q:\Delta^{2}\rightarrow\mathbb{R}$
like in the proof of part (i) such that $Q(\boldsymbol{f}(\boldsymbol{m}),\boldsymbol{f}^{*}(\boldsymbol{m}))$
equals (\ref{eq:proof.2}) for any $\boldsymbol{f}$ and $\boldsymbol{m}$.
This $Q$ does not depend on the distributional form of $P(\boldsymbol{\theta}|\boldsymbol{m})$,
and it is continuous over $\Delta^{2}$ (note both the matrix determinant
and matrix inverse are continuous operations).

In addition, $\min_{\boldsymbol{t}\in\Delta}Q(\boldsymbol{t},\boldsymbol{f}^{*}(\boldsymbol{m}))=0$
and is achieved at $\boldsymbol{t}=\boldsymbol{f}^{*}(\boldsymbol{m})$.
To see this, suppose $P(\boldsymbol{\theta}|\boldsymbol{m})$ is normal
for a moment. By (\ref{eq:proof.2}), we see $Q(\boldsymbol{t},\boldsymbol{f}^{*}(\boldsymbol{m}))$
is the Kullback--Leibler divergence from a normal density parameterized
by $\boldsymbol{t}\in\Delta$ to the normal $P(\boldsymbol{\theta}|\boldsymbol{m})$.
As a result, $Q(\boldsymbol{t},\boldsymbol{f}^{*}(\boldsymbol{m}))$
as a function of $\boldsymbol{t}$ is always positive except when
$\boldsymbol{t}$ takes the mean and covariance of $P(\boldsymbol{\theta}|\boldsymbol{m})$,
that is, $\boldsymbol{t}=\boldsymbol{f}^{*}(\boldsymbol{m})$. However,
because $Q$ does not depend on the distributional form of $P(\boldsymbol{\theta}|\boldsymbol{m})$,
this result holds for non-normal $P(\boldsymbol{\theta}|\boldsymbol{m})$
as well.

With these properties of $Q$ established, the rest of the argument
follows that in the proof for part (i). So we shall not repeat it
here.
\end{proof}
\medskip{}

\begin{proof}
of \textbf{Proposition} \ref{prop: general}: We apply Lemma \ref{lem:sieve},
with $\Delta=\Gamma$. With this choice of $\Delta$, the target $\boldsymbol{f}^{*}$
as defined in the proposition is a member of ${\cal F}$ iff it is
continuous. We will show $\boldsymbol{f}^{*}$ is continuous below.

Assumption (ii) of Lemma \ref{lem:sieve} is satisfied with $h(\boldsymbol{\theta},\boldsymbol{f}(\boldsymbol{m}))=-\log\phi(\boldsymbol{\theta};\boldsymbol{f}(\boldsymbol{m}))$
and condition (iii) of the proposition.

We proceed to verify assumption (i). To ease notation, let $Q(\boldsymbol{\gamma},\boldsymbol{m})\equiv\mathbb{E}\left[-\log\phi(\boldsymbol{\theta};\boldsymbol{\gamma})\big|\boldsymbol{m}\right]$.
By the definition of Kullback--Leibler divergence, we have
\[
{\cal KL}\left[P(\boldsymbol{\theta}|\boldsymbol{m})\:\lVert\:\phi(\boldsymbol{\theta};\boldsymbol{\gamma})\right]=Q(\boldsymbol{\gamma},\boldsymbol{m})-\mathbb{E}\left[-\log P(\boldsymbol{\theta}|\boldsymbol{m})\big|\boldsymbol{m}\right].
\]
Note the second term on the right side does not involve $\boldsymbol{\gamma}$.
Therefore, by condition (v), we know $\boldsymbol{f}^{*}(\boldsymbol{m})$
is equal to $\mathrm{argmin}_{\gamma\in\Gamma}Q(\boldsymbol{\gamma},\boldsymbol{m})$
for every $\boldsymbol{m}$. Now, with the continuity provided by
condition (iv), Berge's theorem says that $\boldsymbol{f}^{*}$ is
a continuous function.

Next, fix any small $\epsilon>0$. Consider the following function
$d$:
\[
d(\boldsymbol{m})\equiv\inf_{\boldsymbol{\gamma}\in\Gamma:\lVert\boldsymbol{\gamma}-\boldsymbol{f}^{*}(\boldsymbol{m})\rVert\geq\epsilon/2}Q(\boldsymbol{\gamma},\boldsymbol{m})-Q(\boldsymbol{f}^{*}(\boldsymbol{m}),\boldsymbol{m}).
\]
Again by Berge's theorem, we know $d$ is continuous. Because $\boldsymbol{f}^{*}(\boldsymbol{m})$
is the unique point in $\Gamma$ that minimizes $Q(\boldsymbol{\gamma},\boldsymbol{m})$,
we also have $d>0$. As a result, $\inf_{\boldsymbol{m}\in{\cal M}}d(\boldsymbol{m})$
is attainable and positive, which we denote as $\delta>0$.

Let $A\equiv\{\boldsymbol{m}\in{\cal M}:\lVert\boldsymbol{f}(\boldsymbol{m})-\boldsymbol{f}^{*}(\boldsymbol{m})\rVert\geq\epsilon/2\}$.
Using the same argument in the proof of Proposition \ref{prop: cov}
part (i), we can find some $\tau>0$ such that for any $\boldsymbol{f}$,
$\lVert\boldsymbol{f}-\boldsymbol{f}^{*}\rVert\geq\epsilon$ implies
$P(A)\geq\tau$.

Now for any $\boldsymbol{f}$ such that $\lVert\boldsymbol{f}-\boldsymbol{f}^{*}\rVert\geq\epsilon$,
we have
\begin{align*}
\mathbb{E}\left[h(\boldsymbol{\theta},\boldsymbol{f}(\boldsymbol{m}))\right] & -\mathbb{E}\left[h(\boldsymbol{\theta},\boldsymbol{f}^{*}(\boldsymbol{m}))\right]\\
 & =\int\Big\{ Q\left[\boldsymbol{f}(\boldsymbol{m}),\boldsymbol{m}\right]-Q\left[\boldsymbol{f}^{*}(\boldsymbol{m}),\boldsymbol{m}\right]\Big\} dP(\boldsymbol{m})\\
 & \geq\int_{A}\Big\{ Q\left[\boldsymbol{f}(\boldsymbol{m}),\boldsymbol{m}\right]-Q\left[\boldsymbol{f}^{*}(\boldsymbol{m}),\boldsymbol{m}\right]\Big\} dP(\boldsymbol{m})\\
 & \geq\int_{A}d(\boldsymbol{m})\cdot dP(\boldsymbol{m})\\
 & \geq\tau\delta.
\end{align*}
In words, with $\lVert\boldsymbol{f}-\boldsymbol{f}^{*}\rVert\geq\epsilon$,
the population loss function at $\boldsymbol{f}$ is at least $\tau\delta>0$
larger than the population loss at $\boldsymbol{f}^{*}$. Thus, assumption
(i) is satisfied.
\end{proof}
\medskip{}

\newpage
\
\newpage

\appendix
\newpage
{\small}

\section*{Funding and Competing Interests}

All authors certify that they have no affiliations with or involvement
in any organization or entity with any financial interest or non-financial
interest in the subject matter or materials discussed in this manuscript.
The authors have no funding to report.

\appendix
\newpage
\renewcommand*{\bibfont}{\small}

\bibliographystyle{econ}
\bibliography{reference}

\end{document}